\providecommand{\tabularnewline}{\\}
\newcommand{\captionfonts}{\footnotesize}\makeatletter%
\long\def\@makecaption#1#2{%
  \vskip\abovecaptionskip
  \sbox\@tempboxa{{\captionfonts #1: #2}}%
  \ifdim \wd\@tempboxa >\hsize
    {\captionfonts #1: #2\par}
  \else
    \hbox to\hsize{\hfil\box\@tempboxa\hfil}%
  \fi
  \vskip\belowcaptionskip}\makeatother%
\newtheorem{theorem}{Theorem}[section]
\newtheorem{lemma}{Lemma}[section]
\begin{document}

\title{\vspace{-.5in}Unbiased Instrumental Variables Estimation Under Known First-Stage Sign}

\author{Isaiah Andrews  \\
MIT
\thanks{email: iandrews@mit.edu}
\and
Timothy B. Armstrong  \\
Yale University
\thanks{email: timothy.armstrong@yale.edu.  We thank Gary Chamberlain, Jerry Hausman, Max Kasy, Frank Kleibergen, Anna Mikusheva, Daniel Pollmann,  Jim Stock, and audiences at the 2015 Econometric Society World Congress, the 2015 CEME conference on Inference in Nonstandard Problems, Boston College, Boston University, Chicago, Columbia, Harvard, Michigan State, MIT, Princeton, Stanford, U Penn, and Yale
for helpful comments, and Keisuke Hirano and Jack Porter for productive discussions.  Andrews gratefully acknowledges support from the Silverman (1968) Family Career Development Chair at MIT.
Armstrong gratefully acknowledges support from National Science Foundation Grant SES-1628939.}}
\maketitle
\vspace{-.25in}
\begin{abstract}
We derive mean-unbiased estimators for the structural parameter in instrumental variables models with a single endogenous regressor where the sign of one or more first stage coefficients is known.  In the case with a single instrument, there is a unique non-randomized unbiased estimator based on the reduced-form and first-stage regression estimates.  For cases with multiple instruments we propose a class of unbiased estimators and show that an estimator within this class is efficient when the instruments are strong.  We show numerically that unbiasedness does not come at a cost of increased dispersion in models with a single instrument: in this case the unbiased estimator is less dispersed than the 2SLS estimator.  Our finite-sample results apply to normal models with known variance for the reduced-form errors, and
imply analogous results under
weak instrument asymptotics with an unknown error distribution.
\end{abstract}

\section{Introduction}

Researchers often have strong prior beliefs about the sign of the first stage coefficient in instrumental variables models, to the point where the sign can reasonably be treated as known.
This paper shows that knowledge of the sign of the first stage coefficient allows us to construct an estimator for the coefficient on the endogenous regressor which is unbiased in finite samples when the reduced form errors are normal with known variance.
When the distribution of the reduced form errors is unknown, our results lead to estimators that are asymptotically unbiased under weak IV sequences as defined in \citet{staiger_instrumental_1997}.

As is well known, the conventional
two-stage least squares (2SLS) estimator may be severely biased in overidentified models with weak instruments.  Indeed the most common pretest for weak instruments, the %
\citet{staiger_instrumental_1997}
rule of thumb which declares the instruments weak when the first stage F statistic is less than 10, is shown in  \citet{StockYogo2005} to correspond to a test for the worst-case bias in 2SLS relative to OLS.  While the 2SLS estimator performs better in the just-identified case according to some measures of central tendency, in this case it has no first moment.\footnote{If we instead consider median bias, 2SLS exhibits median bias when the instruments are weak, though this bias decreases rapidly with the strength of the instruments.}  A number of papers have proposed alternative estimators to reduce particular measures of bias, e.g. \citet{AngristKrueger1995}, \citet{Imbensetal1999}, \citet{DonaldNewey2001}, \citet{AckerbergDevereux2009}, and \citet{Hardingetal2015}, but none of the resulting feasible estimators is unbiased either in finite samples or under weak instrument asymptotics.  Indeed,  \citet{HiranoPoter2015} show that mean, median, and quantile unbiased estimation are all impossible in the linear IV model with an unrestricted parameter space for the first stage.

We show that by exploiting information about the 
sign of the first stage we can circumvent this impossibility result and construct an unbiased estimator.
Moreover, the resulting estimators have a number of properties which make them appealing for applications.  In models with a single instrumental variable, which include many empirical applications, we show that there is a unique unbiased estimator based on the reduced-form and first-stage regression estimates.  Moreover, we show that this estimator is substantially
less dispersed that the usual 2SLS estimator in finite samples. Under standard (``strong instrument'') asymptotics, the unbiased estimator has the same asymptotic distribution as 2SLS, and so is asymptotically efficient in the usual sense.
In over-identified models many unbiased estimators exist, and we propose unbiased estimators which are asymptotically efficient when the instruments are strong.  Further, we show that in over-identified models we can construct unbiased estimators which are robust to small violations of the first stage sign restriction. We also derive a lower bound on the risk of unbiased estimators in finite samples, and show that this bound is attained in some models.

In contrast to much of the recent weak instruments literature, the focus of this paper is on estimation rather than hypothesis testing or confidence set construction.  Our approach is closely related to the classical theory of optimal point estimation (see e.g. \citet{LehmannCasella1998}) in that we seek estimators which perform well according to conventional estimation criteria (e.g. risk with respect to a convex loss function) within the class of unbiased estimators. As we note in Section \ref{sec: tests and CS} below it is straightforward to use results from the weak instruments literature to construct identification-robust tests and confidence sets based on our estimators.  As we also note in that section, however, optimal estimation and testing are distinct problems in models with weak instruments and it is not in general the case that optimal estimators correspond to optimal confidence sets or vice versa.  Given the important role played by both estimation and confidence set construction in empirical practice, our results therefore complement the literature on identification-robust testing.

The rest of this section discusses the assumption of known first stage sign, introduces the setting and notation, and briefly reviews the related literature.
Section \ref{single_instrument_sec} introduces the unbiased estimator for models with a single excluded instrument.
Section \ref{multi_instrument_sec} treats models with multiple instruments and introduces unbiased estimators which are robust to small violations of the first stage sign restriction.
Section \ref{sec: Simulation Results} presents simulation results on the performance of our unbiased estimators.
Section \ref{sec: empirical application} discusses illustrative applications using data from
\citet{Hornung2014} and \citet{angrist_does_1991}.
Proofs and auxiliary results are given in a separate appendix.\footnote{The appendix is available online at https://sites.google.com/site/isaiahandrews/working-papers}

\subsection{Knowledge of the First-Stage Sign}

The results in this paper rely on knowledge of the first stage sign.  This is reasonable in many economic contexts.
In their study of schooling and earnings, for instance, \citet{angrist_does_1991} note that compulsory schooling laws in the United States allow those born earlier in the year to drop out after completing fewer years of school than those born later in the year.
Arguing that quarter of birth can reasonably be excluded from a wage equation, they use this fact to motivate quarter of birth as an instrument for schooling.
In this context, a sign restriction on the first stage amounts to an assumption that the mechanism claimed by \citeauthor{angrist_does_1991} works in the expected direction: those born earlier in the year tend to drop out earlier.
More generally, empirical researchers often have some mechanism in mind for why a model is identified at all (i.e. why the first stage coefficient is nonzero) that leads to a known sign for the direction of this mechanism (i.e. the sign of the first stage coefficient).

In settings with heterogeneous treatment effects, a first stage monotonicity assumption is often used to interpret instrumental variables estimates (see \citealt{imbens_identification_1994}, \citealt{heckman_understanding_2006}).
In the language of \citet{imbens_identification_1994}, the monotonicity assumption requires that either the entire population affected by the treatment be composed of ``compliers,'' or that the entire population affected by the treatment be composed of ``defiers.''
Once this assumption is made, our assumption that the sign of the first stage coefficient is known amounts to assuming the researcher knows which of these possibilities (compliers or defiers) holds.
Indeed, in the examples where they argue that monotonicity is plausible (involving draft lottery numbers in one case and intention to treat in another), \citet{imbens_identification_1994}
argue that all individuals affected by the treatment are ``compliers'' for a certain definition of the instrument.

It is important to note, however, that knowledge of the first stage sign is not always a reasonable assumption, and thus that the results of this paper are not always applicable.  In settings where the instrumental variables are indicators for groups without a natural ordering, for instance, one typically does not have prior information about signs of the first stage coefficients.
To give one example, \citet{aizer_juvenile_2013}
use the fact that judges are randomly assigned to study the effects of prison sentences on recidivism.  In this setting, knowledge of the first stage sign would require knowing a priori which judges are more strict.

\subsection{Setting}\label{setting_sec}

For the remainder of the paper, we suppose that we observe a sample of $T$ observations $\left(Y_{t},X_{t},Z_{t}'\right)$,
$t=1,...,T$ where $Y_{t}$ is an outcome variable, $X_{t}$ is a
scalar endogenous regressor, and $Z_{t}$ is a $k\times1$ vector
of instruments. Let $Y$ and $X$ be  $T\times1$ vectors with
row $t$ equal to $Y_{t}$ and $X_{t}$ respectively, and let $Z$
be a $T\times k$ matrix with row $t$ equal to $Z_{t}'$. The usual
linear IV model, written in reduced-form, is

\begin{equation}
\begin{array}{c}
Y=Z\pi\beta+U\\
X=Z\pi+V
\end{array}.\label{eq:Linear IV Model}
\end{equation}
To derive finite-sample results, we treat the instruments $Z$ as fixed and assume that the errors
$\left(U,V\right)$ are jointly normal with mean zero and known variance-covariance
matrix $Var\left(\left(U',V'\right)'\right).$\footnote{Following the weak instruments literature we focus on models with homogeneous $\beta$, which rules out heterogeneous treatment effect models with multiple instruments.  In models with treatment effect heterogeneity and a single instrument, however, our results immediately imply an unbiased estimator of the local average treatment effect.  In models with multiple instruments, on the other hand, one can use our results to construct unbiased estimators for linear combinations of the local average treatment effects on different instruments. (Since the endogenous variable $X$ is typically a binary treatment in such models, this discussion applies primarily to asymptotic unbiasedness as considered in Appendix \ref{estimated_variance_sec} rather than the finite sample model where $X$ and $Y$ are jointly normal.)}
As is standard (see,
for example, D. \citet{AndrewsMoreiraStock2006}), in contexts with
additional exogenous regressors $W$ (for example an intercept), we
define $Y,$ $X,$ $Z$ as the residuals after projecting out these
exogenous regressors. If we denote the reduced-form and first-stage
regression coefficients by $\xi_{1}$ and $\xi_{2}$, respectively,
we can see that 
\begin{equation}
\left(\begin{array}{c}
\xi_{1}\\
\xi_{2}
\end{array}\right)=\left(\begin{array}{c}
\left(Z'Z\right)^{-1}Z'Y\\
\left(Z'Z\right)^{-1}Z'X
\end{array}\right)\sim N\left(\left(\begin{array}{c}
\pi\beta\\
\pi
\end{array}\right),\left(\begin{array}{cc}
\Sigma_{11} & \Sigma_{12}\\
\Sigma_{21} & \Sigma_{22}
\end{array}\right)\right)\label{eq: Sufficient Statistics Def}
\end{equation}
for 
\begin{align}\label{Sigma_def}
\Sigma=\left(\begin{array}{cc}
\Sigma_{11} & \Sigma_{12}\\
\Sigma_{21} & \Sigma_{22}
\end{array}\right)
=\left(I_{2}\otimes \left(Z'Z\right)^{-1}Z'\right)Var\left(\left(U',V'\right)'\right)\left(I_{2}\otimes\left(Z'Z\right)^{-1}Z'\right)'.%
\end{align}
We assume throughout that $\Sigma$ is positive definite.
Following the literature (e.g. \citealt{MoreiraMoreira2013}), we consider estimation based solely on $(\xi_1,\xi_2)$, which are sufficient for $\left(\pi,\beta\right)$ in the special case where the errors $(U_t,V_t)$ are iid over $t$. All uniqueness and efficiency statements therefore restrict attention to the class of procedures which depend on the data though only these statistics.  The conventional
 generalized method of moments (GMM) estimators belong to this class, so this restriction still allows efficient estimation under strong instruments.
We assume that the sign of each component $\pi_{i}$ of
$\pi$ is known, and in particular assume that the parameter space
for $\left(\pi,\beta\right)$ is 
\begin{equation}
\Theta=\left\{ \left(\pi,\beta\right):\pi\in\Pi\subseteq\left(0,\infty\right)^{k},\beta\in B\right\} \label{eq:Parameter Space}
\end{equation}
for some sets $\Pi$ and $B$.  Note that once we take the sign of $\pi_i$ to be known, assuming $\pi_i>0$ is without loss of generality
since this can always be ensured by redefining $Z$.

In this paper we focus on models with fixed instruments, normal errors,
and known error covariance, which allows us to obtain finite-sample
results. As usual, these finite-sample results will imply asymptotic
results under mild regularity conditions. Even in models with random
instruments, non-normal errors, serial correlation, heteroskedasticity,
clustering, or any combination of these, the reduced-form and
first stage estimators will be jointly asymptotically normal with
consistently estimable covariance matrix $\Sigma$ under mild regularity
conditions. Consequently, the finite-sample results we develop here
will imply asymptotic results under both weak and strong instrument
asymptotics, where we simply define $\left(\xi_{1},\xi_{2}\right)$
as above and replace $\Sigma$ by an estimator for the variance of $\xi$ to obtain
feasible statistics.
Appendix \ref{estimated_variance_sec} provides the details of these results.\footnote{The feasible analogs of the finite-sample unbiased estimators discussed here are asymptotically unbiased in general models in the sense of converging in distribution to random variables with mean $\beta$.  Note that this does not imply convergence of the mean of the feasible estimators to $\beta$, since convergence in distribution does not suffice for convergence of moments.
  Our estimator is thus asymptotically unbiased under weak and strong instruments in the same sense that LIML and just-identified 2SLS, which do not in general have finite-sample moments, are asymptotically unbiased under strong instruments.}
In the main text, we focus on what we view as the most novel component of the paper: finite-sample
mean-unbiased estimation of $\beta$ in the normal problem (\ref{eq: Sufficient Statistics Def}).

\subsection{Related Literature}

Our unbiased IV estimators build  on results for unbiased estimation of the inverse of a normal mean discussed in \citet{VoinovNikulin1993}.
More broadly, the literature has considered unbiased estimators in numerous other contexts, and we refer the reader to \citeauthor{VoinovNikulin1993} for details and references.   Recent work by \citet{MuellerWang2015} develops a numerical approach for approximating optimal nearly unbiased estimators in variety of nonstandard settings, though they do not consider the linear IV model.
To our knowledge the only other paper to treat finite sample mean-unbiased estimation in IV models is \citet{HiranoPoter2015}, who find that unbiased estimators do not exist when the parameter space is unrestricted.
In our setting, the sign restriction on the first-stage coefficient leads to a parameter space that violates the assumptions of \citet{HiranoPoter2015}, so that the negative results in that paper do not apply.\footnote{In particular, the sign restriction violates Assumption 2.4 of \citet{HiranoPoter2015}, and so renders the negative result in Theorem 2.5 of that paper inapplicable.  See Appendix \ref{hp_append} for details.}
The nonexistence of unbiased estimators has been noted in other nonstandard econometric contexts by \citet{hirano_impossibility_2012}.

The broader literature on the finite sample properties of IV estimators is huge:  see \citet{Phillips1983} and \citet{Hillier2006} for references.
While this literature does not study unbiased estimation in finite samples, there has been substantial research on higher order asymptotic bias properties: see the references given in the first section of the introduction, as well as \citet{HanhHausmanKuersteiner2004} and the references therein.

Our interest in finite sample results for a normal model with known reduced form variance is motivated by the weak IV literature, where this model arises asymptotically under weak IV sequences as in \citet{staiger_instrumental_1997} (see also Appendix \ref{estimated_variance_sec}).  In contrast to \citeauthor{staiger_instrumental_1997}, however, our results allow for heteroskedastic, clustered, or serially correlated errors as in \citet{Kleibergen2007}.
The primary focus of recent work on weak instruments has, however, been on inference rather than estimation.
See \citet{Andrews2014} for additional references.

Sign restrictions have been used in other settings in the econometrics literature, although the focus is often on inference or on using sign restrictions to improve population bounds, rather than estimation.
Recent examples include
\citet{moon_inference_2013} and several papers cited therein, which use sign restrictions to partially identify vector autoregression models.
Inference for sign restricted parameters has been treated by
D. \citet{andrews_testing_2001}
and
\citet{gourieroux_likelihood_1982}, among others.

\section{Unbiased Estimation with a Single Instrument}\label{single_instrument_sec}

To introduce our unbiased estimators, we first focus on the just-identified
model with a single instrument, $k=1$.  We show that unbiased estimation of $\beta$
in this context is linked to unbiased estimation of the inverse of a normal mean.  Using this fact we construct an unbiased estimator for $\beta$, show that it is unique, and 
discuss some of its finite-sample properties.  We note the key role played by  the first stage sign restriction, and show that our estimator is  equivalent to 2SLS (and thus efficient) when the instruments are strong.

 In the just-identified context $\xi_{1}$
and $\xi_{2}$ are scalars and we write
\[
\Sigma=\left(\begin{array}{cc}
\Sigma_{11} & \Sigma_{12}\\
\Sigma_{21} & \Sigma_{22}
\end{array}\right)=\left(\begin{array}{cc}
\sigma_{1}^{2} & \sigma_{12}\\
\sigma_{12} & \sigma_{2}^{2}
\end{array}\right).
\]
The problem of estimating $\beta$ therefore
reduces to that of estimating 
\begin{equation}
\beta=\frac{\pi\beta}{\pi}=\frac{E\left[\xi_{1}\right]}{E\left[\xi_{2}\right]}.\label{eq: beta in Just-ID case}
\end{equation}
The conventional IV estimate $\hat{\beta}_{2SLS}=\frac{\xi_{1}}{\xi_{2}}$
is the natural sample-analog of (\ref{eq: beta in Just-ID case}).
As is well-known, however, this estimator has no integer moments. This lack of unbiasedness reflects the fact that the expectation
of the ratio of two random variables is not in general equal to the
ratio of their expectations. 

The form of (\ref{eq: beta in Just-ID case}) nonetheless suggests
an approach to deriving an unbiased estimator. Suppose we can construct
an estimator $\hat{\tau}$ which (a) is unbiased for $1/\pi$
and (b) depends on the data only through $\xi_{2}$. If we then define
\begin{equation}
\hat{\delta}\left(\xi,\Sigma\right)=\left(\xi_{1}-\frac{\sigma_{12}}{\sigma_{2}^{2}}\xi_{2}\right),\label{eq:delta hat definition}
\end{equation}
we have that $E\left[\hat{\delta}\right]=\pi\beta-\frac{\sigma_{12}}{\sigma_{2}^{2}}\pi$,
and $\hat{\delta}$ is independent of $\hat{\tau}$.\footnote{Note that the orthogonalization used to construct $\hat\delta$ is similar to that used by \citet{Kleibergen2002}, \citet{Moreira2003}, and the subsequent weak-IV literature to construct identification-robust tests.} Thus, $E\left[\hat{\tau}\hat{\delta}\right]=E\left[\hat{\tau}\right]E\left[\hat{\delta}\right]=\beta-\frac{\sigma_{12}}{\sigma_{2}^{2}}$,
and $\hat{\tau}\hat{\delta}+\frac{\sigma_{12}}{\sigma_{2}^{2}}$ will
be an unbiased estimator of $\beta$. Thus, the problem of unbiased
estimation of $\beta$ reduces to that of unbiased estimation of the
inverse of a normal mean.

\subsection{Unbiased Estimation of the Inverse of a Normal Mean}

A result from \citet{VoinovNikulin1993} shows that unbiased estimation
of $1/\pi$ is possible if we assume its sign is known.
Let $\Phi$ and $\phi$ denote the standard normal cdf and pdf respectively.

\begin{lemma} \label{Lemma:Voinov Nikulin}Define 
\[
\hat{\tau}\left(\xi_{2},\sigma_{2}^{2}\right)=\frac{1}{\sigma_{2}}\frac{1-\Phi\left(\xi_{2}/\sigma_{2}\right)}{\phi\left(\xi_{2}/\sigma_{2}\right)}.
\]
For all $\pi>0$, $E_{\pi}\left[\hat{\tau}\left(\xi_{2},\sigma_{2}^{2}\right)\right]=\frac{1}{\pi}.$

\end{lemma}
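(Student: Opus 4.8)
The plan is to compute $E_\pi[\hat\tau]$ directly as a Gaussian integral and reduce it to a moment generating function calculation via integration by parts. Since $\xi_2\sim N(\pi,\sigma_2^2)$, I would first write
\[
E_\pi[\hat\tau]=\int_{-\infty}^\infty \frac{1}{\sigma_2}\frac{1-\Phi(x/\sigma_2)}{\phi(x/\sigma_2)}\,\frac{1}{\sigma_2}\phi\!\left(\frac{x-\pi}{\sigma_2}\right)dx,
\]
and then substitute $z=x/\sigma_2$ and set $\mu=\pi/\sigma_2>0$. The point of the substitution is that the ratio of the two Gaussian densities collapses via the standard identity $\phi(z-\mu)/\phi(z)=\exp(\mu z-\mu^2/2)$, so the $\phi(z)$ in the denominator of $\hat\tau$ cancels and the expectation becomes
\[
E_\pi[\hat\tau]=\frac{1}{\sigma_2}\,e^{-\mu^2/2}\int_{-\infty}^\infty \big(1-\Phi(z)\big)\,e^{\mu z}\,dz.
\]

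Next I would evaluate the remaining integral by integration by parts, taking $u=1-\Phi(z)$ and $dv=e^{\mu z}\,dz$, so that $du=-\phi(z)\,dz$ and $v=e^{\mu z}/\mu$. The boundary term $[(1-\Phi(z))e^{\mu z}/\mu]_{-\infty}^\infty$ vanishes at both ends: as $z\to+\infty$ the Mills-ratio tail bound $1-\Phi(z)\le \phi(z)/z$ forces the product to zero, while as $z\to-\infty$ the factor $e^{\mu z}\to 0$ because $\mu>0$. What remains is $\frac{1}{\mu}\int_{-\infty}^\infty e^{\mu z}\phi(z)\,dz$, which is a Gaussian moment generating function; completing the square gives $\int e^{\mu z}\phi(z)\,dz=e^{\mu^2/2}$. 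Combining, the integral equals $\mu^{-1}e^{\mu^2/2}$, and substituting back yields $E_\pi[\hat\tau]=(1/\sigma_2)e^{-\mu^2/2}\cdot \mu^{-1}e^{\mu^2/2}=1/(\sigma_2\mu)=1/\pi$, as claimed.

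The hard part, and the place where the sign restriction $\pi>0$ is indispensable, is the convergence of $\int(1-\Phi(z))e^{\mu z}\,dz$ together with the vanishing of the boundary term at $-\infty$. If $\mu\le 0$ then $1-\Phi(z)\to 1$ while $e^{\mu z}$ fails to decay as $z\to-\infty$, so the integral diverges and $\hat\tau$ is not even integrable; the assumption $\pi>0$ is exactly what controls the lower tail. I would therefore state the Mills-ratio bound and the limit $e^{\mu z}\to 0$ explicitly to justify both the finiteness of the expectation and the integration by parts, after which the computation is routine.
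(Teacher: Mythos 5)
Your proposal is correct and follows essentially the same route as the paper's proof: rewrite the expectation so the $\phi$ in the denominator of $\hat\tau$ cancels against the Gaussian density (leaving $\int(1-\Phi(z))e^{\mu z}\,dz$ up to constants), integrate by parts so the derivative of $1-\Phi$ produces $-\phi$, observe the boundary term vanishes, and recognize the remaining integral as a Gaussian integral equal to $e^{\mu^2/2}/\mu$ (the paper recombines it into $\int\phi(x-\pi/\sigma_2)\,dx=1$, which is the same computation). Your explicit justification of the boundary terms via the Mills-ratio bound at $+\infty$ and $e^{\mu z}\to 0$ at $-\infty$ is a point the paper asserts without detail, so your write-up is, if anything, slightly more complete.
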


The derivation of $\hat{\tau}\left(\xi_{2},\sigma_{2}^{2}\right)$
in \citet{VoinovNikulin1993} relies on the theory of bilateral Laplace
transforms, and offers little by way of intuition. Verifying unbiasedness
is a straightforward calculus exercise, however: for the interested
reader, we work through the necessary derivations in the proof of
Lemma \ref{Lemma:Voinov Nikulin}. 

From the formula for $\hat\tau$, we can see that this estimator has two properties which are arguably desirable for a restricted estimate of $1/\pi$.  First, it is positive by definition, thereby incorporating the restriction that $\pi>0$.  Second, in the case where positivity of $\pi$ is obvious from the data ($\xi_2$ is very large relative to its standard deviation), it is close to the natural plug-in estimator $1/\xi_2$.
The second property is an immediate consequence of a well-known approximation to the tail of the normal cdf, which is used extensively in the literature on extreme value limit theorems for normal sequences and processes
(see Equation 1.5.4 in \citealt{leadbetter_extremes_1983}, and the remainder of that book for applications).
We discuss this further in Section \ref{sub:Large pi Behavior, Single Instrument Case}.

\subsection{Unbiased Estimation of $\beta$}

Given an unbiased estimator of $1/\pi$ which depends only on $\xi_{2}$,
we can construct an unbiased estimator of $\beta$ as suggested above.
Moreover, this estimator is unique.

\begin{theorem} \label{Thm: Unbiased Estimator, Just-identified Case}
Define 
\[
\begin{array}{c}
\hat{\beta}_{U}\left(\xi,\Sigma\right)=\hat{\tau}\left(\xi_{2},\sigma_{2}^{2}\right)\hat{\delta}\left(\xi,\Sigma\right)+\frac{\sigma_{12}}{\sigma_{2}^{2}}\\
=\frac{1}{\sigma_{2}}\frac{1-\Phi\left(\xi_{2}/\sigma_{2}\right)}{\phi\left(\xi_{2}/\sigma_{2}\right)}\left(\xi_{1}-\frac{\sigma_{12}}{\sigma_{2}^{2}}\xi_{2}\right)+\frac{\sigma_{12}}{\sigma_{2}^{2}}.
\end{array}
\]
The estimator $\hat{\beta}_{U}\left(\xi,\Sigma\right)$ is unbiased
for $\beta$ provided $\pi>0$. 

Moreover, if the parameter space (\ref{eq:Parameter Space}) contains
an open set then $\hat{\beta}_{U}\left(\xi,\Sigma\right)$ is the
unique non-randomized unbiased estimator for $\beta$, in the sense
that any other estimator $\hat{\beta}\left(\xi,\Sigma\right)$ satisfying
\[
E_{\pi,\beta}\left[\hat{\beta}\left(\xi,\Sigma\right)\right]=\beta\,\,\forall\pi\in\Pi,\beta\in B
\]
also satisfies 
\[
\hat{\beta}\left(\xi,\Sigma\right)=\hat{\beta}_{U}\left(\xi,\Sigma\right)\, a.s.\,\forall\pi\in\Pi,\beta\in B.
\]

\end{theorem}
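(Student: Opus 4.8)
The plan is to handle the two claims separately: unbiasedness is an almost immediate consequence of Lemma~\ref{Lemma:Voinov Nikulin} combined with the orthogonalization built into $\hat\delta$, whereas uniqueness is a completeness argument for the bivariate normal location family.

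For unbiasedness, I would first note that $\hat\delta\left(\xi,\Sigma\right)$ and $\xi_2$ are jointly normal with
\[
\mathrm{Cov}\left(\hat\delta,\xi_2\right)=\sigma_{12}-\frac{\sigma_{12}}{\sigma_2^2}\,\sigma_2^2=0,
\]
so they are independent, and hence $\hat\tau\left(\xi_2,\sigma_2^2\right)$ --- a function of $\xi_2$ alone --- is independent of $\hat\delta$. Since $E\left[\hat\delta\right]=\pi\left(\beta-\frac{\sigma_{12}}{\sigma_2^2}\right)$ and $E\left[\hat\tau\right]=1/\pi$ by Lemma~\ref{Lemma:Voinov Nikulin}, independence gives $E\left[\hat\tau\hat\delta\right]=\beta-\frac{\sigma_{12}}{\sigma_2^2}$, and adding back $\frac{\sigma_{12}}{\sigma_2^2}$ yields $E_{\pi,\beta}\left[\hat\beta_U\right]=\beta$ for every $\pi>0$ and every $\beta$, which is all the first claim requires.

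For uniqueness, suppose $\hat\beta\left(\xi,\Sigma\right)$ is any estimator with $E_{\pi,\beta}[\hat\beta]=\beta$ for all $\left(\pi,\beta\right)\in\Theta$, and put $\Delta=\hat\beta-\hat\beta_U$. Both estimators are unbiased, hence integrable under every $P_{\pi,\beta}$, so $\Delta$ is integrable and satisfies $E_{\pi,\beta}[\Delta]=0$ throughout $\Theta$. The key step is to reparametrize in terms of the mean vector $\mu=\left(\pi\beta,\pi\right)'$ of $\xi$. The map $\left(\pi,\beta\right)\mapsto\mu$ is a diffeomorphism from $\left(0,\infty\right)\times\mathbb{R}$ onto the half-plane $\left\{\mu:\mu_2>0\right\}$ (its inverse is $\pi=\mu_2$, $\beta=\mu_1/\mu_2$), so the open subset of $\Theta$ supplied by the hypothesis maps to a nonempty open set $M\subseteq\mathbb{R}^2$. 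Since $\xi\sim N\left(\mu,\Sigma\right)$ is a full-rank exponential family with natural parameter $\eta=\Sigma^{-1}\mu$ and sufficient statistic $\xi$, as $\mu$ ranges over $M$ the natural parameter $\eta$ ranges over the nonempty open set $\Sigma^{-1}M$.

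I would then invoke completeness of the normal location family (see \citet{LehmannCasella1998}): for a full-rank exponential family whose natural parameter set contains a nonempty open set, any integrable function whose mean vanishes identically on that set equals zero almost everywhere. Applied to $\Delta$ over $M$, this gives $\Delta=0$ Lebesgue-almost-everywhere; since every $N\left(\mu,\Sigma\right)$ with $\Sigma$ positive definite is mutually absolutely continuous with Lebesgue measure, this upgrades to $P_{\pi,\beta}\left(\hat\beta=\hat\beta_U\right)=1$ for every $\left(\pi,\beta\right)\in\Theta$, including those outside the open set, as claimed. The main obstacle is this uniqueness half, and specifically the need to run completeness off the open subset alone rather than the full half-plane; the two points requiring care are that the reparametrization carries an open set to an open set (immediate from the diffeomorphism and the $k=1$ structure $\mu=\left(\pi\beta,\pi\right)$) and that $\Delta$ is genuinely integrable under each $P_{\pi,\beta}$, which is precisely where finiteness of the means of both $\hat\beta$ and $\hat\beta_U$ enters. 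The unbiasedness half is routine once the zero-covariance observation is recorded.
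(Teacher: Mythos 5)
Your proposal is correct and follows essentially the same route as the paper: unbiasedness via the zero-covariance/independence of $\hat\tau\left(\xi_{2},\sigma_{2}^{2}\right)$ and $\hat\delta\left(\xi,\Sigma\right)$ combined with Lemma \ref{Lemma:Voinov Nikulin}, and uniqueness via completeness of the normal family when the parameter space contains an open set. The only difference is that the paper compresses your explicit reparametrization $\left(\pi,\beta\right)\mapsto\left(\pi\beta,\pi\right)$, the natural-parameter map $\eta=\Sigma^{-1}\mu$, and the Lebesgue-a.e.-to-a.s. upgrade into a single citation of Theorem 4.3.1 of \citet{lehmann_testing_2005}, so your write-up simply makes explicit the steps that citation delegates.
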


Note that the conventional IV estimator can be written as 
\[
\hat{\beta}_{2SLS}=\frac{\xi_{1}}{\xi_{2}}=\frac{1}{\xi_{2}}\left(\xi_{1}-\frac{\sigma_{12}}{\sigma_{2}^{2}}\xi_{2}\right)+\frac{\sigma_{12}}{\sigma_{2}^{2}}.
\]
Thus, $\hat{\beta}_{U}$ differs from the conventional IV estimator
only in that it replaces the plug-in estimate $1/\xi_{2}$ for $1/\pi$
by the unbiased estimate $\hat{\tau}$.  From results in e.g. \citet{baricz_mills_2008},
we have that $\hat{\tau}<1/\xi_{2}$ for $\xi_2>0$, so when $\xi_2$ is positive
$\hat\beta_U$ shrinks the conventional IV estimator towards $\sigma_{12}/\sigma_2^2$.\footnote{Under weak instrument asymptotics as in \citet{staiger_instrumental_1997} and homoskedastic errors, $\sigma_{12}/\sigma_2^2$ is the probability limit of the OLS estimator, though this does not in general hold under weaker assumptions on the error structure.}  By contrast, when 
$\xi_2<0$,  $\hat\beta_U$ lies on the opposite side of $\sigma_{12}/\sigma_2^2$  from the conventional IV estimator.  Interestingly,
one can show that the unbiased estimator is uniformly more likely to correctly sign $\beta-\frac{\sigma_{12}}{\sigma_2^2}$ than
is the conventional estimator, in the sense that for $\varphi(x)=1\{x\ge0\}$,
$$Pr_{\pi,\beta}\left\{\varphi\left(\hat\beta_U-\frac{\sigma_{12}}{\sigma_2^2}\right)=\varphi\left(\beta-\frac{\sigma_{12}}{\sigma_2^2}\right)\right\}\ge Pr_{\pi,\beta}\left\{\varphi\left(\hat\beta_{2SLS}-\frac{\sigma_{12}}{\sigma_2^2}\right)=\varphi\left(\beta-\frac{\sigma_{12}}{\sigma_2^2}\right)\right\},$$
 with strict inequality at some points.\footnote{This property is far from unique to the unbiased estimator, however.}

\subsection{Risk and Moments of the Unbiased Estimator}\label{single_instrument_risk_sec}

The uniqueness of $\hat\beta_U$ among nonrandomized estimators implies that $\hat\beta_U$ minimizes the risk $E_{\pi,\beta}\ell\left(\tilde\beta(\xi,\Sigma)-\beta\right)$ uniformly over $\pi,\beta$ and over the class of unbiased estimators $\tilde\beta$ for any loss function $\ell$ such that randomization cannot reduce risk.  In particular, by Jensen's inequality $\hat\beta_U$ is uniformly minimum risk for any convex loss function $\ell$.  This includes absolute value loss as well as squared error loss or $L^p$ loss for any $p\ge 1$.  However, elementary calculations show that $|\hat\beta_U|$ has an infinite $p$th moment for $p>1$.  Thus the fact that $\hat\beta_U$ has uniformly minimal risk implies that any unbiased estimator must have an infinite $p$th moment for any $p>1$.  In particular, while $\hat\beta_U$ is the uniform minimum mean absolute deviation unbiased estimator of $\beta$, it is minimum variance unbiased only in the sense that all unbiased estimators have infinite variance.  We record this result in the following theorem.

\begin{theorem}\label{betau_second_moment_thm}
For $\varepsilon>0$, the expectation of $|\hat\beta_U(\xi,\Sigma)|^{1+\varepsilon}$ is infinite for all $\pi,\beta$.  Moreover, if the parameter space (\ref{eq:Parameter Space}) contains
an open set then any unbiased estimator of $\beta$ has an infinite $1+\varepsilon$ moment.
\end{theorem}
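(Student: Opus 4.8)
The plan is to establish the two assertions separately, starting with the claim that $\hat\beta_U$ itself has an infinite $1+\varepsilon$ moment for every $\pi,\beta$. First I would isolate the source of the heavy tail by writing $\hat\beta_U-\sigma_{12}/\sigma_2^2=\hat\tau(\xi_2,\sigma_2^2)\,\hat\delta(\xi,\Sigma)$. Since $\mathrm{Cov}(\hat\delta,\xi_2)=\sigma_{12}-(\sigma_{12}/\sigma_2^2)\sigma_2^2=0$ and $(\hat\delta,\xi_2)$ is jointly normal, $\hat\delta$ is independent of $\xi_2$ and hence of $\hat\tau$, which is a function of $\xi_2$ alone. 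Because adding a constant cannot convert an infinite $(1+\varepsilon)$-moment into a finite one, it suffices to show $E[|\hat\tau\hat\delta|^{1+\varepsilon}]=\infty$; and as $\hat\tau\ge0$, independence factorizes this into $E[\hat\tau^{1+\varepsilon}]\,E[|\hat\delta|^{1+\varepsilon}]$. Here $\hat\delta$ is a nondegenerate normal, its variance $\det\Sigma/\sigma_2^2$ being positive because $\Sigma$ is positive definite, so $E[|\hat\delta|^{1+\varepsilon}]\in(0,\infty)$ and the whole problem reduces to showing $E[\hat\tau^{1+\varepsilon}]=\infty$ for every $\pi>0$.

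This last step is the crux and the place where I expect the real work. Writing $z=x/\sigma_2$, the expectation is $\int \big(\sigma_2^{-1}(1-\Phi(z))/\phi(z)\big)^{1+\varepsilon}\,\sigma_2^{-1}\phi((x-\pi)/\sigma_2)\,dx$, and I would study its behavior as $x\to-\infty$. There $1-\Phi(z)\to1$ while $\phi(z)\to0$, so the Mills-ratio factor grows like $1/\phi(z)=\sqrt{2\pi}\,e^{z^2/2}$; raising it to the power $1+\varepsilon$ and multiplying by the Gaussian sampling density $\propto e^{-(x-\pi)^2/(2\sigma_2^2)}$ leaves an integrand whose exponent is, up to constants, $[(1+\varepsilon)x^2-(x-\pi)^2]/(2\sigma_2^2)=[\varepsilon x^2+2\pi x-\pi^2]/(2\sigma_2^2)$. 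For any $\varepsilon>0$ the $\varepsilon x^2$ term dominates, so the integrand diverges like $e^{\varepsilon x^2/(2\sigma_2^2)}$ as $x\to-\infty$ and the integral is infinite; note that at $\varepsilon=0$ the same exponent is $2\pi x-\pi^2\to-\infty$, consistently with $E[\hat\tau]=1/\pi<\infty$. The only care needed is to make the asymptotic equivalence rigorous, e.g. by bounding $(1-\Phi(z))/\phi(z)$ below by a fixed multiple of $1/\phi(z)$ for $z$ sufficiently negative and restricting the integral to that region. This gives $E[\hat\tau^{1+\varepsilon}]=\infty$, hence $E[|\hat\beta_U|^{1+\varepsilon}]=\infty$; the identical computation with $\hat\beta_U-\beta=\hat\tau\hat\delta+(\sigma_{12}/\sigma_2^2-\beta)$ shows the risk $E_{\pi,\beta}[|\hat\beta_U-\beta|^{1+\varepsilon}]$ is infinite as well.

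For the second assertion I would combine the uniqueness in Theorem \ref{Thm: Unbiased Estimator, Just-identified Case} with the convexity of $\ell(x)=|x|^{1+\varepsilon}$. Let $\tilde\beta(\xi,\Sigma,U)$ be any unbiased estimator, where $U$ denotes a possible independent randomization; unbiasedness already forces $\tilde\beta$ to be integrable at every parameter, so by Fubini the derandomized estimator $\bar\beta(\xi,\Sigma)=E_U[\tilde\beta]$ is well defined almost surely, nonrandomized, and still unbiased. Because the parameter space contains an open set, the uniqueness theorem yields $\bar\beta=\hat\beta_U$ almost surely. Applying Jensen's inequality conditionally on the data gives $E_U[\ell(\tilde\beta-\beta)]\ge\ell(E_U[\tilde\beta]-\beta)=\ell(\bar\beta-\beta)$ pointwise, and integrating over the data yields $E_{\pi,\beta}[\ell(\tilde\beta-\beta)]\ge E_{\pi,\beta}[\ell(\bar\beta-\beta)]=E_{\pi,\beta}[\ell(\hat\beta_U-\beta)]$, which is infinite by the first part. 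Hence $E_{\pi,\beta}[|\tilde\beta-\beta|^{1+\varepsilon}]=\infty$ for all $\pi,\beta$, and since $|\tilde\beta|^{1+\varepsilon}$ and $|\tilde\beta-\beta|^{1+\varepsilon}$ have the same integrability, the $1+\varepsilon$ moment of $\tilde\beta$ is infinite as claimed.
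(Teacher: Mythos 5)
Your proposal is correct and follows essentially the same route as the paper: reduce to $E[\hat\tau^{1+\varepsilon}]=\infty$ via the factorization $\hat\beta_U-\sigma_{12}/\sigma_2^2=\hat\tau\hat\delta$ with independence of $\hat\tau$ and $\hat\delta$, establish the divergence by bounding the Mills-ratio numerator below on the negative half-line so the integrand grows like $e^{\varepsilon x^2/(2\sigma_2^2)}$, and then extend to arbitrary unbiased estimators by Rao--Blackwellization, the completeness-based uniqueness of $\hat\beta_U$, and Jensen's inequality. The only cosmetic difference is that the paper phrases the first reduction via Minkowski's inequality rather than your "adding a constant preserves infinite moments" remark, which is the same observation.
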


\subsection{Relation to Tests and Confidence Sets}\label{sec: tests and CS}

As we show in the next subsection, $\hat\beta_U$ is asymptotically equivalent to 2SLS when the instruments are strong and so can be used together with conventional standard errors in that case.  Even when the instruments are weak the conditioning approach of \citet{Moreira2003} yields valid conditional critical values for arbitrary test statistics and so can be used to construct conditional t-tests based on $\hat\beta_U$ which control size.
We note, however, that
optimal estimation and optimal testing are distinct questions in the context of weak IV (e.g. while $\hat\beta_U$ is uniformly minimum risk unbiased for convex loss, it follows from the
results of \citet{Moreira2009} that the Anderson-Rubin test, rather than a conditional t-test based on $\hat\beta_U$, is the uniformly most powerful unbiased two-sided test in the present just-identified context).\footnote{\citet{Moreira2009} establishes this result in the model without a sign restriction, and it is straightforward to show that the result continues to hold in the sign-restricted model.}
Since our focus in this paper is on estimation we do not further pursue the question of optimal testing in this paper.
However, properties of tests based on unbiased estimators, particularly in contexts where the Anderson-Rubin test is not uniformly most powerful unbiased (such as one-sided testing and testing in the overidentified model of Section \ref{multi_instrument_sec}), is an interesting topic for future work.\footnote{Absent such results, we suggest reporting the Anderson-Rubin confidence set to accompany the unbiased point estimate.  As discussed in Section  \ref{beta_U Containment}, the 95\% Anderson-Rubin confidence set contains $\hat{\beta}_{U}$ with probability exceeding 97\%, and with probability near 100\% except when $\pi$ is extremely small.}

\subsection{Behavior of $\hat{\beta}_{U}$ When $\pi$ is Large\label{sub:Large pi Behavior, Single Instrument Case}}

While the finite-sample unbiasedness of $\hat{\beta}_{U}$ is appealing,
it is also natural to consider performance when the instruments
are highly informative. This situation, which we will model by taking
$\pi$ to be large, corresponds to the conventional strong-instrument
asymptotics where one fixes the data generating process and takes
the sample size to infinity.%
\footnote{Formally, in the finite-sample normal IV model (\ref{eq:Linear IV Model}),
strong-instrument asymptotics will correspond to fixing $\pi$ and
taking $T\to\infty$, which under mild conditions on $Z$ and $Var\left(\left(U',V'\right)'\right)$
will result in $\Sigma\to0$ in (\ref{eq: Sufficient Statistics Def}).
However, it is straightforward to show that the behavior of $\hat{\beta}_{U}$,
$\hat{\beta}_{2SLS}$, and many other estimators in this case will be
the same as the behavior obtained by holding $\Sigma$ fixed and taking
$\pi$ to infinity.
We focus on the latter case here to simplify the exposition.
See Appendix \ref{estimated_variance_sec}, which provides asymptotic results with an unknown error distribution, for asymptotic results under $T\to\infty$.
}

As we discussed above, the unbiased and conventional IV estimators
differ only in that the former substitutes $\hat{\tau}\left(\xi_{2},\sigma_{2}^{2}\right)$
for $1/\xi_{2}$. These two estimators
for $1/\pi$ coincide to a high order of approximation for large values
of $\xi_{2}$. Specifically, as noted in \citet{Small2010} (Section 2.3.4), for $\xi_{2}>0$ we have 
\[
\sigma_{2}\left|\hat{\tau}\left(\xi_{2},\sigma_{2}^{2}\right)-\frac{1}{\xi_{2}}\right|\le\left|\frac{\sigma_{2}^{3}}{\xi_{2}^{3}}\right|.
\]
Thus, since $\xi_{2}\stackrel{p}{\to}\infty$ as $\pi\to\infty$, the difference
between $\hat{\tau}\left(\xi_{2},\sigma_{2}^{2}\right)$ and $1/\xi_{2}$
converges rapidly to zero (in probability) as $\pi$ grows. Consequently,
the unbiased estimator $\hat{\beta}_{U}$ (appropriately normalized)
has the same limiting distribution as the conventional IV estimator
$\hat{\beta}_{2SLS}$ as we take $\pi\to\infty$.

\begin{theorem} \label{Thm: Limiting Distribution, Just-identified Case}
As $\pi\to\infty$, holding $\beta$ and $\Sigma$ fixed, 
\[
\pi\left(\hat{\beta}_{U}-\hat{\beta}_{2SLS}\right)\stackrel{p}{\to}0.
\]
Consequently, $\hat{\beta}_{U}\stackrel{p}{\to}\beta$ and 
\[
\pi\left(\hat{\beta}_{U}-\beta\right)\stackrel{d}{\to}N\left(0,\sigma_{1}^{2}-2\beta\sigma_{12}+\beta^{2}\sigma_{2}^{2}\right).
\]

\end{theorem}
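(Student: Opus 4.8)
The plan is to exploit the common structure of the two estimators. Since both $\hat\beta_U$ and $\hat\beta_{2SLS}$ are of the form (an estimate of $1/\pi$) times $\hat\delta(\xi,\Sigma)$ plus $\sigma_{12}/\sigma_2^2$, differing only in whether $\hat\tau(\xi_2,\sigma_2^2)$ or $1/\xi_2$ plays the role of the estimate of $1/\pi$, their difference telescopes to
\[
\hat\beta_U - \hat\beta_{2SLS} = \left(\hat\tau(\xi_2,\sigma_2^2) - \frac{1}{\xi_2}\right)\hat\delta(\xi,\Sigma),
\]
so the first claim reduces to controlling $\pi$ times this product. First I would write $\xi_1 = \pi\beta + \eta_1$ and $\xi_2 = \pi + \eta_2$, where $(\eta_1,\eta_2)\sim N(0,\Sigma)$ has a distribution not depending on $\pi$. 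Then $\xi_2/\pi \stackrel{p}{\to}1$ (so in particular $Pr\{\xi_2>0\}\to1$ and $\xi_2\stackrel{p}{\to}\infty$) while $\hat\delta/\pi \stackrel{p}{\to}\beta - \sigma_{12}/\sigma_2^2$.

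The crux is a rate-counting argument. On the event $\{\xi_2>0\}$ the bound from \citet{Small2010} quoted above gives $|\hat\tau - 1/\xi_2|\le \sigma_2^2/\xi_2^3$, whence
\[
\left|\pi\left(\hat\beta_U - \hat\beta_{2SLS}\right)\right| \le \sigma_2^2\,\frac{\pi\,|\hat\delta|}{\xi_2^3}.
\]
Substituting $\xi_2 = \pi(1+o_p(1))$ and $\hat\delta = \pi(\beta - \sigma_{12}/\sigma_2^2 + o_p(1))$ shows the right-hand side is $O_p(1/\pi)$ and hence $\stackrel{p}{\to}0$; since $Pr\{\xi_2\le0\}\to0$ this establishes $\pi(\hat\beta_U-\hat\beta_{2SLS})\stackrel{p}{\to}0$. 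I expect this to be the main obstacle, because $\hat\delta$ is itself of order $\pi$ rather than $O_p(1)$, so the bound on $\hat\tau - 1/\xi_2$ alone does not suffice: one must track the cancellation between the $\pi$ in front, the $\pi^{-3}$ decay coming from $\xi_2^{-3}$, and the $\pi$ growth of $\hat\delta$, which together leave an $O_p(1/\pi)$ remainder.

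Finally I would pin down the limiting distribution through $\hat\beta_{2SLS}$ directly. Writing $\pi(\hat\beta_{2SLS}-\beta) = \frac{\pi}{\xi_2}(\xi_1 - \beta\xi_2) = \frac{\pi}{\xi_2}(\eta_1 - \beta\eta_2)$, the factor $\pi/\xi_2 \stackrel{p}{\to}1$ while $\eta_1 - \beta\eta_2 \sim N(0,\sigma_1^2 - 2\beta\sigma_{12} + \beta^2\sigma_2^2)$ has a distribution not depending on $\pi$, so Slutsky's theorem yields $\pi(\hat\beta_{2SLS}-\beta)\stackrel{d}{\to}N(0,\sigma_1^2 - 2\beta\sigma_{12}+\beta^2\sigma_2^2)$ and, a fortiori, $\hat\beta_{2SLS}\stackrel{p}{\to}\beta$. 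Combining with the first part through the decomposition $\pi(\hat\beta_U-\beta) = \pi(\hat\beta_U-\hat\beta_{2SLS}) + \pi(\hat\beta_{2SLS}-\beta)$ and a further application of Slutsky then delivers both $\hat\beta_U\stackrel{p}{\to}\beta$ and the claimed asymptotic normality.
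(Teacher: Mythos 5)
Your proposal is correct and follows essentially the same route as the paper's proof: the identical telescoping decomposition $\hat\beta_U-\hat\beta_{2SLS}=\bigl(\hat\tau(\xi_2,\sigma_2^2)-1/\xi_2\bigr)\hat\delta(\xi,\Sigma)$, the same Mills-ratio bound from \citet{Small2010}, and the same Slutsky argument through $\hat\beta_{2SLS}$ for the limiting distribution. The only differences are bookkeeping: the paper rewrites the product as $\left(\xi_2\hat\tau(\xi_2,\sigma_2^2)-1\right)\left(\hat\beta_{2SLS}-\sigma_{12}/\sigma_2^2\right)$ so that one factor is $\mathcal{O}_P(1)$ and the other is $o_P(1/\pi)$ via $\pi|\xi_2\hat\tau-1|\le\pi\sigma_2^2/\xi_2^2$, whereas you track the $\pi^{-3}$ decay against the two factors of $\pi$ directly, and you additionally make explicit the treatment of the event $\{\xi_2\le 0\}$ and the final Slutsky step, which the paper leaves implicit.
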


Thus, the unbiased estimator $\hat{\beta}_{U}$ behaves as the standard IV estimator for large values of $\pi$.
Consequently, one can show that using this estimator along with conventional
standard errors will yield asymptotically valid inference under strong-instrument
asymptotics.
See Appendix \ref{estimated_variance_sec} for details.

\section{Unbiased Estimation with Multiple Instruments}\label{multi_instrument_sec}

We now consider the case with multiple instruments, where the model is given by (\ref{eq:Linear IV Model}) and (\ref{eq: Sufficient Statistics Def}) with $k$ (the dimension of $Z_t$, $\pi$, $\xi_1$ and $\xi_2$) greater than $1$.  As in Section \ref{setting_sec}, we assume that the sign of each element $\pi_i$ of the first stage vector is known, and we normalize this sign to be positive, giving the parameter space (\ref{eq:Parameter Space}).  

Using the results in Section \ref{single_instrument_sec} one can construct an unbiased estimator for $\beta$ in many different ways.  For any index $i\in\{1,\ldots,k\}$, the unbiased estimator based on $(\xi_{1,i},\xi_{2,i})$ will, of course, still be unbiased for $\beta$ when $k>1$.  One can also take non-random weighted averages of the unbiased estimators based on different instruments. Using the unbiased estimator based on a fixed linear combination of instruments  is another possibility, so long as the linear combination preserves the sign restriction.
However, such approaches will not adapt to information from the data about the relative strength of instruments and so will typically be inefficient when the instruments are strong.

By contrast, the usual 2SLS estimator achieves asymptotic efficiency in the strongly identified case (modeled here by taking $\|\pi\|\to\infty$) when errors are homoskedastic.  In fact, in this case 2SLS is asymptotically equivalent to an infeasible estimator that uses knowledge of $\pi$ to choose the optimal combination of instruments.
Thus, a reasonable goal is to construct an estimator that (1) is unbiased for fixed $\pi$ and (2) is asymptotically equivalent to 2SLS as $\|\pi\|\to\infty$.\footnote{In the heteroskedastic case, the 2SLS estimator will no longer be asymptotically efficient, and a two-step GMM estimator can be used to achieve the efficiency bound.  Because it leads to simpler exposition, and because the 2SLS estimator is common in practice, we consider asymptotic equivalence with 2SLS, rather than asymptotic efficiency in the heteroskedastic case, as our goal.  As discussed in Appendix \ref{multi_instrument_append}, however, our approach generalizes directly to efficient estimators in non-homoskedastic settings.}
In the remainder of this section we first introduce a class of unbiased estimators and then show that a (feasible) estimator in this class attains the desired strong IV efficiency property.  Further, we show that in the over-identified case it is possible to construct unbiased estimators which are robust to small violations of the first stage sign restriction.  Finally, we derive bounds on the attainable risk of any estimator for finite $\|\pi\|$ and show that, while the unbiased estimators described above achieve optimality in an asymptotic sense as $\|\pi\|\to\infty$ regardless of the direction of $\pi$, the optimal unbiased estimator for finite $\pi$ will depend on the direction of $\pi$.

\subsection{A Class of Unbiased Estimators}

Let
\begin{align*}
\xi(i)=\left(\begin{array}{c}
\xi_{1,i}  \\
\xi_{2,i}  \\
\end{array}\right)
\text{ and }
\Sigma(i)=\left(\begin{array}{cc}
\Sigma_{11,ii}  & \Sigma_{12,ii}  \\
\Sigma_{21,ii}  & \Sigma_{22,ii}  \\
\end{array}\right)
\end{align*}
be the reduced form and first stage coefficients on the $i$th instrument and their variance matrix, respectively, so that $\hat\beta_U(\xi(i),\Sigma(i))$ is the unbiased estimator based on the $i$th instrument.
Given a weight vector $w\in\mathbb{R}^k$ with $\sum_{i=1}^k w_i=1$, let
\begin{align*}
\hat\beta_w(\xi,\Sigma;w)
=\sum_{i=1}^k w_i\hat\beta_U(\xi(i),\Sigma(i)).
\end{align*}
Clearly, $\hat\beta_w$ is unbiased so long as $w$ is nonrandom.  Allowing $w$ to depend on the data $\xi$, however, may introduce bias through the dependence between the weights and the estimators $\hat\beta_U(\xi(i),\Sigma(i))$.

To avoid this bias we first consider a randomized unbiased estimator and then take its conditional expectation given the sufficient 
statistic $\xi$ to eliminate the randomization.
Let
$\zeta\sim N(0,\Sigma)$ be independent of $\xi$, and let
$\xi^{(a)}=\xi+\zeta$ and $\xi^{(b)}=\xi-\zeta$.  Then $\xi^{(a)}$ and $\xi^{(b)}$ are (unconditionally) independent draws with the same marginal distribution as $\xi$, save that $\Sigma$ is replaced by $2\Sigma$.  If $T$ is even,  $Z'Z$ is the same across the first and second halves of the sample,
and the errors are iid, then $\xi^{(a)}$ and $\xi^{(b)}$ have the same joint distribution as the reduced form estimators based on the first and second half of the sample. Thus, we can think of these as split-sample reduced-form estimates.

Let $\hat w=\hat w(\xi^{(b)})$ be a vector of data dependent weights with $\sum_{i=1}^k \hat w_i=1$.  By the independence of $\xi^{(a)}$ and $\xi^{(b)}$,
\begin{align}\label{hatbetaw_unbiasedness_eq}
E\left[\hat\beta_w(\xi^{(a)},2\Sigma;\hat w(\xi^{(b)}))\right]
=\sum_{i=1}^k E\left[\hat w_i(\xi^{(b)})\right]
\cdot E\left[\hat\beta_U(\xi^{(a)}(i),2\Sigma(i))\right]
=\beta.
\end{align}
To eliminate the noise introduced by $\zeta$,
define the ``Rao-Blackwellized'' estimator
\begin{align*}
\hat\beta_{RB}=\hat\beta_{RB}(\xi,\Sigma;\hat w)
=E\left[\hat\beta_w(\xi^{(a)},2\Sigma;\hat w(\xi^{(b)}))\Big|\xi\right].
\end{align*}
This gives a class of unbiased estimators, where the estimator depends on the choice of the weight $\hat w$.
Unbiasedness of $\hat\beta_{RB}$ follows immediately from (\ref{hatbetaw_unbiasedness_eq}) and the law of iterated expectations. 
While $\hat\beta_{RB}$ does not, to our knowledge, have a simple closed form, it can be computed by integrating over the distribution of $\zeta$.  This can easily be  done by simulation, taking the sample average of $\hat\beta_w$ over simulated draws of $\xi^{(a)}$ and $\xi^{(b)}$ while holding $\xi$ at its observed value.

\subsection{Equivalence with 2SLS under Strong IV Asymptotics}\label{sec: Homoskedastic Efficiency}

We now propose a set of weights $\hat w$ which yield an unbiased estimator asymptotically equivalent to 2SLS.  To motivate these weights, note that for $W=Z'Z$ and $e_i$ the $i$th standard basis vector, the 2SLS estimator can be written as
\begin{align*}
\hat\beta_{2SLS}=\frac{\xi_2' W \xi_1}{\xi_2' W \xi_2}
=\sum_{i=1}^k \frac{\xi_2' W e_ie_i' \xi_2}{\xi_2' W \xi_2}\frac{\xi_{1,i}}{\xi_{2,i}},
\end{align*}
which is the GMM estimator with weight matrix $W=Z'Z$.
Thus, the 2SLS estimator is a weighted average of the 2SLS estimates based on single instruments, where the
 weight for estimate $\xi_{1,i}/\xi_{2,i}$ based on instrument $i$ is equal to
 $\frac{\xi_2' W e_ie_i' \xi_2}{\xi_2' W \xi_2}$.  This suggests the unbiased Rao-Blackwellized estimator with weights
$\hat w_i^*(\xi^{(b)})=\frac{{\xi_2^{(b)}}' W e_ie_i' {\xi_2^{(b)}}}{{\xi_2^{(b)}}' W {\xi_2^{(b)}}}$:
\begin{align}\label{iv_weight_rb_estimator_eq}
\hat\beta_{RB}^*
=\hat\beta_{RB}(\xi,\Sigma;\hat w)
=E\left[\hat\beta_w(\xi^{(a)},2\Sigma;\hat w^*(\xi^{(b)}))\Big|\xi\right].
\end{align}

The following theorem shows that $\hat\beta_{RB}^*$ is asymptotically equivalent to $\hat\beta_{2SLS}$ in the strongly identified case, and is therefore asymptotically efficient if the errors are iid.

\begin{theorem}\label{rb_asym_efficiency_thm}
Let $\|\pi\|\to\infty$ with $\|\pi\|/\min_i\pi_i=\mathcal{O}(1)$.  Then
$\|\pi\|(\hat\beta_{RB}^*-\hat\beta_{2SLS})\stackrel{p}{\to} 0$.
\end{theorem}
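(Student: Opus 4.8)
The plan is to control the two distinct ways in which $\hat\beta_{RB}^*$ departs from $\hat\beta_{2SLS}$: the substitution of the Mills-ratio estimate $\hat\tau$ for the plug-in $1/\xi_{2,i}$ inside each single-instrument estimator, and the split-sample randomization through $\zeta$. I would show that each contributes a term of order $\|\pi\|^{-2}$, so that both vanish after multiplication by $\|\pi\|$. Throughout, the hypothesis $\|\pi\|/\min_i\pi_i=\mathcal{O}(1)$ is used to force every coordinate $\pi_i$ to diverge at the common rate $\|\pi\|$; this makes $\xi_{2,i}$, $\xi_{2,i}^{(a)}$ and $\xi_{2,i}^{(b)}$ all of order $\|\pi\|$ with high probability and lets the single-instrument approximations hold simultaneously across $i$. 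Because $\hat\beta_{RB}^*$ is a conditional expectation over all of $\mathbb{R}^{2k}$, I would localize the argument to a good event $G=\{\|\zeta\|\le\delta\|\pi\|\}$ for small fixed $\delta$ and treat $G^c$ separately.

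On $G$ the first step is to replace each $\hat\beta_U(\xi^{(a)}(i),2\Sigma(i))$ appearing in $\hat\beta_w(\xi^{(a)},2\Sigma;\hat w^*(\xi^{(b)}))$ by the single-instrument ratio $\xi_{1,i}^{(a)}/\xi_{2,i}^{(a)}$. Using the identity $\hat\beta_U(\xi(i),\Sigma(i))-\xi_{1,i}/\xi_{2,i}=\bigl(\hat\tau(\xi_{2,i},\sigma_{2,i}^2)-1/\xi_{2,i}\bigr)\hat\delta(\xi(i),\Sigma(i))$ together with the bound $\sigma_{2,i}|\hat\tau-1/\xi_{2,i}|\le\sigma_{2,i}^3/\xi_{2,i}^3$ quoted from \citet{Small2010} (with $\sigma_{2,i}^2=\Sigma_{22,ii}$), this difference is of order $\|\pi\|^{-2}$ whenever $\xi_{2,i}^{(a)}$ is of order $\|\pi\|$, since $|\hat\delta|=O(\|\pi\|)$. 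As the weights $\hat w_i^*$ sum to one and are bounded in probability, the resulting error in $\hat\beta_w$ is $O_p(\|\pi\|^{-2})$ on $G$, exactly as in the single-instrument equivalence established in Theorem \ref{Thm: Limiting Distribution, Just-identified Case}.

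The heart of the argument is then a conditional delta-method expansion of $g(\xi^{(a)},\xi^{(b)})=\sum_i\hat w_i^*(\xi^{(b)})\,\xi_{1,i}^{(a)}/\xi_{2,i}^{(a)}$. Writing $\xi^{(a)}=\xi+\zeta$, $\xi^{(b)}=\xi-\zeta$ and Taylor-expanding in $\zeta$ about $\zeta=0$, the zeroth-order term is exactly $g(\xi,\xi)=\hat\beta_{2SLS}$. The crucial observation is that the Rao-Blackwellization annihilates the first-order term: since $\zeta$ is independent of $\xi$ with $E[\zeta\mid\xi]=0$, the linear-in-$\zeta$ contribution has conditional mean zero. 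The second-order term is a quadratic form $\tfrac12\zeta'A\zeta$ with $E[\tfrac12\zeta'A\zeta\mid\xi]=\tfrac12\operatorname{tr}(A\Sigma)$, and because every second derivative of $g$ evaluated at arguments of order $\|\pi\|$ is of order $\|\pi\|^{-2}$, this term is $O_p(\|\pi\|^{-2})$. Hence $\|\pi\|\bigl(E[g(\xi^{(a)},\xi^{(b)})\mid\xi]-\hat\beta_{2SLS}\bigr)=O_p(\|\pi\|^{-1})\stackrel{p}{\to}0$: averaging over the independent perturbation cancels the first-order noise and leaves only a curvature term one order smaller than $1/\|\pi\|$.

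The main obstacle is that these expansions are valid only on $G$, whereas on $\{\xi_{2,i}^{(a)}\le 0\}$ the Mills ratio $\hat\tau(\xi_{2,i}^{(a)},2\sigma_{2,i}^2)$ grows like $\exp\{(\xi_{2,i}^{(a)})^2/(4\sigma_{2,i}^2)\}$ and $\hat\beta_U$ is unbounded, so the contribution of $G^c$ must be bounded directly. Here the doubling of the variance is essential: conditionally on $\xi$ the vector $\xi^{(a)}$ is Gaussian with covariance $\Sigma$, so its density decays like $\exp\{-(\xi_{2,i}^{(a)})^2/(2\sigma_{2,i}^2)\}$, a rate strictly faster than the $\exp\{(\xi_{2,i}^{(a)})^2/(4\sigma_{2,i}^2)\}$ growth of the integrand. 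Since $\xi_{2,i}$ is of order $\|\pi\|$, the event $\{\xi_{2,i}^{(a)}\le 0\}$ has conditional probability of order $\exp\{-c\|\pi\|^2\}$ and the associated integral is exponentially small in $\|\pi\|^2$, hence negligible even after multiplication by $\|\pi\|$. Verifying this tail bound, and checking that it holds uniformly enough to survive the $\|\pi\|$ scaling, is the step demanding the most care; the remaining delta-method computations are routine.
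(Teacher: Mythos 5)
Your proposal is correct in outline, but it takes a genuinely different route from the paper's proof. The paper never Taylor-expands in $\zeta$ at all: it chains $\hat\beta_{2SLS}$ to $\hat\beta_{2SLS}^o=\sum_i w_i^*\xi_{1,i}/\xi_{2,i}$ (oracle weights $w_i^*=\pi'W^*e_ie_i'\pi/\pi'W^*\pi$), then to $\hat\beta_{RB}^o$, then to $\hat\beta_{RB}^*$. The middle step rests on Lemma \ref{deterministic_w_lemma}, which shows that for \emph{deterministic} weights the Rao--Blackwellization collapses exactly, $E\left[\hat\beta_U(\xi^{(a)}(i),2\Sigma(i))\big|\xi\right]=\hat\beta_U(\xi(i),\Sigma(i))$, by completeness of the single-instrument submodel; this exact identity eliminates the integral over $\zeta$ entirely, so that step reduces to the single-instrument result (Theorem \ref{Thm: Limiting Distribution, Just-identified Case}) applied instrument by instrument. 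The remaining gap between oracle and estimated weights is closed in $L^1$: since $\xi^{(a)}$ and $\xi^{(b)}$ are unconditionally independent, $E\left|\hat\beta_{RB}^o-\hat\beta_{RB}^*\right|\le\sum_i E\left|w_i^*-\hat w_i^*(\xi^{(b)})\right|\cdot E\left|\hat\beta_U(\xi^{(a)}(i),2\Sigma(i))-\beta\right|$, and the second factor is $\mathcal{O}(1/\pi_i)$ by the uniform mean-absolute-deviation bound of Lemma \ref{mad_bound_lemma}. Your argument works conditionally on $\xi$ instead: localization to $\{\|\zeta\|\le\delta\|\pi\|\}$, the Mills-ratio approximation, a Taylor expansion in $\zeta$ whose first-order term is killed by $E[\zeta|\xi]=0$, and an exponential tail bound pitting the growth rate $1/(4\sigma_{2,i}^2)$ of $\hat\tau(\cdot,2\Sigma(i))$ against the decay rate $1/(2\sigma_{2,i}^2)$ of the conditional density centered at $\xi_{2,i}\asymp\|\pi\|$; all of these steps check out (your statement of the density decay omits the centering at $\xi_{2,i}$, but you use the centering correctly where it matters). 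In effect, your $E[\zeta|\xi]=0$ cancellation is a first-order, approximate version of what the completeness identity delivers exactly and at all orders; the price is that you must control the Taylor remainder and the bad region by hand, which is precisely the delicate Mills-ratio analysis that the paper's structure confines to the univariate Lemma \ref{mad_bound_lemma}. What your route buys in exchange is self-containedness (no appeal to completeness in the multi-instrument step), explicit $\mathcal{O}_p(\|\pi\|^{-2})$ rates, and a transparent account of \emph{why} Rao--Blackwellization neutralizes the split-sample noise; what the paper's route buys is brevity, reuse of the single-instrument machinery, and an $L^1$ (rather than merely in-probability) bound for the weight-estimation error.
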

The condition that $\|\pi\|/\min_i\pi_i=\mathcal{O}(1)$ amounts to an assumption that the ``strength'' of all instruments is of the same order.  As discussed below in Section \ref{robustness_sec}, this assumption can be relaxed by redefining the instruments.

To understand why Theorem \ref{rb_asym_efficiency_thm} holds, consider the ``oracle'' weights $w_i^*=\frac{\pi'W e_ie_i'\pi}{\pi'W\pi}$.  It is easy to see that $\hat w_i^*-w_i^*\stackrel{p}{\to} 0$ as $\|\pi\|\to \infty$.  Consider the oracle unbiased estimator
$\hat\beta_{RB}^o=\hat\beta_{RB}(\xi,\Sigma;w^*)$, and the oracle combination of individual 2SLS estimators
$\hat\beta_{2SLS}^o=\sum_{i=1}^kw_i^*\frac{\xi_{1,i}}{\xi_{2,i}}$.
By arguments similar to those used to show that statistical noise in the first stage estimates does not affect the 2SLS asymptotic distribution under strong instrument asymptotics, it can be seen that $\|\pi\|(\hat\beta_{2SLS}^o-\hat\beta_{2SLS})\stackrel{p}{\to} 0$ as $\|\pi\|\to\infty$.
Further, one can show that
$\hat\beta_{RB}^o=\hat\beta_{w}(\xi,\Sigma;w^*)=\sum_{i=1}^k w_i^*\hat\beta_U(\xi(i),\Sigma(i))$.
Since this is just $\hat\beta_{2SLS}^o$ with $\hat\beta_U(\xi(i),\Sigma(i))$ replacing $\xi_{i,1}/\xi_{i,2}$, it follows
by Theorem \ref{Thm: Limiting Distribution, Just-identified Case}
that
$\|\pi\|(\hat\beta_{RB}^o-\hat\beta_{2SLS}^o)\stackrel{p}{\to} 0$.
Theorem \ref{rb_asym_efficiency_thm} then follows by showing that
$\|\pi\|(\hat\beta_{RB}-\hat\beta_{RB}^o)\stackrel{p}{\to} 0$,
which follows for essentially the same reasons that first stage noise does not affect the asymptotic distribution of the 2SLS estimator but requires some additional argument.  We refer the interested reader to the proof of Theorem \ref{rb_asym_efficiency_thm} in Appendix \ref{proof_append} for details.

\subsection{Robust Unbiased Estimation}\label{robustness_sec}

So far, all the unbiased estimators we have discussed required $\pi_i>0$ for all $i$.  Even when the first stage sign is dictated by theory, however, we may be concerned that this restriction may fail to hold exactly in a given empirical context.  To address such concerns, in this section we show that in over-identified 
models we can construct estimators which are robust to small violations of the sign restriction.  Our approach has the further benefit of ensuring asymptotic efficiency when, while $\|\pi\|\to\infty$, the elements $\pi_i$ may increase at different rates.

Let $M$ be a $k\times k$ invertible matrix such that all elements are strictly positive, and
\begin{align*}
\tilde\xi=(I_2\otimes M)\xi,&
&\tilde\Sigma=(I_2\otimes M)\Sigma(I_2\otimes M)',&
&\tilde W={M^{-1}}'W M^{-1}.
\end{align*}
The GMM estimator based on $\tilde\xi$ and $\tilde W$ is numerically equivalent to the GMM estimator based on $\xi$ and $W$.  In particular, for many choices of $W$, including all those discussed above, estimation based on $(\tilde\xi,\tilde W,\tilde\Sigma)$  is equivalent to estimation based on instruments $ZM^{-1}$ rather than $Z$.  

Note that for $\tilde\pi=M\pi$, $\tilde\xi$ is normally distributed with mean $(\tilde\pi'\beta,\tilde\pi')'$ and variance $\tilde \Sigma$.  
Thus, if we construct the estimator $\hat\beta_{RB}^*$ from $(\tilde\xi,\tilde W,\tilde\Sigma)$ instead of $(\xi,W,\Sigma),$ we obtain an unbiased estimator provided $\tilde\pi_i>0$ for all $i$. Since all elements of $M$ are strictly positive this is a strictly weaker condition than $\pi_i>0$ for all $i$.
By Theorem \ref{rb_asym_efficiency_thm}, $\hat\beta_{RB}^*$ constructed from  from $\tilde\xi$ and $\tilde W$ will be asymptotically efficient as $\|\tilde\pi\|\to\infty$ so long as $\tilde\pi=M \pi$ is nonnegative and satisfies $\|\tilde\pi\|/\min_i\tilde\pi_i=\mathcal{O}(1)$.
Note, however, that
$$\min_i\tilde\pi_i\ge (\min_{i,j} M_{ij})\|\pi\|
=(\min_{i,j} M_{ij})\frac{\|\pi\|}{\|M \pi\|} \|\tilde\pi\|
\ge (\min_{i,j} M_{ij})\left(\inf_{\|u\|=1}\frac{\|u\|}{\|M u\|}\right) \|\tilde\pi\|$$
so  $\|\tilde\pi\|/\min_i\tilde\pi_i=\mathcal{O}(1)$ now follows automatically from $\|\pi\|\to\infty$.

Conducting estimation based on $\tilde\xi$ and $\tilde W$ offers a number of advantages for many different choices of $M$.
One natural class of transformations $M$ is 
\begin{align}\label{eq: class of M}
M=\left[\begin{array}{ccccc}
1 & c & c & \cdots & c\\
c & 1 & c & \cdots & c\\
c & c & 1 & \cdots & c\\
\vdots & \vdots & \vdots & \ddots & \vdots\\
c & c & c & \cdots & 1
\end{array}\right] Diag(\Sigma_{22})^{-\frac{1}{2}}
\end{align}
for $c\in[0,1)$ and  $Diag(\Sigma_{22})$ the matrix with the same diagonal as $\Sigma_{22}$ and zeros elsewhere. For a given $c$, denote the estimator $\hat\beta_{RB}^*$ based on the corresponding $(\tilde\xi,\tilde W,\tilde\Sigma)$ by $\hat\beta_{RB,c}^*$.  One can show that $\hat\beta_{RB,0}^*=\hat\beta_{RB}^*$ based on $(\xi,W,\Sigma)$, and going forward we let $\hat\beta_{RB}^*$ denote $\hat\beta_{RB,0}^*$.

We can interpret $c$ as specifying a level of robustness to violations on the sign restriction for $\pi_i$.  In particular, for a given choice of $c$, $\tilde\pi$ will satisfy the sign restriction provided that for each $i$, 
\begin{align*}
-\pi_i/\sqrt{\Sigma_{22,ii}}<c\cdot\sum_{j\neq i}\pi_j/\sqrt{\Sigma_{22,jj}},
\end{align*}
that is, provided the expected z-statistic for testing that each wrong-signed $\pi_i$ is equal to zero is less than $c$ times the sum of the expected
z-statistics for $j\neq i$.  Larger values of $c$ provide a greater degree of robustness
to violations of the sign restriction, while 
all choices of $c\in(0,1)$ yield asymptotically equivalent estimators as $\|\pi\|\to\infty$.  For finite values of
$\pi$ however, different choices of $c$ yield different estimators, so we explore the effects of different choices below using the \citet{angrist_does_1991} dataset.  Determining the optimal choice of $c$ for finite values of $\pi$ is an interesting topic for future research.

\subsection{Bounds on the Attainable Risk}\label{risk_lower_bound_subsec}

While the class of estimators given above has the desirable property of asymptotic efficiency as $\|\pi\|\to\infty$, it is useful to have a benchmark for the performance for finite $\pi$.  In Appendix \ref{risk_lower_bound_sec}, we derive a lower bound for the risk of any unbiased estimator at a given $\pi^*,\beta^*$.  The bound is based on the risk in a submodel with a single instrument and, as in the single instrument case, shows that any unbiased estimator must have an infinite $1+\varepsilon$ absolute moment for $\varepsilon>0$.  In certain cases, which include large parts of the parameter space under homoskedastic errors $(U_t,V_t)$, the bound can be attained.  The estimator that attains the bound turns out to depend on the value $\pi^*$, which shows that no uniform minimum risk unbiased estimator exists. See Appendix \ref{risk_lower_bound_sec} for details.

\section{Simulations}
\label{sec: Simulation Results}
In this section we present simulation results on the performance of our unbiased estimators.
We study the model with normal errors and known reduced-form variance.
We first consider models with a single instrument and then turn to over-identified models.
Since the parameter space in the single-instrument model is small, we are able to obtain comprehensive
simulation results in this case, studying performance over a wide range of parameter values.
In the over-identified case, by contrast, the parameter space is too large to comprehensively explore by simulation so we instead
calibrate our simulations to the \citet{staiger_instrumental_1997} specifications for the \citet{angrist_does_1991} dataset.

\subsection{Performance with a Single Instrument}

The estimator $\hat{\beta}_{U}$ based on a single instrument plays
a central role in all of our results, so in this section we examine
the performance of this estimator in simulation. For purposes of comparison we also
discuss results for the two-stage least squares estimator $\hat{\beta}_{2SLS}.$
The lack of moments for $\hat\beta_{2SLS}$ in the just-identified
context renders some comparisons with $\hat{\beta}_{U}$ infeasible, however,
so we also consider the performance of the \citet{Fuller1977} estimator
with constant one, 
\[
\hat{\beta}_{FULL}=\frac{\xi_{2}\xi_{1}+\sigma_{12}}{\xi_{2}^{2}+\sigma_{2}^{2}}
\]
which we define as in \citet{MoreiraMillsVilela2014}.%
\footnote{In the case where $U_{t}$ and $V_{t}$ are correlated or heteroskedastic
across $t$, the definition of $\hat{\beta}_{FULL}$ above is the natural
extension of the definition considered in \citet{MoreiraMillsVilela2014}.%
} Note that in the just-identified case considered here $\hat{\beta}_{FULL}$
also coincides with the bias-corrected 2SLS estimator (again, see \citeauthor{MoreiraMillsVilela2014}).

While the model (\ref{eq: Sufficient Statistics Def}) has five parameters in the single-instrument
case, $\left(\beta,\pi,\sigma_{1}^{2},\sigma_{12},\sigma_{2}^{2}\right)$,
an equivariance argument implies that for our purposes it suffices
to fix $\beta=0$, $\sigma_{1}=\sigma_{2}=1$ and consider the parameter
space $\left(\pi,\sigma_{12}\right)\in(0,\infty)\times[0,1)$. See
Appendix \ref{equivariance_append} for details. Since this parameter space is just two-dimensional, we can fully explore it via simulation.

\subsubsection{Estimator Location}

We first compare the bias of $\hat{\beta}_{U}$ and $\hat{\beta}_{FULL}$
(we omit $\hat{\beta}_{2SLS}$ from this comparison, as it does not
have a mean in the just-identified case). We consider $\sigma_{12}\in\left\{ 0.1,0.5,0.95\right\} $
and examine a wide range of values for $\pi>0$.%
\footnote{We restrict attention to $\pi\ge 0.16$ in the bias plots.  
Since the first stage F-statistic is $F=\xi_2^2$ in the present context,
this corresponds to $E[F]\ge 1.026$.
The expectation of $\hat{\beta}_{U}$ ceases to exist at $\pi=0$,
and for $\pi$ close to zero the heavy tails of $\hat{\beta}_{U}$
make computing the expectation very difficult.  Indeed, we use numerical integration rather than
monte-carlo integration here because it allows us to consider smaller values $\pi$.  We thank an anonymous referee for this
suggestion.} These results are plotted in the first panel of Figure \ref{fig: univariate results}.

If rather than mean bias we instead consider median
bias, we find that $\hat{\beta}_{U}$ and $\hat{\beta}_{2SLS}$ generally
exhibit smaller median bias than $\hat{\beta}_{FULL}$. There is no
ordering between $\hat{\beta}_{U}$ and $\hat{\beta}_{2SLS}$ in terms
of median bias, however, as the median bias of $\hat{\beta}_{U}$
is smaller than that of $\hat{\beta}_{2SLS}$ for very small values
of $\pi$, while the median bias of $\hat{\beta}_{2SLS}$ is smaller
for larger values $\pi$.  A plot of median bias is given in Appendix \ref{median bias append}.

\subsubsection{Estimator Absolute Deviation}\label{sec: absolute deviation sims}

We examine the distribution of the absolute deviation of each estimator from the true parameter value.  The last three panels of Figure \ref{fig: univariate results} plot the 10th, 50th, and 90th percentiles of absolute deviation of the estimators considered from the true value $\beta$ for three values of $\sigma_{12}$.  We plot the log quantiles of absolute deviation (or equivalently the quantiles of log absolute deviation) for the sake of visibility.
Here, and in additional unreported simulation results, we find that $\hat\beta_U$ has smaller median absolute deviation than $\hat\beta_{IV}$ uniformly over the parameter space.  The $10$th and $90$th percentiles of the absolute deviation are also lower for $\hat\beta_{U}$ than $\hat\beta_{IV}$ for much of the parameter space, though we find that there is not a uniform ranking for all percentiles.
The Fuller estimator has low median absolute deviation over much of the parameter space, but performs worse than both $\hat\beta_{U}$ and $\hat\beta_{IV}$ in certain cases, such as when $\sigma_{12}=0.95$ and the first stage coefficient is small.
Turning to mean absolute deviation, we find that the mean absolute deviation of $\hat\beta_U$ from $\beta$ exceeds that of $\hat\beta_{FULL}$ except in cases with very high $\rho$ and small $\pi$, while as already noted the mean absolute deviation of $\hat\beta_{IV}$ is infinite.

Thus, over much of the parameter space the unbiased estimator is more concentrated around the true parameter value than the 2SLS estimator, according to a variety of different measures of concentration.  It would be interesting to decompose the deviations from the true parameter value into bias and variance components.  Unfortunately, however, the lack of second moments of both the 2SLS and unbiased estimators means that the variance is infinite in both cases and therefore does not yield a useful comparison.  To get around this, we consider the distribution of the absolute deviation of each estimator from the median of the estimator as a location free measure of dispersion.  In Appendix \ref{dispersion_simulation_append}, we examine this numerically and find a stochastic dominance relation in which the unbiased estimator is less dispersed than the 2SLS estimator and more dispersed than the Fuller estimator uniformly over the parameter space.

\begin{figure}
\includegraphics[scale=0.65]{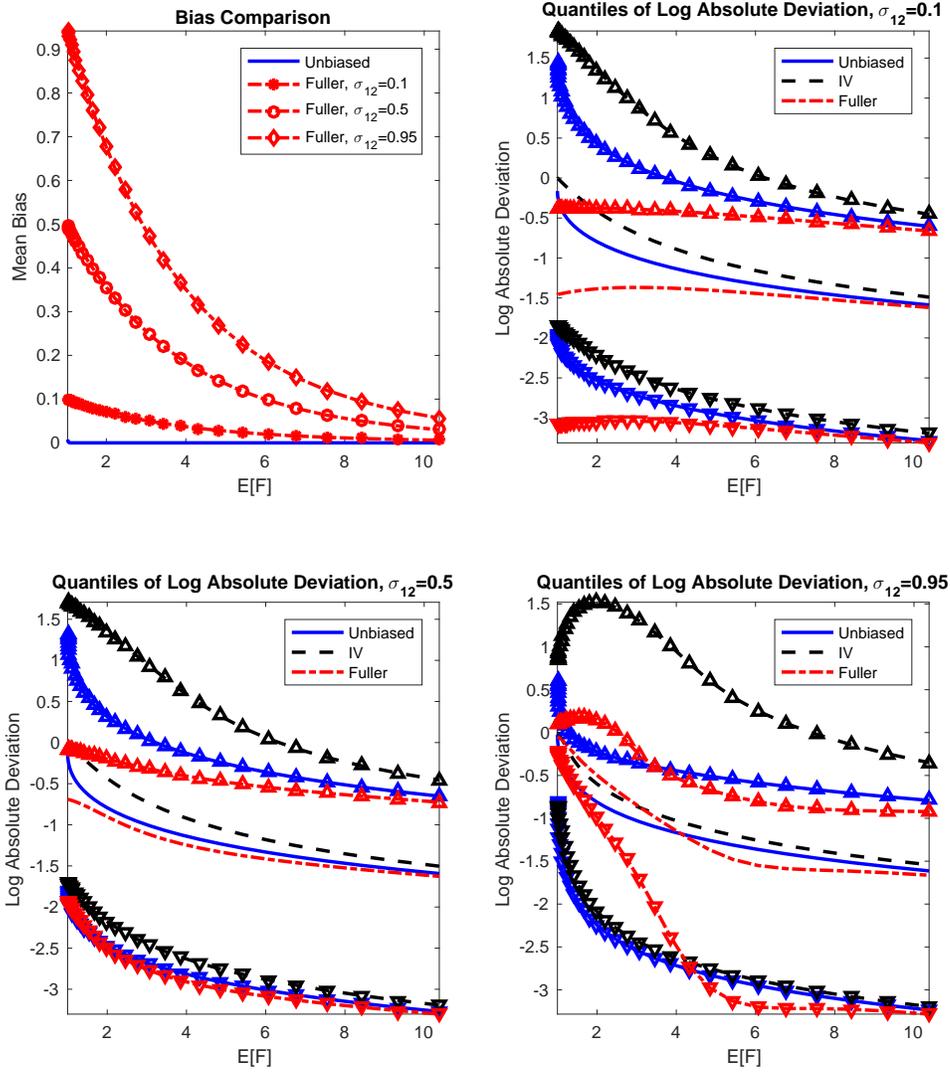}\protect\caption{The first panel plots the bias of single-instrument estimators, calculated by numerical integration, against the mean $E\left[F\right]$ of first-stage F-statistic. The remaining panels plot log quantiles of absolute deviation from the true value of $\beta$ for unbiased estimator, 2SLS, and Fuller, for three values of $\sigma_{12}$.  The lines corresponding to the median are plotted without markers, while the lines corresponding to the 90th and 10th percentiles are plotted with upward and downward pointing triangles, respectively.  The absolute deviation results are based on 10 million simulation draws.\label{fig: univariate results}}
\end{figure}

\subsection{Performance with Multiple Instruments}\label{sec: multi-instrument sims}

In models with multiple instruments, if we assume that
errors are homoskedastic an equivariance argument closely related
to that in just-identified case again allows us to reduce the dimension
of the parameter space.  Unlike in the just-identified case, however, the matrix $Z'Z$
and the direction of the first stage, $\pi/\|\pi\|$, continue
to matter (see Appendix \ref{equivariance_append} for details).  As a result, the parameter space is
too large to fully explore by simulation, so we instead calibrate our simulations to the \citet{staiger_instrumental_1997}
specifications for the 1930-1939
cohort in the \citet{angrist_does_1991} data. While there is statistically significant heteroskedasticity
in this data, this significance appears to be
the result of the large sample size rather than substantively important
deviations from homoskedasticity. In particular, procedures which assume homoskedasticity
produce very similar answers to heteroskedasticity-robust procedures when applied to this data.
Thus, given that homoskedasticity leads to a reduction of the parameter space as discussed above, we impose homoskedasticity in our simulations.

In each of the four \citet{staiger_instrumental_1997} specifications
we estimate $\pi/\|\pi\|$ and $Z'Z$ from the
data (ensuring, as discussed in  Appendix \ref{multi-inst sim design}, that $\pi/\|\pi\|$ satisfies the sign restriction).
After reducing the parameter space by equivariance and calibrating $Z'Z$ and $\pi/\|\pi\|$ to the data, the model
has two remaining free parameters: the norm of the first stage, $\|\pi\|$, and  the
correlation $\sigma_{UV}$ between the reduced-form and first-stage errors. We examine
behavior for a range of values for $\|\pi\|$ and for $\sigma_{UV}\in\left\{ 0.1,0.5,0.95\right\} .$
Further details on the simulation design are given in Appendix \ref{multi-inst sim design}.

For each parameter value we simulate the performance of $\hat{\beta}_{2SLS}$,
$\hat{\beta}_{FULL}$ (which is again the Fuller estimator
with constant equal to one), and $\hat{\beta}_{RB}^{*}$ as defined in Section \ref{sec: Homoskedastic Efficiency}.
We also consider the robust estimators $\hat\beta_{RB,c}^*$ discussed in Section \ref{robustness_sec}
for $c\in\left\{ 0.1,0.5,0.9\right\} $, but find that all three choices
produce very similar results and so focus on $c=0.5$ to simplify the graphs.\footnote{All results for the $RB$ estimators are based on $1,000$ draws of $\zeta$.} Even with a million simulation replications, simulation
estimates of the bias for the unbiased estimators (which we know to be zero from the results of Section \ref{multi_instrument_sec}) remain noisy relative
to e.g. the bias in 2SLS in some calibrations, so we do not plot the bias
estimates and instead focus on the mean absolute deviation (MAD) $E_{\pi,\beta}\left[\left|\hat{\beta}-\beta\right|\right]$
since, unlike in the just-identified case, the MAD
for 2SLS is now finite. We also plot the lower bound on the mean absolute
deviation of unbiased estimators discussed in Section \ref{risk_lower_bound_subsec}.
The results are plotting in Figure \ref{fig: Multivariate results}.

\begin{figure}
\includegraphics[scale=.65]{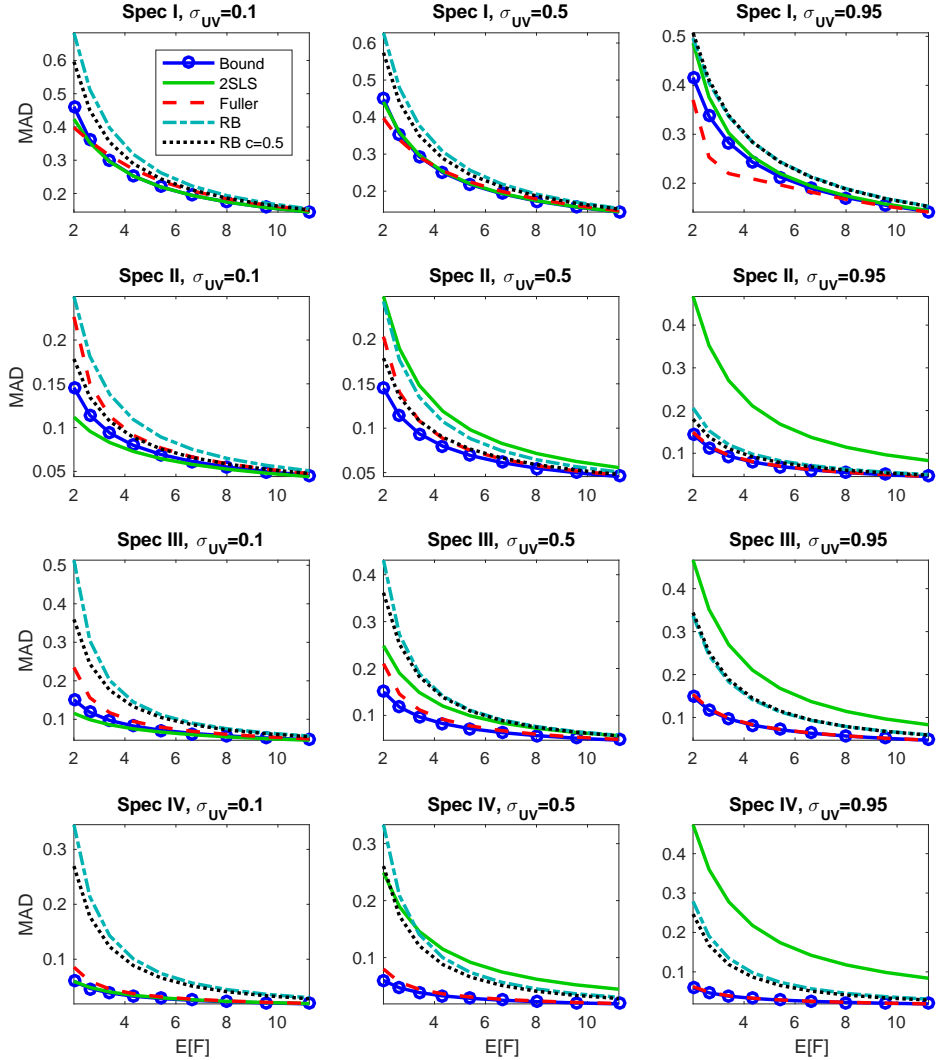}

\protect\caption{Mean absolute deviation of estimators in simulations calibrated to
specification I-IV of \citet{staiger_instrumental_1997}.  These specifications have $k=3,30,28,$ and $178$ instruments, respectively.  Results for specifications I-III are based on 1 million simulation draws, while results for specification IV are based on 100,000 simulation draws.\label{fig: Multivariate results}}
\end{figure}

Several features become clear from these results.  As expected, the performance of 2SLS is typically worse for models with more instruments or with a higher degree of correlation between the reduced-form and first-stage errors (i.e. higher $\sigma_{UV}$).  The robust unbiased estimator $\hat\beta_{RB,0.5}$
generally outperforms $\hat\beta_{RB}^*=\hat\beta_{RB,0}^*$.  Since the estimators with $c=0.1$ and $c=0.9$ perform very similarly to that with $c=0.5$, they outperform $\hat\beta_{RB}^*$ as well.  The gap in performance between the RB estimators and the lower bound on MAD over the class of all unbiased estimators is typically larger in specifications with more instruments.  Interestingly, we see that the Fuller estimator often performs quite well, and has MAD close to or below the lower bound for the class of unbiased estimators in most designs.  While this estimator is biased, its bias decreases quickly in $\|\pi\|$ in the designs considered.  Thus, at least in the homoskedastic case, this estimator seems a potentially appealing choice if we are willing to accept bias for small values of $\pi$.

\section{Empirical Applications}\label{sec: empirical application}

We calculate our proposed estimators in two empirical applications. First,
we consider the data and specifications used in \citet{Hornung2014} to
examine the effect of seventeenth century migrations on productivity. For our second application,
we study the \citet{staiger_instrumental_1997} specifications for the \citet{angrist_does_1991} dataset on
the relationship between education and labor market earnings.
Before continuing, we present a step-by-step description of the implementation of our estimators.

\subsection{Implementation}

To describe the implementation in a general setup,
we introduce additional notation
to explicitly allow for additional exogenous variables (such as a constant).  We have observations $t=1,\ldots,T$ with $\tilde Y_t$ a scalar outcome variable, $\tilde X_t$ a scalar endogenous variable, $\tilde Z_t$ a $k\times 1$ vector of instruments and $W_t$ a vector of additional control variables.
Let $\tilde Y=(\tilde Y_1,\ldots,\tilde Y_T)'$, $\tilde X=(\tilde X_1,\ldots,\tilde X_T)'$, $\tilde Z=(\tilde Z_1,\ldots,\tilde Z_T)'$ and $W=(W_1,\ldots,W_T)'$.
Let $Y=(I-W(W'W)^{-1}W')\tilde Y$, $X=(I-W(W'W)^{-1}W')\tilde X$ and $Z=(I-W(W'W)^{-1}W')\tilde Z$ denote the residuals from regressing $\tilde Y$, $\tilde X$ and $\tilde Z$ on $W$, as described in the introduction.

Our estimates are obtained using the following steps.
\begin{itemize}
\item[1.)] Let $\xi_1$ and $\xi_2$ denote the estimates of the coefficient on $\tilde Z_t$ in the regressions of $\tilde Y_t$ and $\tilde X_t$ respectively on $\tilde Z_t$ and $W_t$, and let $\hat U_t$ and $\hat V_t$ denote residuals from these regressions.  Let $\hat \Sigma$ denote an estimate of the variance-covariance matrix of $(\xi_1',\xi_2')'$.  If the observations are independent (but possibly heteroskedastic), we can use the heteroskedasticity robust estimate
\begin{align*}
(I_2\otimes (Z'Z)^{-1})
\left[\sum_{t=1}^T\left(\begin{array}{cc}
\hat U_t^2Z_tZ_t' & \hat U_t\hat V_tZ_tZ_t'  \\
\hat U_t\hat V_tZ_tZ_t' & \hat V_t^2Z_tZ_t'
\end{array}\right)
\right]
(I_2\otimes (Z'Z)^{-1}).
\end{align*}
We use this estimate in our application based on \citet{angrist_does_1991}, while for our application based on \citet{Hornung2014} we follow Hornung and use a clustering-robust variance estimator.  Likewise, in time-series contexts one could use a serial-correlation robust variance estimator, e.g. \citet{NeweyWest1987} here.

\item[2.)] In the case of a single instrument (so $Z_t$ is scalar), the estimate is given by $\hat\beta_U(\xi,\hat\Sigma)$ where $\hat\beta_U(\cdot,\cdot)$ is defined in Theorem \ref{Thm: Unbiased Estimator, Just-identified Case}.

\item[3.)] In the case with $k>1$ instruments,
let $\hat\Sigma_{22}$ denote the lower-right $k\times k$ submatrix of $\hat\Sigma$, and
let $M$ be the matrix given in (\ref{eq: class of M}) with $\Sigma_{22}$ replaced by $\hat\Sigma_{22}$ for some choice of $c$ between $0$ and $1$ (we find that $c=.5$ works well in our Monte Carlos).  Let $\tilde \xi=(I_2\otimes M)\xi$ and $\tilde \Sigma=(I_2\otimes M)\hat \Sigma (I_2\otimes M)'$.
Let $\tilde \Sigma(i)$ denote the $2\times 2$ symmetric matrix with diagonal elements given by the $i,i$ and $(k+i),(k+i)$ elements of $\tilde\Sigma$ respectively and off-diagonal element given by the $i,(k+i)$ element of $\tilde\Sigma$.
Generate $S$ independent $N(0,\tilde\Sigma)$ vectors $\zeta_1,\ldots,\zeta_S$.  Let
$\tilde\xi_1$ and $\zeta_{s,1}$ denote the $k\times 1$ vectors with elements $1$ through $k$ of $\tilde\xi$ and $\zeta_s$ respectively and let
$\tilde\xi_2$ and $\zeta_{s,2}$ denote the $k\times 1$ vectors with elements $k+1$ through $2k$ of $\tilde\xi$ and $\zeta_s$ respectively.
Let $\tilde\xi(i)=(\tilde\xi_{1,i},\tilde\xi_{2,i})'$ and let
$\tilde\zeta_s(i)=(\tilde\zeta_{s,1,i},\tilde\zeta_{s,2,i})'$.
Let
\begin{align*}
\hat\beta_{s}=\sum_{i=1}^k w_{i,s}\hat\beta_U(\tilde\xi(i)+\zeta_s(i),2\tilde\Sigma(i))
\end{align*}
where $\hat\beta_U(\cdot,\cdot)$ is defined in Theorem \ref{Thm: Unbiased Estimator, Just-identified Case} and
\begin{align*}
w_{i,s}=%
\frac{(\tilde\xi_2-\zeta_{s,2})'{M^{-1}}'(Z'Z)M^{-1}e_ie_i'(\tilde\xi_2-\zeta_{s,2})}{(\tilde\xi_2-\zeta_{s,2})'{M^{-1}}'(Z'Z)M^{-1}(\tilde\xi_2-\zeta_{s,2})}.
\end{align*}
The estimator is given by the average over $S$ simulation draws:
\begin{align*}
\hat\beta=\frac{1}{S}\sum_{i=1}^n \hat\beta_{s}.
\end{align*}
In our application, we use $S=100,000$ simulation draws.

\end{itemize}

\subsection{\citet{Hornung2014}}

\citet{Hornung2014} studies the long term impact of the flight of skilled
Huguenot refugees from France to Prussia in the seventeenth century.
He finds that regions of Prussia which received more Huguenot refugees
during the late seventeenth century had a higher level of productivity in
textile manufacturing at the start of the nineteenth century. To address
concerns over endogeneity in Huguenot settlement patterns and obtain
an estimate for the causal effect of skilled immigration on productivity,
\citet{Hornung2014} considers specifications which instrument Huguenot
immigration to a given region using population losses due to plague
at the end of the Thirty Years' War. For more information on the data
and motivation of the instrument,
see \citet{Hornung2014}.

Hornung's argument for the validity of his instrument clearly implies
that the first-stage effect should be positive, but the relationship
between the instrument and the endogenous regressors appears to be
fairly weak. In particular, the four IV specifications reported in
Tables 4 and 5 of \citet{Hornung2014} have first-stage F-statistics of 3.67,
4.79, 5.74, and 15.35, respectively. Thus, it seems that the conventional
normal approximation to the distribution of IV estimates may be unreliable
in this context. In each of the four main IV specifications considered
by Hornung, we compare 2SLS and Fuller (again with constant equal to one) to our estimator. Since
there is only a single instrument in this context, the model is just-identified
and the unbiased estimator is unique. In each specification we also
compute and report an identification-robust Anderson-Rubin confidence
set for the coefficient on the endogenous regressor. The results are
reported in Table \ref{tab:Hornung Results}.

\begin{sidewaystable}

\begin{tabular}{|c|c|c|c|c|c|c|c|c|}
\hline 
{\small{}Specification} & {\small{}Estimator} & {\small{}I} &  & {\small{}II} &  & {\small{}III} &  & {\small{}IV}\tabularnewline
\hline 
{\small{}$X:$ Percent Huguenots in 1700} & {\small{}2SLS} & {\small{}3.48} &  & {\small{}3.38} &  & {\small{}1.67} &  & \tabularnewline
\hline 
 & {\small{}Fuller} & {\small{}3.17} &  & {\small{}3.08} &  & {\small{}1.59} &  & \tabularnewline
\hline 
 & {\small{}Unbiased} & {\small{}3.24} &  & {\small{}3.14} &  & {\small{}1.61} &  & \tabularnewline
\hline 
{\small{}$X:$ log Huguenots in 1700} & {\small{}2SLS} &  &  &  &  &  &  & {\small{}0.07}\tabularnewline
\hline 
 & {\small{}Fuller} &  &  &  &  &  &  & {\small{}0.07}\tabularnewline
\hline 
 & {\small{}Unbiased} &  &  &  &  &  &  & {\small{}0.07}\tabularnewline
\hline 
{\small{}95\% AR Confidence Set} &  & {\small{}(-$\infty$,59.23{]}$\cup${[}1.55,$\infty$)} &  & {\small{}{[}1.64,19.12{]}} &  & {[}-0.45,5.93{]} &  & {\small{}{[}-0.01,0.16{]}}\tabularnewline
\hline 
{\small{}Other controls} &  & {\small{}Yes} &  & {\small{}Yes} &  & {\small{}Yes} &  & {\small{}Yes}\tabularnewline
\hline 
{\small{}Observations} &  & {\small{}150} &  & {\small{}150} &  & {\small{}186} &  & {\small{}186}\tabularnewline
\hline 
{\small{}Number of Towns} &  & {\small{}57} &  & {\small{}57} &  & {\small{}71} &  & {\small{}71}\tabularnewline
\hline 
{\small{}First Stage F-Statistic} &  & {\small{}3.67} &  & {\small{}4.79} &  & {\small{}5.74} &  & {\small{}15.35}\tabularnewline
\hline 
\end{tabular}\protect\caption{Results in \citet{Hornung2014} data. Specifications in columns I and II correspond
to Table 4 columns (3) and (5) in \citet{Hornung2014}, respectively, while
columns III and IV correspond to Table 5 columns (3) and (6) in \citet{Hornung2014}.
$Y=$log output, $X$ as indicated, and $Z=$unadjusted
population losses in I, interpolated population losses in II, and
population losses averaged over several data sources in III
and IV.  See \citet{Hornung2014}. The 2SLS  and Fuller rows report two stage least squares and Fuller estimates, respectively,
while Unbiased reports $\hat\beta_U$. Other controls
include a constant, a dummy for whether a town had relevant textile
production in 1685, measurable inputs to the production process, and
others as in \citet{Hornung2014}. As in \citet{Hornung2014}, all covariance estimates
are clustered at the town level.  Note that the unbiased and Fuller estimates, as well as the AR confidence sets, have been updated to correct an error in the March 22, 2015 version of the present paper. \label{tab:Hornung Results} }
\end{sidewaystable}

As we can see from Table \ref{tab:Hornung Results}, our unbiased
estimates in specifications I-III are smaller than the 2SLS estimates computed
in \citet{Hornung2014} (the unbiased estimate is smaller in specification IV as well, though the difference only appears in the fourth decimal place).
  Fuller estimates are, in turn, smaller than our unbiased estimates.
Nonetheless, difference between the 2SLS and unbiased estimates is less than half of the 2SLS standard error
in every specification.  In specifications I-III, where the instruments are relatively weak, the 95\%
AR confidence sets are substantially wider than 95\% confidence sets calculated using 2SLS standard errors, while in
specification IV the AR confidence set is fairly similar to the conventional 2SLS confidence set.

\subsection{\citet{angrist_does_1991}}

\citet{angrist_does_1991} are interested in the relationship between education and
labor market earnings. They argue that students born later in the
calendar year face a longer period of compulsory schooling than
those born earlier in the calendar year, and that quarter of
birth is a valid instrument for years of schooling. As we
note above their argument implies that the sign of the first-stage
effect is known. A substantial literature, beginning  with \citet{BoundJaegerBaker1995},
notes that the relationship between the instruments and the endogenous
regressor appears to be quite weak in some specifications considered
in \citet{angrist_does_1991}. Here we consider four specifications
from \citet{staiger_instrumental_1997}, based on the 1930-1939 cohort.
See \citet{angrist_does_1991}
and \citet{staiger_instrumental_1997} for more on the data
and specification.

We calculate unbiased estimators $\hat\beta_{RB}^*$, $\hat\beta_{RB,0.1}^*$, $\hat\beta_{RB,0.5}^*$, and $\hat\beta_{RB,0.9}^*$.
In all cases we take $W=Z'Z$.
To calculate confidence sets we use the 
quasi-CLR (or GMM-M) test of \citet{Kleibergen2005}, which simplifies to the CLR test of \citet{Moreira2003} under homoskedasticity 
and so delivers nearly-optimal confidence sets in that case
\citep[see][]{Mikusheva2010}.  Thus, since as discussed above the data in this application appears reasonably close to homoskedasticity, we may 
reasonably expect the quasi-CLR confidence set to perform well.
All results are reported in Table \ref{tab:Angrist Krueger Results}.
\begin{table}
\begin{tabular}{|c|c|c|c|c|c|c|c|}
\hline 
Specification & \multicolumn{1}{c|}{I} &  & \multicolumn{1}{c|}{II} &  & \multicolumn{1}{c|}{III} &  & \multicolumn{1}{c|}{IV}\tabularnewline
\hline 
 & $\hat{\beta}$ &  & $\hat{\beta}$ &  & $\hat{\beta}$ &  & $\hat{\beta}$\tabularnewline
\hline 
\hline 
2SLS & 0.099 &  & 0.081 &  & 0.060 &  & 0.081\tabularnewline
\hline 
Fuller & 0.100 &  & 0.084 &  & 0.058 &  & 0.098\tabularnewline
\hline 
LIML & 0.100 &  & 0.084 &  & 0.057 &  & 0.098\tabularnewline
\hline 
$\hat{\beta}_{RB}^*,$& 0.097 &  & 0.085 &  & -0.041 &  & 0.056\tabularnewline
\hline 
$\hat{\beta}_{RB},$ $c=0.1$ & 0.098 &  & 0.083 &  & 0.135 &  & 0.066\tabularnewline
\hline 
$\hat{\beta}_{RB},$ $c=0.5$ & 0.098 &  & 0.083 &  & 0.135 &  & 0.066\tabularnewline
\hline 
$\hat{\beta}_{RB},$ $c=0.9$ & 0.098 &  & 0.083 &  & 0.135 &  & 0.066\tabularnewline
\hline 
First Stage F & \multicolumn{1}{c|}{30.582} &  & \multicolumn{1}{c|}{4.625} &  & \multicolumn{1}{c|}{1.579} &  & \multicolumn{1}{c|}{1.823}\tabularnewline
\hline 
QCLR CS & \multicolumn{1}{c|}{{[}0.059,0.144{]}} &  & \multicolumn{1}{c|}{{[}0.046,0.127{]}} &  & \multicolumn{1}{c|}{{[}-0.588,0.668{]}} &  & \multicolumn{1}{c|}{{[}0.056,0.150{]}}\tabularnewline
\hline 
\emph{Controls} & \multicolumn{1}{c|}{} &  & \multicolumn{1}{c|}{} &  & \multicolumn{1}{c|}{} &  & \multicolumn{1}{c|}{}\tabularnewline
\hline 
Base Controls & \multicolumn{1}{c|}{Yes} &  & \multicolumn{1}{c|}{Yes} &  & \multicolumn{1}{c|}{Yes} &  & \multicolumn{1}{c|}{Yes}\tabularnewline
\hline 
Age, $\mbox{Age}^{2}$ & \multicolumn{1}{c|}{No} &  & \multicolumn{1}{c|}{No} &  & \multicolumn{1}{c|}{Yes} &  & \multicolumn{1}{c|}{Yes}\tabularnewline
\hline 
SOB & \multicolumn{1}{c|}{No} &  & \multicolumn{1}{c|}{No} &  & \multicolumn{1}{c|}{No} &  & \multicolumn{1}{c|}{Yes}\tabularnewline
\hline 
\emph{Instruments} & \multicolumn{1}{c|}{} &  & \multicolumn{1}{c|}{} &  & \multicolumn{1}{c|}{} &  & \multicolumn{1}{c|}{}\tabularnewline
\hline 
QOB & \multicolumn{1}{c|}{Yes} &  & \multicolumn{1}{c|}{Yes} &  & \multicolumn{1}{c|}{Yes} &  & \multicolumn{1}{c|}{Yes}\tabularnewline
\hline 
QOB{*}YOB & \multicolumn{1}{c|}{No} &  & \multicolumn{1}{c|}{Yes} &  & \multicolumn{1}{c|}{Yes} &  & \multicolumn{1}{c|}{Yes}\tabularnewline
\hline 
QOB{*}SOB & \multicolumn{1}{c|}{No} &  & \multicolumn{1}{c|}{No} &  & \multicolumn{1}{c|}{No} &  & \multicolumn{1}{c|}{Yes}\tabularnewline
\hline 
\# instruments & \multicolumn{1}{c|}{3} &  & \multicolumn{1}{c|}{30} &  & \multicolumn{1}{c|}{28} &  & \multicolumn{1}{c|}{178}\tabularnewline
\hline 
Observations & \multicolumn{1}{c|}{329,509} &  & \multicolumn{1}{c|}{329,509} &  & \multicolumn{1}{c|}{329,509} &  & \multicolumn{1}{c|}{329,509}\tabularnewline
\hline 
\end{tabular}.\protect\caption{\label{tab:Angrist Krueger Results}Results for \citet{angrist_does_1991}
data. Specifications as in \citet{staiger_instrumental_1997}: $Y$ =log weekly
wages, $X$=years of schooling, instruments $Z$ and exogenous controls
as indicated. QCLR is the is the quasi-CLR (or GMM-M) confidence set of \citet{Kleibergen2005}.  Unbiased estimators calculated by averaging over 100,000 draws of $\zeta$.}
\end{table}

A few points are notable from these results.  First, we see that in specifications I and II, which have the largest first stage F-statistics, the unbiased estimates are quite close to the other point estimates.  Moreover, in these specifications the choice of $c$ makes little difference.  By contrast, in specification III, where the instruments appear to be quite weak, the unbiased estimates differ substantially, with $\hat\beta_{RB}^*$ yielding a negative point estimate and $\hat\beta_{RB,c}^*$ for $c\in\{0.1,0.5,0.9\}$ yielding positive estimates substantially larger than the other estimators considered.\footnote{All unbiased estimates are calculated by averaging over 100,000 draws of $\zeta$.  For all estimates except $\hat\beta_{RB}^*$ in specification III, the residual randomness is small.  For $\hat\beta_{RB}^*$  in specification III, however, redrawing $\zeta$ yields substantially different point estimates.  This issue persists even if we increase the number of $\zeta$ draws to 1,000,000.}  A similar, though less pronounced, version of this phenomenon arises in specification IV, where unbiased estimates are smaller than those based on conventional methods and $\hat\beta_{RB}^*$ is almost 20\% smaller than estimates based on other choices of $c$.

As in the simulations there is very little difference between the estimates for $c\in\{0.1,0.5,0.9\}$.  In particular, while not exactly the same, the estimates coincide once rounded to three decimal places in all specifications.  Given that these estimators are more robust to violations of the sign restriction than that with $c=0$, we think it makes more sense to focus on these estimates.

\section{Conclusion}

In this paper, we show that a sign restriction on the first stage suffices to allow finite-sample unbiased estimation in linear IV models with normal errors and known reduced-form error covariance.  Our results suggest several avenues for further research.  First, while the focus of this paper is on estimation, recent work by 
\citet{MoreiraMillsVilela2014} finds good power for particular identification-robust conditional t-tests, suggesting that it may be interesting to consider tests based on our unbiased estimators, particularly in over-identifed contexts where the Anderson-Rubin test is no longer uniformly most powerful unbiased.  More broadly, it may be interesting to study other ways to use the knowledge of the first stage sign, both for testing and estimation purposes.

\bibliographystyle{agsm}
\bibliography{References}

\newpage{}

\noindent \begin{center}
{\large{}Appendix to the paper}
\par\end{center}{\large \par}

\noindent \begin{center}
{\LARGE{}Unbiased Instrumental Variables Estimation Under Known First-Stage Sign}
\par\end{center}{\LARGE \par}

\medskip{}

\begin{center}
{\large{} ~~~~~~~Isaiah Andrews  ~~~~~~~~~~~~~~~  Timothy B. Armstrong}
\par\end{center}{\large \par}

\noindent \begin{center}
\today
\par\end{center}

\appendix

This appendix contains proofs and additional results for the paper ``Unbiased Instrumental Variables Estimation Under Known First-Stage Sign.''
Appendix \ref{proof_append} gives proofs for results stated in the main text.  Appendix \ref{estimated_variance_sec} derives asymptotic results for models with non-normal errors and an unknown reduced-form error variance.  Appendix \ref{hp_append} relates our results to those of \citet{HiranoPoter2015}.  Appendix \ref{risk_lower_bound_sec} derives a lower bound on the risk of unbiased estimators in over-identified models, discusses cases in which the bound in attained, and proves that there is no uniformly minimum risk unbiased estimator in such models.  Appendix \ref{just-id sim append} gives additional simulation results for the just-identified case, while Appendix \ref{multi-inst sim design} details our simulation design for the over-identified case.

\section{Proofs}\label{proof_append}

This appendix contains proofs of the results in the main text.
The notation is the same as in the main text.

\subsection{Single Instrument Case}

This section proves the results from Section \ref{single_instrument_sec}, which treats the single instrument case ($k=1$).  We prove Lemma \ref{Lemma:Voinov Nikulin} and Theorems \ref{Thm: Unbiased Estimator, Just-identified Case}, \ref{betau_second_moment_thm} and \ref{Thm: Limiting Distribution, Just-identified Case}.

We first prove Lemma \ref{Lemma:Voinov Nikulin}, which shows unbiasedness of $\hat\tau$ for $1/\pi$.
As discussed in the main text, this result is known in the literature \citep[see, e.g., pp. 181-182 of ][]{VoinovNikulin1993}.  We give a constructive proof based on elementary calculus (\citeauthor{VoinovNikulin1993} provide a derivation based on the bilateral Laplace transform).

\begin{proof}[Proof of Lemma \ref{Lemma:Voinov Nikulin}]
Since $\xi_2/\sigma_2\sim N(\pi/\sigma_2,1)$, we have
\begin{align*}
&E_{\pi,\beta}\hat\tau(\xi_2,\sigma_2^2)
=\frac{1}{\sigma_2}\int \frac{1-\Phi(x)}{\phi(x)}\phi(x-\pi/\sigma_2)\, dx
=\frac{1}{\sigma_2}\int (1-\Phi(x))\exp\left((\pi/\sigma_2)x-(\pi/\sigma_2)^2/2\right)\, dx  \\
&=\frac{1}{\sigma_2}\exp(-(\pi/\sigma_2)^2/2)
\left\{\left[(1-\Phi(x))(\sigma_2/\pi)\exp((\pi/\sigma_2)x)\right]_{x=-\infty}^\infty
+\int (\sigma_2/\pi)\exp((\pi/\sigma_2)x)\phi(x)\, dx
\right\},
\end{align*}
using integration by parts to obtain the last equality.  Since the first term in brackets in the last line is zero, this is equal to
\begin{align*}
&\frac{1}{\sigma_2}\int (\sigma_2/\pi)\exp((\pi/\sigma_2)x-(\pi/\sigma_2)^2/2)\phi(x)\, dx
=\frac{1}{\pi}\int \phi(x-\pi/\sigma_2)\, dx=\frac{1}{\pi}.
\end{align*}

\end{proof}

We note that $\hat\tau$ has an infinite $1+\varepsilon$ moment for $\varepsilon>0$.

\begin{lemma}\label{tau_second_moment_lemma}
The expectation of $\hat\tau(\xi_2,\sigma_2^2)^{1+\varepsilon}$ is infinite for all $\pi$ and $\varepsilon>0$.
\end{lemma}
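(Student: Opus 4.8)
The plan is to exploit the heavy left tail of $\hat\tau$. Although $\hat\tau$ is integrable---indeed unbiased for $1/\pi$ by Lemma \ref{Lemma:Voinov Nikulin}---raising it to a power exceeding one amplifies its behavior as $\xi_2\to-\infty$, where the Mills-ratio factor $(1-\Phi(\xi_2/\sigma_2))/\phi(\xi_2/\sigma_2)$ grows like $e^{(\xi_2/\sigma_2)^2/2}$ and eventually overwhelms the Gaussian density of $\xi_2$. The whole argument is a one-dimensional tail comparison.

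First I would reduce the expectation to a single integral. Writing $x=\xi_2/\sigma_2$ and $\mu=\pi/\sigma_2$, so that $x\sim N(\mu,1)$, we have $\hat\tau=\sigma_2^{-1}g(x)$ with $g(x)=(1-\Phi(x))/\phi(x)$ the Mills ratio, and hence
\[
E_\pi\!\left[\hat\tau^{1+\varepsilon}\right]=\sigma_2^{-(1+\varepsilon)}\int_{-\infty}^{\infty}g(x)^{1+\varepsilon}\,\phi(x-\mu)\,dx.
\]
Next I would bound the integrand from below on the half-line $x\le 0$. There $1-\Phi(x)\ge 1/2$, so $g(x)\ge 1/(2\phi(x))$ and therefore $g(x)^{1+\varepsilon}\ge 2^{-(1+\varepsilon)}\phi(x)^{-(1+\varepsilon)}$. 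Substituting the explicit Gaussian forms of $\phi(x-\mu)$ and $\phi(x)$ then gives, with $C=2^{-(1+\varepsilon)}(2\pi)^{\varepsilon/2}>0$,
\[
g(x)^{1+\varepsilon}\phi(x-\mu)\ge C\exp\!\left(\tfrac{1}{2}\left(\varepsilon x^2+2\mu x-\mu^2\right)\right)\qquad\text{for }x\le 0.
\]
Since the quadratic $\varepsilon x^2+2\mu x-\mu^2$ tends to $+\infty$ as $x\to-\infty$ (the term $\varepsilon x^2$ dominates for every fixed $\mu$ whenever $\varepsilon>0$), the integrand diverges to $+\infty$ along the left tail, so $\int_{-\infty}^{0}g(x)^{1+\varepsilon}\phi(x-\mu)\,dx=\infty$ and the claim follows.

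There is no substantive obstacle here; the only point requiring care is the bookkeeping of exponents in the final comparison, so as to confirm that the exponential blow-up of $\hat\tau^{1+\varepsilon}$ strictly dominates the Gaussian decay of the density precisely when $\varepsilon>0$. At $\varepsilon=0$ the quadratic coefficient vanishes and integrability is restored, consistent with Lemma \ref{Lemma:Voinov Nikulin}. The conclusion is moreover uniform in $\pi$, since $\mu$ enters only through the linear term $2\mu x$, which is negligible against $\varepsilon x^2$ as $x\to-\infty$.
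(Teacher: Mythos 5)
Your proof is correct and follows essentially the same route as the paper: both reduce the expectation to a one-dimensional integral over $x=\xi_2/\sigma_2$, bound the Mills ratio below by $1/(2\phi(x))$ on the negative half-line via $1-\Phi(x)\ge 1/2$, and conclude that the resulting lower bound, a constant times $\exp\left(\varepsilon x^2/2+(\pi/\sigma_2)x\right)$, makes the left-tail integral diverge. The only difference is cosmetic bookkeeping (you keep $\phi(x-\mu)$ explicit while the paper tilts it into $\phi(x)\exp(\mu x-\mu^2/2)$ first), so nothing further is needed.
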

\begin{proof}
By similar calculations to those in the proof of Lemma \ref{Lemma:Voinov Nikulin},
\begin{align*}
E_{\pi,\beta}\hat\tau(\xi_2,\sigma_2^2)^{1+\varepsilon}
=\frac{1}{\sigma_2^{1+\varepsilon}}\int \frac{(1-\Phi(x))^{1+\varepsilon}}{\phi(x)^\varepsilon}\exp\left((\pi/\sigma_2)x-(\pi/\sigma_2)^2/2\right)\, dx.
\end{align*}
For $x<0$, $1-\Phi(x)\ge 1/2$, so the integrand is bounded from below by a constant times
$\exp(\varepsilon x^2/2+(\pi/\sigma_2)x)$, which is bounded away from zero as $x\to-\infty$.

\end{proof}

\begin{proof}[Proof of Theorem \ref{Thm: Unbiased Estimator, Just-identified Case}]
To establish unbiasedness, note that since $\xi_2$ and $\xi_1-\frac{\sigma_{12}}{\sigma_2^2}\xi_2$ are jointly normal with zero covariance, they are independent.  Thus,
\begin{align*}
E_{\pi,\beta}\hat\beta_U(\xi,\Sigma)
=\left(E_{\pi,\beta}\hat\tau\right)\left[E_{\pi,\beta}\left(\xi_1-\frac{\sigma_{12}}{\sigma_2^2}\xi_2\right)\right]
+\frac{\sigma_{12}}{\sigma_2^2}
=\frac{1}{\pi}\left(\pi\beta-\frac{\sigma_{12}}{\sigma_{22}}\pi\right)+\frac{\sigma_{12}}{\sigma_{22}}
=\beta
\end{align*}
since $E_{\pi,\beta}\hat\tau=1/\pi$ by Lemma \ref{Lemma:Voinov Nikulin}.

To establish uniqueness,  consider any unbiased estimator $\hat{\beta}\left(\xi,\Sigma\right)$.
By unbiasedness
\begin{align*}
E_{\pi,\beta}\left[\hat{\beta}\left(\xi,\Sigma\right)-\hat\beta_U(\xi,\Sigma)\right]=0~\forall\beta\in B,\pi\in\Pi.
\end{align*}
The parameter space contains an open set by assumption, so by Theorem 4.3.1 of \citet{lehmann_testing_2005}
the family of distributions of $\xi$ under $(\pi,\beta)\in\Theta$ is complete.  Thus $\hat{\beta}\left(\xi,\Sigma\right)-\hat\beta_U(\xi,\Sigma)=0$
almost surely for all $(\pi,\beta)\in\Theta$ by the definition of completeness.

\end{proof}

\begin{proof}[Proof of Theorem \ref{betau_second_moment_thm}]
If $E_{\pi,\beta}\left|\hat\beta_U(\xi,\Sigma)\right|^{1+\varepsilon}$ were finite, then $E_{\pi,\beta}\left|\hat\beta_U(\xi,\Sigma)-\sigma_{12}/\sigma_2^2\right|^{1+\varepsilon}$ would be finite as well by Minkowski's inequality.  But
\begin{align*}
E_{\pi,\beta}\left|\hat\beta_U(\xi,\Sigma)-\sigma_{12}/\sigma_2^2\right|^{1+\varepsilon}
=E_{\pi,\beta}\left|\hat{\tau}\left(\xi_{2},\sigma_{2}^{2}\right)\right|^{1+\varepsilon}
E_{\pi,\beta}\left|\xi_1-\frac{\sigma_{12}}{\sigma_{22}^2}\xi_2\right|^{1+\varepsilon},
\end{align*}
and the second term is nonzero since $\Sigma$ is positive definite.  Thus, the $1+\varepsilon$ absolute moment is infinite by Lemma \ref{tau_second_moment_lemma}.  The claim that any unbiased estimator has infinite $1+\varepsilon$ moment follows from Rao-Blackwell: since $\hat\beta_U(\xi,\Sigma)=E\left[\tilde\beta(\xi,\Sigma)|\xi\right]$ for any unbiased estimator $\tilde\beta$ by the uniqueness of the non-randomized unbiased estimator based on $\xi$, Jensen's inequality implies that the $1+\varepsilon$ moment of $|\tilde\beta|$ is bounded from below by the (infinite) $1+\varepsilon$ moment of $|\hat\beta_U|$.
\end{proof}

We now consider the behavior of $\hat\beta_U$ relative to the usual 2SLS estimator (which, in the single instrument case considered here, is given by $\hat\beta_{2SLS}=\xi_1/\xi_2$) as $\pi\to\infty$.

\begin{proof}[Proof of Theorem \ref{Thm: Limiting Distribution, Just-identified Case}]
Note that
\begin{align*}
&\hat\beta_U-\hat\beta_{2SLS}
=\left(\hat\tau(\xi_2,\sigma_2^2)-\frac{1}{\xi_2}\right)\left(\xi_1-\frac{\sigma_{12}}{\sigma_2^2}\xi_2\right)
=\left(\xi_2\hat\tau(\xi_2,\sigma_2^2)-1\right)\left(\frac{\xi_1}{\xi_2}-\frac{\sigma_{12}}{\sigma_2^2}\right).
\end{align*}
As $\pi\to\infty$, $\xi_1/\xi_2=\hat\beta_{2SLS}=\mathcal{O}_P(1)$, so it suffices to show that $\pi\left(\xi_2\hat\tau(\xi_2,\sigma_2^2)-1\right)=o_P(1)$ as $\pi\to\infty$.  Note that, by Section 2.3.4 of \citet{Small2010},
\begin{align*}
\pi\left|\xi_2\hat\tau(\xi_2,\sigma_2^2)-1\right|
=\pi\left|\frac{\xi_2}{\sigma_2}\frac{1-\Phi(\xi_2/\sigma_2)}{\phi(\xi_2/\sigma_2)}-1\right|
\le \pi\frac{\sigma_2^2}{\xi_2^2}
= \frac{\pi}{\xi_2}\frac{\sigma_2^2}{\xi_2}.
\end{align*}
This converges in probability to zero since $\pi/\xi_2\stackrel{p}{\to} 1$ and $\frac{\sigma_2^2}{\xi_2}\stackrel{p}{\to} 0$ as $\pi\to\infty$.
\end{proof}

The following lemma regarding the mean absolute deviation of $\hat\beta_U$ will be useful in the next section treating the case with multiple instruments.

\begin{lemma}\label{mad_bound_lemma}
For a constant $K(\beta,\Sigma)$ depending only on $\Sigma$ and $\beta$ (but not on $\pi$),
\begin{align*}
\pi E_{\pi,\beta}\left|\hat\beta_U(\xi,\Sigma)-\beta\right|\le K(\beta,\Sigma).
\end{align*}
\end{lemma}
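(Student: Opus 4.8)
The plan is to reduce the claim to a single uniform-in-$\pi$ bound on $\pi E_{\pi,\beta}\left|\pi\hat\tau-1\right|$, and then to prove that bound by a case analysis that exploits the Mills-ratio structure of $\hat\tau$. First I would set $c=\beta-\sigma_{12}/\sigma_2^2$, so that $\hat\beta_U-\beta=\hat\tau\hat\delta-c$ with $\hat\delta=\xi_1-\frac{\sigma_{12}}{\sigma_2^2}\xi_2$. Exactly as in the proof of Theorem \ref{Thm: Unbiased Estimator, Just-identified Case}, $\hat\delta$ is independent of $\xi_2$ (hence of $\hat\tau$), is normal with mean $\pi c$, and has variance $\sigma_\delta^2=\sigma_1^2-\sigma_{12}^2/\sigma_2^2$ that does not depend on $\pi$. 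Conditioning on $\xi_2$, the quantity $\hat\tau\hat\delta-c$ is then normal with mean $(\pi\hat\tau-1)c$ and standard deviation $\hat\tau\sigma_\delta$, so the elementary bound $E|N(\mu,s^2)|\le|\mu|+sE|N(0,1)|$, followed by taking expectations over $\xi_2$ (using $\hat\tau\ge0$ and $E_{\pi,\beta}\hat\tau=1/\pi$ from Lemma \ref{Lemma:Voinov Nikulin}), gives $\pi E_{\pi,\beta}|\hat\beta_U-\beta|\le|c|\,\pi E_{\pi,\beta}|\pi\hat\tau-1|+E|N(0,1)|\,\sigma_\delta$. The last term is a constant depending only on $\Sigma$, so everything reduces to bounding $\pi E_{\pi,\beta}|\pi\hat\tau-1|$ uniformly in $\pi>0$.

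This reduction is where the difficulty concentrates. The variable $\pi\hat\tau-1$ is mean zero (since $\pi E\hat\tau=1$), which suggests bounding its first absolute moment by a standard deviation; but by Lemma \ref{tau_second_moment_lemma} $\hat\tau$ has infinite second moment, so no $L^2$/Cauchy--Schwarz argument is available and the heavy left tail of $\xi_2$ must be handled directly. I would dispose of small $\pi$ first: for $\pi$ in a bounded range (say $\pi\le\sigma_2$) the crude bound $|\pi\hat\tau-1|\le\pi\hat\tau+1$ together with $E_{\pi,\beta}[\pi\hat\tau+1]=2$ yields $\pi E_{\pi,\beta}|\pi\hat\tau-1|\le2\pi$, which is bounded.

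For large $\pi$ I would split the range of $\xi_2$ into $\{\xi_2\le0\}$, $\{0<\xi_2\le\pi/2\}$, and $\{\xi_2>\pi/2\}$. On the bulk region $\{\xi_2>\pi/2\}$ I would use the identity $\pi\hat\tau-1=\frac{\pi-\xi_2}{\xi_2}-\frac{\pi}{\xi_2}(1-\xi_2\hat\tau)$ together with the bound $|1-\xi_2\hat\tau|\le\sigma_2^2/\xi_2^2$ (the Small (2010) estimate already invoked in the proof of Theorem \ref{Thm: Limiting Distribution, Just-identified Case}); since $\xi_2>\pi/2$ forces $\pi/\xi_2<2$, this produces $\pi|\pi\hat\tau-1|\le2|\xi_2-\pi|+8\sigma_2^2/\pi$, whose expectation is $2\sigma_2 E|N(0,1)|+8\sigma_2^2/\pi$ and hence bounded. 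The two remaining regions contribute only exponentially small amounts: on each I bound $|\pi\hat\tau-1|\le\pi\hat\tau+1$ and estimate $\pi^2E_{\pi,\beta}[\hat\tau\,1\{\cdot\}]$ and $\pi\,\mathrm{Pr}(\cdot)$ separately, using that the Mills ratio is decreasing (so $\hat\tau\le\hat\tau(0)$ whenever $\xi_2\ge0$) and the elementary density identity $\phi(z-m)/\phi(z)=e^{mz-m^2/2}$ with $m=\pi/\sigma_2$, which gives $E_{\pi,\beta}[\hat\tau\,1\{\xi_2\le0\}]\le e^{-m^2/2}/\pi$; the tail bound $1-\Phi(m)\le\phi(m)/m$ then controls the probabilities. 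Each resulting piece is a polynomial in $\pi$ times a Gaussian factor $e^{-c\pi^2}$ and is therefore bounded over $\pi>0$.

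Collecting the four estimates gives $\sup_{\pi>0}\pi E_{\pi,\beta}|\pi\hat\tau-1|\le C(\sigma_2)<\infty$, and substituting into the reduction from the first paragraph yields $\pi E_{\pi,\beta}|\hat\beta_U-\beta|\le|c|\,C(\sigma_2)+E|N(0,1)|\,\sigma_\delta=:K(\beta,\Sigma)$, which depends only on $\beta$ and $\Sigma$. I expect the only delicate part to be the bookkeeping in the three-region estimate for large $\pi$, in particular extracting clean constants from the normal tail bound; the conceptual work is entirely in the conditioning reduction and in recognizing that the infinite second moment forces a direct tail argument rather than an $L^2$ bound.
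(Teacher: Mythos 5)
Your proof is correct, and it shares the paper's overall skeleton: the same reduction (via independence of $\hat\delta$ and $\xi_2$ together with $E_{\pi,\beta}\hat\tau=1/\pi$) to a uniform-in-$\pi$ bound on $\pi E_{\pi,\beta}|\pi\hat\tau-1|$, and the same trivial bound $\pi E_{\pi,\beta}|\pi\hat\tau-1|\le 2\pi$ to dispose of small $\pi$. Where you genuinely diverge is in the execution of the large-$\pi$ bound. The paper changes variables to $\tilde\pi=\pi/\sigma_2$, splits the integral at $z=\varepsilon\tilde\pi$ for a small $\varepsilon$ to be tuned, controls the bulk region with the two-sided Mills-ratio bounds of \citet{baricz_mills_2008} (yielding two separate integral estimates with $1/\varepsilon$-constants), and handles the entire left region by an exponential-tilting/Fubini computation that produces the factor $\tilde\pi\exp(-\tfrac12\tilde\pi^2+\varepsilon\tilde\pi^2)$, requiring $\varepsilon<1/2$. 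You instead fix the split at $\pi/2$, add an extra cut at $0$, and use three more elementary ingredients: on the bulk, the identity $\pi\hat\tau-1=\frac{\pi-\xi_2}{\xi_2}-\frac{\pi}{\xi_2}(1-\xi_2\hat\tau)$ combined with the \citet{Small2010} bound $|1-\xi_2\hat\tau|\le\sigma_2^2/\xi_2^2$ (itself a consequence of the same Baricz inequalities, and already invoked in the paper's proof of Theorem \ref{Thm: Limiting Distribution, Just-identified Case}); on $\{0<\xi_2\le\pi/2\}$, monotonicity of the Mills ratio plus the Gaussian tail bound $1-\Phi(m)\le\phi(m)/m$; and on $\{\xi_2\le 0\}$, the tilting identity $\phi(z-m)=\phi(z)e^{mz-m^2/2}$ with the crude bound $1-\Phi\le 1$. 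Your route buys explicit constants and avoids any tuning of $\varepsilon$, so it is arguably cleaner for the lemma as stated ($p=1$). What the paper's $\varepsilon$-split method buys is generality: essentially the same argument, run with exponent $p$, delivers the $p$th-moment bounds for all $p<2$ (Lemma \ref{tildeR_larget_bound_lemma}) needed in Appendix \ref{estimated_variance_sec}, whereas your bulk estimate leans on the triangle-inequality-with-expectation structure specific to the first absolute moment and would need reworking to cover $p>1$.
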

\begin{proof}
We have
\begin{align*}
&\pi \left(\hat\beta_U-\beta\right)
=\pi\left[\hat\tau\cdot\left(\xi_1-\frac{\sigma_{12}}{\sigma_2^2}\xi_2\right)+\frac{\sigma_{12}}{\sigma_2^2}-\beta\right]
=\pi \hat\tau\cdot\left(\xi_1-\frac{\sigma_{12}}{\sigma_2^2}\xi_2\right)
+\pi\frac{\sigma_{12}}{\sigma_2^2}-\pi\beta  \\
&=\pi \hat\tau\cdot\left(\xi_1-\beta\pi-\frac{\sigma_{12}}{\sigma_2^2}(\xi_2-\pi)\right)
+\pi \hat\tau \beta\pi-\pi\hat\tau \frac{\sigma_{12}}{\sigma_2^2}\pi
+\pi\frac{\sigma_{12}}{\sigma_2^2}-\pi\beta  \\
&=\pi \hat\tau\cdot\left(\xi_1-\beta\pi-\frac{\sigma_{12}}{\sigma_2^2}(\xi_2-\pi)\right)
+\pi(\pi \hat\tau-1)\left(\beta-\frac{\sigma_{12}}{\sigma_2^2}\right).
\end{align*}
Using this and the fact that $\xi_2$ and $\xi_1-\frac{\sigma_{12}}{\sigma_{2}^2}\xi_2$ are independent, it follows that
\begin{align*}
\pi E_{\pi,\beta}\left|\hat\beta_U-\beta\right|
\le
E_{\pi,\beta}\left|\xi_1-\beta\pi-\frac{\sigma_{12}}{\sigma_2^2}(\xi_2-\pi)\right|
+\pi E_{\pi,\beta}|\pi \hat\tau-1|\left|\beta-\frac{\sigma_{12}}{\sigma_2^2}\right|,
\end{align*}
where we have used the fact that $E_{\pi,\beta} \pi \hat\tau =1$.
The only term in the above expression that depends on $\pi$ is
$\pi E_{\pi,\beta}|\pi \hat\tau-1|$.  Note that this is bounded above by
$\pi E_{\pi,\beta}\pi \hat\tau+\pi=2\pi$ (using unbiasedness and positivity of $\hat\tau$), so we can assume an arbitrary lower bound on $\pi$ when bounding this term.

Letting $\tilde\pi=\pi/\sigma_2$, we have $\xi_2/\sigma_2\sim N(\tilde\pi,1)$, so that
\begin{align*}
\frac{\pi}{\sigma_2}E_{\pi,\beta}|\pi \hat\tau-1|
=\frac{\pi}{\sigma_2}E_{\pi,\beta}\left|\frac{\pi}{\sigma_2}\frac{1-\Phi(\xi_2/\sigma_2)}{\phi(\xi_2/\sigma_2)}-1\right|
=\tilde\pi \int \left|\tilde\pi\frac{1-\Phi(z)}{\phi(z)}-1\right|\phi(z-\tilde\pi)\, dz.
\end{align*}
Let $\varepsilon>0$ be a constant to be determined later in the proof.
By (1.1) in \citet{baricz_mills_2008}
\begin{align*}
&\tilde \pi^2 \int_{z\ge \tilde \pi\varepsilon} \left|\frac{1-\Phi(z)}{\phi(z)}-\frac{1}{\tilde \pi}\right|
\phi(z-\tilde \pi)\, dz  \\
&\le \tilde \pi^2 \int_{z\ge \tilde \pi\varepsilon} \left|\frac{1}{z}-\frac{1}{\tilde \pi}\right|
\phi(z-\tilde \pi)\, dz
+ \tilde \pi^2 \int_{z\ge \tilde \pi\varepsilon} \left|\frac{z}{z^2+1}-\frac{1}{\tilde \pi}\right|
\phi(z-\tilde \pi)\, dz.
\end{align*}
The first term is
\begin{align*}
\tilde \pi^2 \int_{z\ge \tilde \pi\varepsilon} \left|\frac{\tilde \pi-z}{\tilde\pi z}\right|
\phi(z-\tilde \pi)\, dz
\le \tilde \pi^2 \int_{z\ge \tilde \pi\varepsilon} \left|\frac{\tilde \pi-z}{\tilde\pi^2\varepsilon}\right|
\phi(z-\tilde \pi)\, dz
\le\frac{1}{\varepsilon}\int |u|\phi(u)\, du.
\end{align*}
The second term is
\begin{align*}
&\tilde \pi^2 \int_{z\ge \tilde \pi\varepsilon} \left|\frac{1}{z+1/z}-\frac{1}{\tilde \pi}\right|
\phi(z-\tilde \pi)\, dz
=\tilde \pi^2 \int_{z\ge \tilde \pi\varepsilon} \left|\frac{\tilde \pi-(z+1/z)}{\tilde \pi (z+1/z)}\right|
\phi(z-\tilde \pi)\, dz  \\
&\le \tilde \pi^2 \int_{z\ge \tilde \pi\varepsilon} \frac{\left|\tilde \pi-z\right|+\frac{1}{\varepsilon\tilde\pi}}{\tilde \pi^2\varepsilon}
\phi(z-\tilde \pi)\, dz
\le\frac{1}{\varepsilon}\int
\left(\left|u\right|+\frac{1}{\varepsilon\tilde\pi}\right)
\phi(u)\, dz.
\end{align*}
We also have
\begin{align*}
&\tilde \pi^2 \int_{z< \tilde \pi\varepsilon} \left|\frac{1-\Phi(z)}{\phi(z)}-\frac{1}{\tilde \pi}\right|
\phi(z-\tilde \pi)\, dz
\le \tilde \pi^2 \int_{z< \tilde \pi\varepsilon} \frac{1-\Phi(z)}{\phi(z)}
\phi(z-\tilde \pi)\, dz
+\tilde \pi \int_{z< \tilde \pi\varepsilon} 
\phi(z-\tilde \pi)\, dz.
\end{align*}
The second term is equal to $\tilde\pi \Phi(\tilde \pi\varepsilon-\tilde\pi)$, which is bounded uniformly over $\tilde\pi$ for $\varepsilon<1$.
The first term is
\begin{align*}
&\tilde \pi^2 \int_{z< \tilde \pi\varepsilon} (1-\Phi(z))
  \exp\left(\tilde\pi z-\frac{1}{2}\tilde\pi^2\right)\, dz  \\
&=\tilde \pi^2 \int_{z< \tilde \pi\varepsilon}
\int_{t\ge z}\phi(t)
  \exp\left(\tilde\pi z-\frac{1}{2}\tilde\pi^2\right)\,dt dz  \\
&=\tilde \pi^2 \int_{t\in\mathbb{R}}
\int_{z\le \min\{t,\tilde\pi\varepsilon\}}\phi(t)
  \exp\left(\tilde\pi z-\frac{1}{2}\tilde\pi^2\right)\,dz dt  \\
&=\tilde \pi^2\exp\left(-\frac{1}{2}\tilde\pi^2\right)
\int_{t\in\mathbb{R}}
\phi(t)\left[\frac{1}{\tilde\pi}\exp\left(\tilde\pi z\right)\right]_{z=-\infty}^{\min\{t,\tilde\pi\varepsilon\}}\,dt \\
&=\tilde \pi\exp\left(-\frac{1}{2}\tilde\pi^2\right)
\int_{t\in\mathbb{R}}
\phi(t)\exp\left(\tilde\pi\min\{t,\tilde\pi\varepsilon\}\right)\,dt  \\
&\le \tilde \pi\exp\left(-\frac{1}{2}\tilde\pi^2+\varepsilon \tilde\pi^2\right).
\end{align*}
For $\varepsilon<1/2$, this is uniformly bounded over all $\tilde\pi>0$.

\end{proof}

\subsection{Multiple Instrument Case}\label{multi_instrument_append}

This section proves Theorem \ref{rb_asym_efficiency_thm} and extends this theorem to cover unbiased estimators that are efficient under strong instrument asymptotics in the heteroskedastic case.  In particular, we prove an extension of this theorem allowing for unbiased estimators that are asymptotically equivalent to a GMM estimator of the form
$\hat\beta_{GMM,W}=\frac{\xi_2'\hat W\xi_1}{\xi_2'\hat W\xi_2}$,
where $\hat W=\hat W(\xi)$ is a data dependent weighting matrix.
For Theorem \ref{rb_asym_efficiency_thm}, $\hat W$ is the deterministic matrix $Z'Z$.
In models with non-homoskedastic errors the two step GMM estimator with weighting matrix
\begin{align}\label{2step_gmm_eq}
\hat W=\left(\Sigma_{11}-\hat\beta_{2SLS} (\Sigma_{12}+\Sigma_{21})+\hat\beta_{2SLS}^2\Sigma_{22}\right)^{-1}
\end{align}
is asymptotically efficient under strong instruments.
Here, $\hat W$ is an estimate of the inverse of the variance matrix of the moments $\xi_1-\beta \xi_2$, which the GMM estimator sets close to zero.  Let
\begin{align}\label{2step_gmm_est_weight_eq}
\hat w_{GMM,i}^*(\xi^{(b)})
=\frac{{\xi_2^{(b)}}'\hat W(\xi^{(b)}) e_ie_i' {\xi_2^{(b)}}}{{\xi_2^{(b)}}'\hat W(\xi^{(b)}) {\xi_2^{(b)}}}
\end{align}
where
\begin{align*}
\hat W(\xi^{(b)})=\left(\Sigma_{11}-\hat\beta(\xi^{(b)}) (\Sigma_{12}+\Sigma_{21})+\hat\beta(\xi^{(b)})^2\Sigma_{22}\right)^{-1}
\end{align*}
for a preliminary estimator $\hat\beta(\xi^{(b)})$ of $\beta$ based on $\xi^{(b)}$.  The Rao-Blackwellized estimator formed by replacing $\hat w^*$ with $\hat w^*_{GMM}$ in the definition of $\hat\beta_{RB}^*$ gives an unbiased estimator that is asymptotically efficient under strong instrument asymptotics with non-homoskedastic errors,
as we now show by proving an extension of Theorem \ref{rb_asym_efficiency_thm} that covers the weight matrix in (\ref{2step_gmm_eq}) in addition to the matrix $Z'Z$ used in Theorem \ref{rb_asym_efficiency_thm}.

Consider the GMM estimator 
$\hat\beta_{GMM,W}=\frac{\xi_2'\hat W\xi_1}{\xi_2'\hat W\xi_2}$,
where $\hat W=\hat W(\xi)$ is a data dependent weighting matrix.
For Theorem \ref{rb_asym_efficiency_thm}, $\hat W$ is the deterministic matrix $Z'Z$ while, in the extension discussed above,
$\hat W$ is defined in (\ref{2step_gmm_eq}).
In both cases, $\hat W\stackrel{p}{\to} W^*$ for some positive definite matrix $W^*$ under the strong instrument asymptotics in the theorem.  For this $W^*$, define the oracle weights
\begin{align*}
w_{i}^*=\pi_i\frac{\pi'W^* e_i}{\pi'W^* \pi}
=\frac{\pi'W^* e_ie_i'\pi}{\pi'W^* \pi}
\end{align*}
and the oracle estimator
\begin{align*}
\hat\beta_{RB}^o=\hat\beta_{RB}(\xi,\Sigma;w^*)
=\hat\beta_{w}(\xi,\Sigma;w^*)
=\sum_{i=1}^kw_{i}^*\hat\beta_{U}(\xi(i),\Sigma(i)).
\end{align*}
Define the estimated weights as in (\ref{2step_gmm_est_weight_eq}):
\begin{align*}
\hat w_{i}^*=\hat w_{i}^*(\xi^{(b)})
=\frac{{\xi_2^{(b)}}'\hat W(\xi^{(b)}) e_ie_i'\xi_2^{(b)}}{{\xi_2^{(b)}}'\hat W(\xi^{(b)}) \xi_2^{(b)}}
\end{align*}
and the Rao-Blackwellized estimator based on the estimated weights
\begin{align*}
\hat\beta_{RB}^*=\hat\beta_{RB}(\xi,\Sigma;\hat w^*)
=E\left[\hat\beta_{w}(\xi^{(a)},2\Sigma;\hat w^*)\Big|\xi\right]
=\sum_{i=1}^k E\left[\hat w_{i}^*(\xi_2^{(b)})\hat\beta_{U}(\xi^{(a)}(i),2\Sigma(i))\Big|\xi\right].
\end{align*}
In the general case, we will assume that
$\hat w_{i}^*(\xi^{(b)})$ is uniformly bounded (this holds for equivalence with 2SLS under the conditions of Theorem \ref{rb_asym_efficiency_thm}, since $\sup_{\|u\|_{Z'Z}=1}u' Z'Z e_ie_i'u$ is bounded, and one can likewise show that it holds for two step GMM provided $\Sigma$ has full rank).
Let us also define the oracle linear combination of 2SLS estimators
\begin{align*}
\hat\beta_{2SLS}^o=\sum_{i=1}^k w_{i}^*\frac{\xi_{1,i}}{\xi_{2,i}}.
\end{align*}

\begin{lemma}\label{deterministic_w_lemma}
Suppose that $\hat w$ is deterministic: $\hat w(\xi^{(b)})=w$ for some constant vector $w$.  Then $\hat\beta_{RB}(\xi,\Sigma;w)=\hat\beta_{w}(\xi,\Sigma;w)$.
\end{lemma}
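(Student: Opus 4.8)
The plan is to reduce this equality of two functions of $\xi$ to a single-instrument statement and, ultimately, to a scalar Gaussian-convolution identity for $\hat\tau$. Since $\hat w(\xi^{(b)})=w$ is now a fixed vector, I can pull it through the definition of $\hat\beta_{RB}$: by linearity of conditional expectation,
\[
\hat\beta_{RB}(\xi,\Sigma;w)=E\left[\sum_{i=1}^k w_i\,\hat\beta_U(\xi^{(a)}(i),2\Sigma(i))\,\Big|\,\xi\right]=\sum_{i=1}^k w_i\,E\left[\hat\beta_U(\xi^{(a)}(i),2\Sigma(i))\,\big|\,\xi\right],
\]
so the lemma follows once I establish, for each $i$, the per-instrument identity $E[\hat\beta_U(\xi^{(a)}(i),2\Sigma(i))\mid\xi]=\hat\beta_U(\xi(i),\Sigma(i))$. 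Because $\zeta\sim N(0,\Sigma)$ is independent of $\xi$ and $\hat\beta_U(\xi^{(a)}(i),\cdot)$ depends on the data only through $\xi^{(a)}(i)=\xi(i)+\zeta(i)$, this conditional expectation is the integral of $\hat\beta_U(\xi(i)+\zeta(i),2\Sigma(i))$ against the marginal law $\zeta(i)\sim N(0,\Sigma(i))$, and in particular is a function of $\xi(i)$ alone.

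Next I would exploit the orthogonalized form $\hat\beta_U=\hat\tau(\cdot,\sigma_2^2)\,\hat\delta+\sigma_{12}/\sigma_2^2$ (suppressing the instrument index). Two observations do the work. First, the intercept $\sigma_{12}/\sigma_2^2$ and the coefficient appearing in $\hat\delta$ are unchanged when $\Sigma(i)$ is replaced by $2\Sigma(i)$, since they depend only on the ratio $\Sigma_{12,ii}/\Sigma_{22,ii}$. Second, under the $\zeta$-integration the variables $\zeta_{2,i}$ and $\zeta_{1,i}-(\sigma_{12}/\sigma_2^2)\zeta_{2,i}$ are jointly normal with zero covariance, hence independent; since $\hat\tau(\xi_{2,i}+\zeta_{2,i},2\sigma_2^2)$ is a function of $\zeta_{2,i}$ only while the $\zeta$-part of $\hat\delta$ is mean zero, the cross term vanishes and the expectation factors as $\hat\delta(\xi(i),\Sigma(i))\cdot E[\hat\tau(\xi_{2,i}+\zeta_{2,i},2\sigma_2^2)]+\sigma_{12}/\sigma_2^2$.

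This reduces everything to the scalar identity
\[
E_{\zeta_2\sim N(0,\sigma_2^2)}\left[\hat\tau(\xi_2+\zeta_2,2\sigma_2^2)\right]=\hat\tau(\xi_2,\sigma_2^2),
\]
which I expect to be the main obstacle. I would prove it by direct computation: after normalizing $\sigma_2=1$, write $1/\phi$ and $1-\Phi$ in their exponential and integral forms, turning the left-hand side into a two-dimensional Gaussian integral, and then complete the square in the exponent to collapse it to a half-plane tail probability of the form $\Pr\{T-R\ge\sqrt2\,\xi_2\}$ for independent standard normals $R,T$. That probability equals $1-\Phi(\xi_2)$, which reassembles into $(1-\Phi(\xi_2))/\phi(\xi_2)=\hat\tau(\xi_2,1)$ and yields the identity. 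Alternatively, one can obtain the scalar identity without computation by a Rao--Blackwell/uniqueness route: $E[\hat\tau(\xi_2+\zeta_2,2\sigma_2^2)\mid\xi_2]$ is, by Lemma \ref{Lemma:Voinov Nikulin}, an unbiased estimator of $1/\pi$ based on $\xi_2$, and completeness of the normal family forces it to coincide with $\hat\tau(\xi_2,\sigma_2^2)$. I would nonetheless prefer the direct computation, since it avoids invoking an open-set parameter-space condition and delivers the identity as a pointwise equality of functions, from which $E[\hat\beta_U(\xi^{(a)}(i),2\Sigma(i))\mid\xi]=\hat\beta_U(\xi(i),\Sigma(i))$ and hence the lemma follow immediately.
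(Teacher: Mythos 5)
Your proof is correct, and after the common opening steps it takes a genuinely different route from the paper's. Both arguments begin identically: pull the constant weights out by linearity, and observe that $E\bigl[\hat\beta_U(\xi^{(a)}(i),2\Sigma(i))\,\big|\,\xi\bigr]$ depends on $\xi(i)$ alone. At that point the paper finishes abstractly: this conditional expectation is an unbiased estimator of $\beta$ that is a deterministic function of $\xi(i)$, so by completeness of the normal family (the uniqueness part of Theorem \ref{Thm: Unbiased Estimator, Just-identified Case}, applied to the submodel $\{(\beta\pi_i,\pi_i):\pi_i>0,\beta\in\mathbb{R}\}$) it must equal $\hat\beta_U(\xi(i),\Sigma(i))$ almost surely. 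You instead compute the conditional expectation explicitly: decompose $\hat\beta_U=\hat\tau\cdot\hat\delta+\sigma_{12}/\sigma_2^2$, note that the ratio $\sigma_{12}/\sigma_2^2$ is invariant under $\Sigma(i)\mapsto 2\Sigma(i)$, kill the cross term via the independence of $\zeta_{2,i}$ and $\zeta_{1,i}-(\sigma_{12}/\sigma_2^2)\zeta_{2,i}$, and reduce to the convolution identity
\begin{align*}
E_{\zeta_2\sim N(0,\sigma_2^2)}\left[\hat\tau(\xi_2+\zeta_2,2\sigma_2^2)\right]=\hat\tau(\xi_2,\sigma_2^2),
\end{align*}
which your completing-the-square computation does verify (the half-plane probability $\Pr\{T-R\ge\sqrt{2}\,\xi_2\}=1-\Phi(\xi_2)$ for independent standard normals reassembles exactly into $\hat\tau(\xi_2,\sigma_2^2)$, and since the integrand is nonnegative no separate integrability argument is needed). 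The paper's route buys brevity and reuse of existing machinery; your route buys a self-contained, pointwise (not merely almost-sure) identity that needs no open-set/completeness hypothesis, and it exposes the structural reason the construction works: Gaussian smoothing with variance $\sigma_2^2$ maps $\hat\tau(\cdot,2\sigma_2^2)$ back to $\hat\tau(\cdot,\sigma_2^2)$ exactly, which is precisely why the split-sample construction uses $2\Sigma$. Your fallback argument via Lemma \ref{Lemma:Voinov Nikulin} and completeness is, essentially verbatim, the paper's own proof.
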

\begin{proof}
We have
\begin{align*}
\hat\beta_{RB}(\xi,\Sigma;w)
=E\left[\sum_{i=1}^kw_{i}\hat\beta_{U}(\xi^{(a)}(i),2\Sigma(i))\bigg|\xi\right]
=\sum_{i=1}^kw_{i}E\left[\hat\beta_{U}(\xi^{(a)}(i),2\Sigma(i))\bigg|\xi\right].
\end{align*}
Since $\xi^{(a)}(i)=\zeta(i)+\xi(i)$ (where $\zeta(i)=(\zeta_i,\zeta_{k+i})'$), $\xi^{(a)}(i)$ is independent of $\{\xi(j)\}_{j\ne i}$ conditional on $\xi(i)$.  Thus,
$E\left[\hat\beta_{U}(\xi^{(a)}(i),2\Sigma(i))\bigg|\xi\right]
=E\left[\hat\beta_{U}(\xi^{(a)}(i),2\Sigma(i))\bigg|\xi(i)\right]$.
Since $E\left[\hat\beta_{U}(\xi^{(a)}(i),2\Sigma(i))\bigg|\xi(i)\right]$ is an unbiased estimator for $\beta$ that is a deterministic function of $\xi(i)$, it must be equal to $\hat\beta_U(\xi(i),\Sigma(i))$, the unique nonrandom unbiased estimator based on $\xi(i)$ (where uniqueness follows by completeness since the parameter space $\{(\beta\pi_i,\pi_i)|\pi_i\in\mathbb{R}_+,\beta\in\mathbb{R}\}$ contains an open rectangle).
Plugging this in to the above display gives the result.
\end{proof}

\begin{lemma}
Let $\|\pi\|\to\infty$ with $\|\pi\|/\min_i\pi_i=\mathcal{O}(1)$.  Then
$\|\pi\|\left(\hat\beta_{GMM,W}-\hat\beta_{2SLS}^o\right)\stackrel{p}{\to} 0$.
\end{lemma}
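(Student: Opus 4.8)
The plan is to reduce the GMM estimator to a weighted average of single-instrument 2SLS ratios and then exploit the fact that, under strong instruments, all of those ratios are close to $\beta$. First I would record the algebraic identity $\xi_2'\hat W\xi_1=\sum_{i=1}^k \xi_2'\hat We_ie_i'\xi_2\,(\xi_{1,i}/\xi_{2,i})$, which follows from $\xi_{1,i}=\xi_{2,i}(\xi_{1,i}/\xi_{2,i})$ together with $e_i'\xi_2=\xi_{2,i}$. Dividing by $\xi_2'\hat W\xi_2$ gives $\hat\beta_{GMM,W}=\sum_{i=1}^k\hat w_i\,\xi_{1,i}/\xi_{2,i}$ with $\hat w_i=\frac{\xi_2'\hat We_ie_i'\xi_2}{\xi_2'\hat W\xi_2}$. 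Since $\sum_i e_ie_i'=I$, both $\sum_i\hat w_i=1$ and $\sum_i w_i^*=1$, so the quantity of interest can be written as
\[
\hat\beta_{GMM,W}-\hat\beta_{2SLS}^o=\sum_{i=1}^k(\hat w_i-w_i^*)\left(\frac{\xi_{1,i}}{\xi_{2,i}}-\beta\right),
\]
where subtracting the common $\beta$ is the crucial step that makes both factors in each summand small.

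Second, I would control the ratio factor. Writing $\xi_{1,i}-\beta\xi_{2,i}$ in terms of the reduced-form noise and using $\xi_{2,i}=\pi_i+\mathcal{O}_P(1)$, a standard just-identified strong-IV computation gives $\pi_i(\xi_{1,i}/\xi_{2,i}-\beta)=\mathcal{O}_P(1)$. The balance condition $\|\pi\|/\min_i\pi_i=\mathcal{O}(1)$ then yields $\|\pi\|(\xi_{1,i}/\xi_{2,i}-\beta)=\tfrac{\|\pi\|}{\pi_i}\,\pi_i(\xi_{1,i}/\xi_{2,i}-\beta)=\mathcal{O}_P(1)$ for each $i$.

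Third, I would show that the weight factor is negligible, $\hat w_i-w_i^*\stackrel{p}{\to}0$. The key observation is that $g_i(u,W)=\frac{u'We_ie_i'u}{u'Wu}$ is invariant to the scale of $u$, so $\hat w_i=g_i(\xi_2/\|\xi_2\|,\hat W)$ and $w_i^*=g_i(\pi/\|\pi\|,W^*)$ can both be evaluated on the unit sphere. Since $\xi_2/\|\xi_2\|-\pi/\|\pi\|=\mathcal{O}_P(1/\|\pi\|)\stackrel{p}{\to}0$ and $\hat W\stackrel{p}{\to}W^*$ (given, with $W^*$ positive definite), and $g_i$ is uniformly continuous on the compact set $\{\|u\|=1\}\times\{W:\|W-W^*\|\le\varepsilon\}$—on which the denominator is bounded below by $\lambda_{\min}(W^*)>0$—uniform continuity gives $\hat w_i-w_i^*\stackrel{p}{\to}0$ even though the direction $\pi/\|\pi\|$ may drift along the sequence. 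Combining the three steps, after multiplication by $\|\pi\|$ each summand is $(\hat w_i-w_i^*)\cdot\|\pi\|(\xi_{1,i}/\xi_{2,i}-\beta)=o_P(1)\cdot\mathcal{O}_P(1)=o_P(1)$, and summing over the finitely many $i$ gives $\|\pi\|(\hat\beta_{GMM,W}-\hat\beta_{2SLS}^o)\stackrel{p}{\to}0$.

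The main obstacle is the third step: because $\pi/\|\pi\|$ need not converge as $\|\pi\|\to\infty$, one cannot simply invoke continuity of $g_i$ at a fixed point. Reducing to the compact unit sphere via scale invariance, together with the uniform lower bound $\lambda_{\min}(W^*)>0$ on the denominator, is what makes the consistency of the data-driven weights hold uniformly along the sequence; this is also where the full-rank hypothesis on $\Sigma$ (ensuring $\hat W$ stays well behaved in the two-step GMM case) enters.
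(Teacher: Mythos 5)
Your proof is correct and follows essentially the same route as the paper's: the identical decomposition $\hat\beta_{GMM,W}-\hat\beta_{2SLS}^o=\sum_{i=1}^k(\hat w_i-w_i^*)\left(\xi_{1,i}/\xi_{2,i}-\beta\right)$, justified by both weight systems summing to one, followed by $\pi_i(\xi_{1,i}/\xi_{2,i}-\beta)=\mathcal{O}_P(1)$, the balance condition $\|\pi\|/\min_i\pi_i=\mathcal{O}(1)$, and consistency of the data-driven weights. The only difference is that you spell out the weight-convergence step in detail (scale invariance of the weights plus uniform continuity on the unit sphere, to handle a possibly drifting direction $\pi/\|\pi\|$), whereas the paper simply asserts $\hat w_i-w_i^*\stackrel{p}{\to}0$.
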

\begin{proof}
Note that
\begin{align*}
&\hat\beta_{GMM,W}-\hat\beta_{2SLS}^o
=\frac{\xi_2'\hat W\xi_1}{\xi_2'\hat W\xi_2}-\sum_{i=1}^k w_{i}^*\frac{\xi_{1,i}}{\xi_{2,i}}
=\sum_{i=1}^k \left(\frac{\xi_2'\hat W e_ie_i'\xi_2}{\xi_2'\hat W\xi_2}-w_{i}^*\right)\frac{\xi_{1,i}}{\xi_{2,i}}  \\
&=\sum_{i=1}^k \left(\frac{\xi_2'\hat W e_ie_i'\xi_2}{\xi_2'\hat W\xi_2}-\frac{\pi'W^* e_ie_i'\pi}{\pi'W^*\pi}\right)\frac{\xi_{1,i}}{\xi_{2,i}}
=\sum_{i=1}^k \left(\frac{\xi_2'\hat W e_ie_i'\xi_2}{\xi_2'\hat W\xi_2}-\frac{\pi'W^* e_ie_i'\pi}{\pi'W^*\pi}\right)\left(\frac{\xi_{1,i}}{\xi_{2,i}}-\beta\right),
\end{align*}
where the last equality follows since 
$\sum_{i=1}^k \frac{\xi_2'\hat W e_ie_i'\xi_2}{\xi_2'\hat W\xi_2}=\sum_{i=1}^k \frac{\pi'W^* e_ie_i'\pi}{\pi'W^*\pi}=1$ with probability one.  For each $i$,
$\pi_i(\xi_{1,i}/\xi_{2,i}-\beta)=\mathcal{O}_P(1)$
and $\frac{\xi_2'\hat W e_ie_i'\xi_2}{\xi_2'\hat W\xi_2}-\frac{\pi'W^* e_ie_i'\pi}{\pi'W^*\pi}\stackrel{p}{\to} 0$ as the elements of $\pi$ approach infinity.  Combining this with the above display and the fact that $\|\pi\|/\min_i\pi_i=\mathcal{O}(1)$ gives the result.
\end{proof}

\begin{lemma}
Let $\|\pi\|\to\infty$ with $\|\pi\|/\min_i\pi_i=\mathcal{O}(1)$.  Then
$\|\pi\|\left(\hat\beta_{2SLS}^o-\hat\beta_{RB}^o\right)\stackrel{p}{\to} 0$.
\end{lemma}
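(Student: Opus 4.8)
The plan is to reduce the claim to the single-instrument limiting result of Theorem~\ref{Thm: Limiting Distribution, Just-identified Case}, applied coordinate-by-coordinate. Since the oracle weights $w_i^*$ are deterministic and are shared by both oracle estimators, the first step is to write the difference explicitly and factor out a rate for each coordinate:
\begin{align*}
\|\pi\|\left(\hat\beta_{2SLS}^o-\hat\beta_{RB}^o\right)
&=\sum_{i=1}^k w_i^*\,\|\pi\|\left(\frac{\xi_{1,i}}{\xi_{2,i}}-\hat\beta_U(\xi(i),\Sigma(i))\right)  \\
&=\sum_{i=1}^k w_i^*\,\frac{\|\pi\|}{\pi_i}\cdot\pi_i\left(\frac{\xi_{1,i}}{\xi_{2,i}}-\hat\beta_U(\xi(i),\Sigma(i))\right).
\end{align*}
The strategy is then to bound the three factors in each summand separately and conclude that every term is $o_P(1)$, so that the sum over the fixed finite index set $i=1,\ldots,k$ is also $o_P(1)$.

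For the factor $\pi_i\left(\xi_{1,i}/\xi_{2,i}-\hat\beta_U(\xi(i),\Sigma(i))\right)$, I would observe that it depends on the data only through $\xi(i)=(\xi_{1,i},\xi_{2,i})'$, whose marginal distribution is exactly that of the single-instrument model with mean $(\pi_i\beta,\pi_i)'$ and fixed covariance $\Sigma(i)$; the other coordinates $\pi_j$, $j\ne i$, do not enter this marginal at all. The hypotheses $\|\pi\|\to\infty$ and $\|\pi\|/\min_i\pi_i=\mathcal{O}(1)$ force $\min_i\pi_i\to\infty$, hence $\pi_i\to\infty$ for each $i$, so Theorem~\ref{Thm: Limiting Distribution, Just-identified Case} applies to each coordinate and yields $\pi_i\left(\xi_{1,i}/\xi_{2,i}-\hat\beta_U(\xi(i),\Sigma(i))\right)\stackrel{p}{\to}0$. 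The cross-coordinate correlation induced by $\Sigma$ is irrelevant here, since convergence in probability of a finite sum requires only convergence of each individual summand.

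For the remaining two factors I would verify uniform boundedness. The ratio $\|\pi\|/\pi_i\le\|\pi\|/\min_i\pi_i=\mathcal{O}(1)$ follows directly from the hypothesis. For the weights, writing $w_i^*=\pi_i(W^*\pi)_i/(\pi'W^*\pi)$ and using positive definiteness of $W^*$, I would bound the numerator by $|\pi_i|\,\|W^*\pi\|\le \lambda_{\max}(W^*)\|\pi\|^2$ and the denominator below by $\lambda_{\min}(W^*)\|\pi\|^2$, giving $|w_i^*|\le \lambda_{\max}(W^*)/\lambda_{\min}(W^*)$, a constant independent of $\pi$; this is the uniform boundedness of the weights noted before Lemma~\ref{deterministic_w_lemma}. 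Each summand is thus a product of two $\mathcal{O}(1)$ factors and one $o_P(1)$ factor, hence $o_P(1)$, and the claim follows.

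I do not anticipate a genuine obstacle: the result is essentially a direct corollary of the just-identified limiting distribution theorem. The only points requiring care are confirming that each $\pi_i$ genuinely diverges and that $\|\pi\|/\pi_i$ and the oracle weights remain bounded, all of which follow mechanically from the two stated rate conditions, so the main effort is bookkeeping rather than any new probabilistic argument.
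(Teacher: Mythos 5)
Your overall route is the same as the paper's: expand the difference as $\sum_{i=1}^k w_i^*\bigl(\xi_{1,i}/\xi_{2,i}-\hat\beta_U(\xi(i),\Sigma(i))\bigr)$, apply Theorem \ref{Thm: Limiting Distribution, Just-identified Case} coordinate-by-coordinate (each $\pi_i\to\infty$ under the two rate hypotheses), and absorb the bounded factors $\|\pi\|/\pi_i$ and $w_i^*$. Your bookkeeping is in fact more explicit than the paper's one-line argument: the eigenvalue bound $|w_i^*|\le\lambda_{\max}(W^*)/\lambda_{\min}(W^*)$, the reduction $\|\pi\|/\pi_i\le\|\pi\|/\min_j\pi_j=\mathcal{O}(1)$, and the remark that cross-coordinate dependence is irrelevant for a finite sum of $o_P(1)$ terms are all correct and left implicit in the paper.

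There is, however, one genuine gap: your opening equality, $\hat\beta_{RB}^o=\sum_{i=1}^k w_i^*\hat\beta_U(\xi(i),\Sigma(i))$, is not definitional and you give no argument for it. The oracle estimator is defined as the Rao--Blackwellized quantity $\hat\beta_{RB}(\xi,\Sigma;w^*)=E\bigl[\sum_{i=1}^k w_i^*\,\hat\beta_U(\xi^{(a)}(i),2\Sigma(i))\,\big|\,\xi\bigr]$, where $\xi^{(a)}=\xi+\zeta$ involves the auxiliary randomization $\zeta$ and the variance argument is $2\Sigma(i)$, not $\Sigma(i)$. Since $\hat\beta_U$ is nonlinear, the conditional expectation does not pass through it: one must show $E\bigl[\hat\beta_U(\xi^{(a)}(i),2\Sigma(i))\,\big|\,\xi\bigr]=\hat\beta_U(\xi(i),\Sigma(i))$, which is a nontrivial integral identity. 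This is exactly Lemma \ref{deterministic_w_lemma} of the paper, and its proof needs two ideas absent from your proposal: (i) conditional on $\xi(i)$, $\xi^{(a)}(i)$ is independent of $\{\xi(j)\}_{j\ne i}$, so conditioning on all of $\xi$ reduces to conditioning on $\xi(i)$; and (ii) $E\bigl[\hat\beta_U(\xi^{(a)}(i),2\Sigma(i))\,\big|\,\xi(i)\bigr]$ is a non-randomized unbiased estimator of $\beta$ depending only on $\xi(i)$, hence by completeness it must coincide almost surely with $\hat\beta_U(\xi(i),\Sigma(i))$, the unique such estimator by Theorem \ref{Thm: Unbiased Estimator, Just-identified Case}. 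Your phrase ``the oracle weights are deterministic and are shared by both oracle estimators'' gestures at this collapse but does not establish it; with that lemma cited or reproduced, the remainder of your argument goes through as written.
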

\begin{proof}
By Lemma \ref{deterministic_w_lemma},
\begin{align*}
\|\pi\|\left(\hat\beta_{2SLS}^o-\hat\beta_{RB}^o\right)
=\|\pi\|\sum_{i=1}^kw_{i}^*\left(\frac{\xi_{1,i}}{\xi_{2,i}}-\hat\beta_U(\xi(i),\Sigma(i))\right).
\end{align*}
By Theorem \ref{Thm: Limiting Distribution, Just-identified Case},
$\pi_i\left(\frac{\xi_{1,i}}{\xi_{2,i}}-\hat\beta_U(\xi(i),\Sigma(i))\right)\stackrel{p}{\to} 0$.  Combining this with the boundedness of $\|\pi\|/\min_i\pi_i$ gives the result.
\end{proof}

\begin{lemma}
Let $\|\pi\|\to\infty$ with $\|\pi\|/\min_i\pi_i=\mathcal{O}(1)$.  Then
$\|\pi\|\left(\hat\beta_{RB}^o-\hat\beta_{RB}^*\right)\stackrel{p}{\to} 0$.
\end{lemma}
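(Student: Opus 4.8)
The plan is to reduce the whole statement to an unconditional $L^1$ bound that exploits the fact that $\xi^{(a)}$ and $\xi^{(b)}$ are \emph{unconditionally} independent, thereby sidestepping the absence of a finite second moment for $\hat\beta_U$. First I would rewrite the oracle estimator in a form matching $\hat\beta_{RB}^*$. Since the oracle weights $w_i^*$ are deterministic, the argument in the proof of Lemma \ref{deterministic_w_lemma} gives $\hat\beta_U(\xi(i),\Sigma(i)) = E[\hat\beta_U(\xi^{(a)}(i), 2\Sigma(i))\mid\xi]$, so that $\hat\beta_{RB}^o = E[\sum_{i=1}^k w_i^* \hat\beta_U(\xi^{(a)}(i), 2\Sigma(i))\mid\xi]$. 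Subtracting this from the definition of $\hat\beta_{RB}^*$ yields
\[
\hat\beta_{RB}^* - \hat\beta_{RB}^o = E\left[\sum_{i=1}^k \left(\hat w_i^*(\xi^{(b)}) - w_i^*\right)\hat\beta_U(\xi^{(a)}(i), 2\Sigma(i))\,\Big|\,\xi\right].
\]
Because $\sum_i \hat w_i^* = \sum_i w_i^* = 1$, I can replace $\hat\beta_U(\xi^{(a)}(i), 2\Sigma(i))$ by $\hat\beta_U(\xi^{(a)}(i), 2\Sigma(i)) - \beta$ inside the sum without changing its value; this centering is what makes the mean-absolute-deviation bound applicable.

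Next I would pass to an unconditional $L^1$ bound. Applying $|E[\,\cdot\mid\xi]|\le E[|\cdot|\mid\xi]$, the tower property, and the triangle inequality gives
\[
\|\pi\|\, E\left|\hat\beta_{RB}^* - \hat\beta_{RB}^o\right| \le \sum_{i=1}^k \|\pi\|\, E\left[\left|\hat w_i^*(\xi^{(b)}) - w_i^*\right|\cdot\left|\hat\beta_U(\xi^{(a)}(i), 2\Sigma(i)) - \beta\right|\right].
\]
The crucial point is that $\hat w_i^*(\xi^{(b)}) - w_i^*$ is a function of $\xi^{(b)}$ alone while $\hat\beta_U(\xi^{(a)}(i), 2\Sigma(i)) - \beta$ is a function of $\xi^{(a)}$ alone, and these two vectors are unconditionally independent even though they are dependent given $\xi$. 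Hence each summand factors as $\|\pi\|\, E|\hat w_i^*(\xi^{(b)}) - w_i^*|\cdot E|\hat\beta_U(\xi^{(a)}(i), 2\Sigma(i)) - \beta|$.

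Then I would bound the two factors separately. Since $\xi^{(a)}(i)\sim N((\pi_i\beta,\pi_i)', 2\Sigma(i))$, Lemma \ref{mad_bound_lemma} applied with variance $2\Sigma(i)$ gives $\pi_i\, E|\hat\beta_U(\xi^{(a)}(i), 2\Sigma(i)) - \beta| \le K(\beta, 2\Sigma(i))$, which is bounded by a constant uniformly in $\pi$ since there are only finitely many $i$. Writing $\|\pi\| = (\|\pi\|/\pi_i)\,\pi_i$ and using $\|\pi\|/\min_i\pi_i = \mathcal{O}(1)$ then shows $\|\pi\|\, E|\hat\beta_U(\xi^{(a)}(i), 2\Sigma(i)) - \beta| = \mathcal{O}(1)$. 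For the weight factor, $\hat w_i^*(\xi^{(b)}) - w_i^* \stackrel{p}{\to} 0$ as $\|\pi\|\to\infty$ as noted in the discussion of Theorem \ref{rb_asym_efficiency_thm}, and by the assumed uniform boundedness of $\hat w_i^*$ (and the boundedness of $w_i^*$) the bounded convergence theorem yields $E|\hat w_i^*(\xi^{(b)}) - w_i^*|\to 0$. Combining the two bounds, the right-hand side tends to zero, so $\|\pi\|(\hat\beta_{RB}^* - \hat\beta_{RB}^o)\to 0$ in $L^1$, and hence in probability.

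The main obstacle, and the reason the argument is arranged this way, is that $\hat\beta_U$ has only a finite first moment, so a Cauchy--Schwarz split of the weight error against the estimator error is unavailable. The resolution is to avoid squaring altogether: convert the conditional expectation into an unconditional one so that the $\xi^{(a)}$--$\xi^{(b)}$ independence factorizes the product, and control the estimator factor with the first-moment bound of Lemma \ref{mad_bound_lemma} rather than a variance. The only remaining care is to check that the constant in Lemma \ref{mad_bound_lemma} remains uniform after rescaling the variance to $2\Sigma(i)$ and that $\|\pi\|/\pi_i$ stays bounded, both of which follow from the finiteness of the number of instruments together with the hypothesis $\|\pi\|/\min_i\pi_i = \mathcal{O}(1)$.
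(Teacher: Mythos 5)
Your proposal is correct and follows essentially the same route as the paper's proof: the same centering of $\hat\beta_U(\xi^{(a)}(i),2\Sigma(i))$ by $\beta$ using the sum-to-one property of the weights, the same passage to an unconditional $L^1$ bound that factors via the independence of $\xi^{(a)}$ and $\xi^{(b)}$, bounded convergence for the weight factor, and Lemma \ref{mad_bound_lemma} (applied with variance $2\Sigma(i)$) for the estimator factor. Your write-up is somewhat more explicit about the uniformity of the constant and the role of $\|\pi\|/\min_i\pi_i=\mathcal{O}(1)$, but the argument is the paper's.
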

\begin{proof}
We have
\begin{align*}
&\hat\beta_{RB}^o-\hat\beta_{RB}^*
=\sum_{i=1}^kE\left[\left(w_{i}^*-\hat w_{i}^*(\xi^{(b)})\right)
\hat\beta_U(\xi^{(a)}(i),2\Sigma(i))\Big|\xi\right]  \\
&=\sum_{i=1}^kE\left[\left(w_{i}^*-\hat w_{i}^*(\xi^{(b)})\right)
\left(\hat\beta_U(\xi^{(a)}(i),2\Sigma(i))-\beta\right)\Big|\xi\right]
\end{align*}
using the fact that $\sum_{i=1}^kw_{i}^*=\sum_{i=1}^k\hat w_{i}^*(\xi^{(b)})=1$ with probability one.  Thus,
\begin{align*}
&E_{\beta,\pi}\left|\hat\beta_{RB}^o-\hat\beta_{RB}^*\right|
\le \sum_{i=1}^kE_{\beta,\pi}\left|\left(w_{i}^*-\hat w_{i}^*(\xi^{(b)})\right)
\left(\hat\beta_U(\xi^{(a)}(i),2\Sigma(i))-\beta\right)\right|  \\
&=\sum_{i=1}^kE_{\beta,\pi}\left|w_{i}^*-\hat w_{i}^*(\xi^{(b)})\right|
E_{\beta,\pi}\left|\hat\beta_U(\xi^{(a)}(i),2\Sigma(i))-\beta\right|.
\end{align*}
As $\|\pi\|\to\infty$, $\hat w_{i}^*(\xi^{(b)})-w_i^*\stackrel{p}{\to} 0$ so,
since $\hat w_{i}^*(\xi^{(b)})$ is bounded,
$E_{\beta,\pi}\left|w_{i}^*-\hat w_{i}^*(\xi^{(b)})\right|\to 0$.
Thus, it suffices to show that
$\pi_iE_{\beta,\pi}\left|\hat\beta_U(\xi^{(a)}(i),2\Sigma(i))-\beta\right|=\mathcal{O}(1)$
for each $i$.  But this follows by Lemma \ref{mad_bound_lemma}, which completes the proof.
\end{proof}

\section{Non-Normal Errors and Unknown Reduced Form Variance}\label{estimated_variance_sec}

This appendix derives asymptotic results for the case with non-normal errors and an estimated reduced form covariance matrix.  Section \ref{weak_iv_nonnormal_sec} shows asymptotic unbiasedness in the weak instrument case.  Section \ref{strong_iv_nonnormal_sec} shows asymptotic equivalence with 2SLS in the strong instrument case (where, in the case with multiple instruments, the weights are chosen appropriately).  %
The results are proved using some auxiliary lemmas, which are stated and proved in Section \ref{auxiliary_lemmas_sec}.

Throughout this appendix, we consider a sequence of reduced form estimators
\begin{align*}
\tilde\xi=\left(\begin{array}{c}
(Z'Z)^{-1}Z'Y  \\
(Z'Z)^{-1}Z'X  \\
\end{array}\right),
\end{align*}
which we assume satisfy a central limit theorem:
\begin{align}\label{rf_clt_eq}
\sqrt{T}\left(\tilde \xi
-\left(\begin{array}{c}
\pi_T\beta  \\
\pi_T  \\
\end{array}\right)\right)
\stackrel{d}{\to} N\left(0,\Sigma^*\right),
\end{align}
where $\pi_T$ is a sequence of parameter values and $\Sigma^*$ is a positive definite matrix.  Following \citet{staiger_instrumental_1997}, we distinguish between the case of weak instruments, in which $\pi_T$ converges to $0$ at a $\sqrt{T}$ rate, and the case of strong instruments, in which $\pi_T$ converges to a vector in the interior of the positive orthant.  Formally, the weak instrument case is given by the condition that
\begin{align}\label{weak_iv_cond}
\sqrt{T}\pi_T\to \pi^* \text{ where } \pi_i^*>0 \text{ for all } i
\end{align}
while the strong instrument case is given by the condition that
\begin{align}\label{strong_iv_cond}
\pi_T\to \pi^* \text{ where } \pi_i^*>0 \text{ for all } i.
\end{align}
In both cases, we assume the availability of a consistent estimator $\tilde \Sigma$ for the asymptotic variance of the reduced form estimators:
\begin{align}\label{Sigma_consistency_cond}
\tilde\Sigma\stackrel{p}{\to} \Sigma^*.
\end{align}
The estimator is then formed as
\begin{align*}
&\hat\beta_{RB}(\tilde\xi,\tilde\Sigma/T,\hat w)
=E_{\tilde\Sigma/T}\left[
\hat \beta_w(\tilde\xi^{(a)},2\tilde\Sigma/T,\hat w(\tilde\xi^{(b)}))\Big|\tilde\xi
\right]  \\
&=\int
\hat \beta_w(\tilde\xi+T^{-1/2}\tilde\Sigma^{1/2}\eta,2\tilde\Sigma/T,\hat w(\tilde\xi-T^{-1/2}\tilde\Sigma^{1/2}\eta))\,
dP_{N(0,I_{2k})}(\eta)
\end{align*}
where
$\tilde\xi^{(a)}=\tilde\xi+T^{-1/2}\tilde\Sigma^{1/2}\eta$
and
$\tilde\xi^{(b)}=\tilde\xi-T^{-1/2}\tilde\Sigma^{1/2}\eta$
for $\eta\sim N(0,I_{2k})$ independent of $\tilde\xi$ and $\tilde\Sigma$, and we use the subscript in the expectation to denote the dependence of the conditional distribution of $\tilde\xi^{(a)}$ and $\tilde\xi^{(b)}$ on $\tilde\Sigma/T$.  In the single instrument case, $\hat\beta_{RB}(\tilde\xi,\tilde\Sigma/T,\hat w)$ reduces to $\hat\beta_{U}(\tilde\xi,\tilde\Sigma/T)$.

For the weights $\hat w$, we assume that
$\hat w(\xi^{(b)})$ is bounded and continuous in $\xi^{(b)}$ with 
$\sum_{i=1}^k \hat w_i(\xi^{(b)})=1$
and
$\hat w_i(a\xi^{(b)})=\hat w_i(\xi^{(b)})$
 for any scalar $a$, as holds for all the weights discussed above.
Using the fact that $\hat\beta_U(\sqrt{a} x, a \Omega)=\hat\beta_U(x, \Omega)$ for any scalar $a$ and any $x$ and $\Omega$, we have,
under the above conditions on $\hat w$,
\begin{align*}
\hat\beta_{RB}(\tilde\xi,\tilde\Sigma/T,\hat w)
=\int
\hat \beta_w(\sqrt{T}\tilde\xi+\tilde\Sigma^{1/2}\eta,2\tilde\Sigma,\hat w(\sqrt{T}\tilde\xi-\tilde\Sigma^{1/2}\eta))\,
dP_{N(0,I_{2k})}(\eta)
=\hat\beta_{RB}(\sqrt{T}\tilde\xi,\tilde\Sigma,\hat w).
\end{align*}
Thus, we can focus on the behavior of $\sqrt{T}\tilde\xi$ and $\tilde\Sigma$, which are asymptotically nondegenerate in the weak instrument case.

\subsection{Weak Instrument Case}\label{weak_iv_nonnormal_sec}

The following theorem shows that the estimator $\hat\beta_{RB}$ converges in distribution to a random variable with mean $\beta$.  Note that, since convergence in distribution does not imply convergence of moments, this does not imply that the bias of $\hat\beta_{RB}$ converges to zero.  While it seems likely this stronger form of asymptotic unbiasedness could be achieved under further conditions by truncating $\hat\beta_{RB}$ at a slowly increasing sequence of points, we leave this extension for future research.

\begin{theorem}
Let 
(\ref{rf_clt_eq})
(\ref{weak_iv_cond})
and
(\ref{Sigma_consistency_cond})
hold, and suppose that
$\hat w(\xi^{(b)})$ is bounded and continuous in $\xi^{(b)}$
with
$\hat w_i(a\xi^{(b)})=\hat w_i(\xi^{(b)})$ for any scalar $a$.  Then
\begin{align*}
\hat\beta_{RB}(\tilde\xi,\tilde\Sigma/T,\hat w)
=\hat\beta_{RB}(\sqrt{T}\tilde\xi,\tilde\Sigma,\hat w)
\stackrel{d}{\to} \hat\beta_{RB}(\xi^*,\Sigma^*,\hat w)
\end{align*}
where $\xi^*\sim N(({\pi^*}'\beta,{\pi^*}')',\Sigma^*)$ and
$E\left[\hat\beta_{RB}(\xi^*,\Sigma^*,\hat w)\right]=\beta$.
\end{theorem}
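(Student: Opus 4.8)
The plan is to treat this as a continuous mapping argument layered on top of the exact finite-sample unbiasedness of $\hat\beta_{RB}$. First I would record the joint weak limit of the inputs. Writing $\sqrt{T}\tilde\xi=\sqrt{T}(\tilde\xi-(\pi_T'\beta,\pi_T')')+(\sqrt{T}\pi_T'\beta,\sqrt{T}\pi_T')'$, the central limit theorem (\ref{rf_clt_eq}) together with the weak-instrument scaling (\ref{weak_iv_cond}) gives $\sqrt{T}\tilde\xi\stackrel{d}{\to}N(({\pi^*}'\beta,{\pi^*}')',\Sigma^*)=\xi^*$; combined with $\tilde\Sigma\stackrel{p}{\to}\Sigma^*$ from (\ref{Sigma_consistency_cond}) and Slutsky's theorem, $(\sqrt{T}\tilde\xi,\tilde\Sigma)\stackrel{d}{\to}(\xi^*,\Sigma^*)$ jointly, where $\Sigma^*$ is a constant. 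The scale equivariance established just before the theorem supplies the first equality in the display, so it remains to show $\hat\beta_{RB}(\sqrt{T}\tilde\xi,\tilde\Sigma,\hat w)\stackrel{d}{\to}\hat\beta_{RB}(\xi^*,\Sigma^*,\hat w)$. I would obtain this from the continuous mapping theorem by proving that $(\xi,\Sigma)\mapsto\hat\beta_{RB}(\xi,\Sigma,\hat w)$ is continuous at every $(\xi,\Sigma)$ with $\Sigma$ positive definite, a set carrying probability one under the limit since $\Sigma^*$ is positive definite.

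For the continuity I would fix $(\xi_0,\Sigma_0)$ with $\Sigma_0$ positive definite, take $(\xi_n,\Sigma_n)\to(\xi_0,\Sigma_0)$ inside a compact neighborhood $K$ on which $\Sigma$ stays bounded away from singularity, and apply dominated convergence to the representation $\hat\beta_{RB}(\xi,\Sigma,\hat w)=\int\hat\beta_w(\xi+\Sigma^{1/2}\eta,2\Sigma;\hat w(\xi-\Sigma^{1/2}\eta))\,dP_{N(0,I_{2k})}(\eta)$. Pointwise convergence of the integrand in $\eta$ is routine: $\Sigma\mapsto\Sigma^{1/2}$ is continuous on positive definite matrices, $\hat w$ is continuous by hypothesis, and the closed form of $\hat\beta_U$ from Theorem \ref{Thm: Unbiased Estimator, Just-identified Case} is continuous wherever $\Sigma_{22,ii}>0$.

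The real work, and the step I expect to be the main obstacle, is producing a dominating function integrable uniformly over $K$, because $\hat\beta_w$ inherits the heavy tails of $\hat\tau$ (infinite $1+\varepsilon$ moments, Lemma \ref{tau_second_moment_lemma}). The delicate point is that a crude bound destroys integrability, and one must exploit the exact matching between the variance argument of $\hat\tau$ and the variance of the relevant linear functional of $\eta$. Concretely, for the $i$th term set $\sigma_a=\sqrt{\Sigma_{22,ii}}$ and note $(\Sigma^{1/2}\eta)_{k+i}=\sigma_a s$ with $s=\hat a'\eta$ and $|s|\le\|\eta\|$, where $\hat a$ is a unit vector. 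The Mills-ratio bound $\frac{1-\Phi(x)}{\phi(x)}\le\sqrt{2\pi}\,e^{x^2/2}$ (see \citet{baricz_mills_2008}) then gives $\hat\tau(\xi_{2,i}+\sigma_a s,2\sigma_a^2)\le C\exp\!\big((\xi_{2,i}/\sigma_a+s)^2/4\big)$, where the crucial coefficient $1/4$ appears precisely because $\|a\|^2=\sigma_a^2$ equals the variance parameter $\Sigma_{22,ii}$; decoupling these two quantities is exactly what would inflate the coefficient past $1/2$ and break the argument. Since $\xi_{2,i}/\sigma_a$ is bounded over $K$ and $|s|\le\|\eta\|$, this is at most $C\exp((R+\|\eta\|)^2/4)$, and multiplying by the at-most-linear factor $|\xi_{1,i}-(\Sigma_{12,ii}/\Sigma_{22,ii})\xi_{2,i}|+(\Sigma^{1/2}\eta)$-terms and the bounded weights yields a dominating function whose exponent, after integrating against the density proportional to $e^{-\|\eta\|^2/2}$, is $\|\eta\|^2/4-\|\eta\|^2/2=-\|\eta\|^2/4<0$; hence it is integrable. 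Dominated convergence then gives continuity, and the continuous mapping theorem delivers the weak limit.

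Finally, for the mean of the limit I would observe that $\hat\beta_{RB}(\xi^*,\Sigma^*,\hat w)$ is literally the finite-sample Rao-Blackwellized estimator at parameters $(\pi^*,\beta)$ and known variance $\Sigma^*$, and that $\pi^*$ has strictly positive entries by (\ref{weak_iv_cond}). Hence the unbiasedness computation (\ref{hatbetaw_unbiasedness_eq}) together with the law of iterated expectations gives $E[\hat\beta_{RB}(\xi^*,\Sigma^*,\hat w)]=\beta$; the finiteness of the first absolute moment needed to justify the interchange of expectation and summation follows from Lemma \ref{mad_bound_lemma}.
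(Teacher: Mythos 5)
Your proposal is correct, and its skeleton --- reduce the claim via the continuous mapping theorem to continuity of $(\xi,\Sigma)\mapsto\hat\beta_{RB}(\xi,\Sigma,\hat w)$ at positive definite $\Sigma$, then establish continuity by passing limits through the integral over $\eta$ --- is the same as the paper's. Where you genuinely diverge is in the key integrability step. You work at the level of first moments and apply dominated convergence with an explicit dominating function, obtained by feeding the crude Mills-ratio bound $(1-\Phi(x))/\phi(x)\le\sqrt{2\pi}\,e^{x^2/2}$ into $\hat\tau$ and observing that the doubled variance $2\Sigma(i)$ in the Rao--Blackwell construction caps the exponent at $(\cdot)^2/4$, which loses against the Gaussian weight $e^{-\|\eta\|^2/2}$; your identification of the matching between $\|a\|^2=\Sigma_{22,ii}$ and the variance argument as the load-bearing point is exactly right. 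The paper instead proves uniform integrability by bounding the $p$th moment of the integrand, $1<p<2$, uniformly over a neighborhood: it routes the computation through the auxiliary functions $R$ and $\tilde R$ (Lemma \ref{R_tildeR_lemma}) and shows boundedness of $\tilde R$ for $p<2$ (Lemma \ref{tildeR_bound_lemma}), where precisely the same $p/4<1/2$ comparison appears. The two arguments exploit the identical analytic fact; your version is more elementary and self-contained for this theorem, while the paper's $L^p$/$\tilde R$ machinery is an investment that gets reused in the strong-instrument equivalence proof (via Lemma \ref{tildeR_larget_bound_lemma}), which requires moment bounds of this type rather than a single dominating function. A small additional merit of your write-up is that you explicitly verify $E[\hat\beta_{RB}(\xi^*,\Sigma^*,\hat w)]=\beta$ from (\ref{hatbetaw_unbiasedness_eq}), positivity of $\pi^*$, and a moment bound justifying the interchange, a point the paper's proof leaves implicit.
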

\begin{proof}
Since $\sqrt{T}\tilde\xi\stackrel{d}{\to} \xi^*$ and $\tilde\Sigma\stackrel{p}{\to}\Sigma^*$, the first display follows by the continuous mapping theorem so long as $\hat\beta_{RB}(\xi^*,\Sigma^*,\hat w)$ is continuous in $\xi^*$ and $\Sigma^*$.  Since
\begin{align}\label{betarb_unif_int_eq}
\hat\beta_{RB}(\xi^*,\Sigma^*,\hat w)
=\int \hat \beta_w(\xi^*+{\Sigma^*}^{1/2}\eta,2\Sigma^*,\hat w(\xi^*-{\Sigma^*}^{1/2}\eta))\,
dP_{N(0,I_{2k})}(\eta)
\end{align}
and the integrand is continuous in $\xi^*$ and $\Sigma^*$, it suffices to show uniform integrability over $\xi^*$ and $\Sigma^*$ in an arbitrarily small neighborhood of any point.  The $p$th moment of the integrand in the above display is bounded by a constant times the sum over $i$ of
\begin{align*}
\int\int \left|\hat \beta_U(\xi^*(i)+{\Sigma^*}^{1/2}(i)z,2\Sigma^*(i))\right|^p
\phi(z_1)\phi(z_2)
\, dz_1 dz_2
=R(\xi^*(i),\Sigma^*(i),0,p),
\end{align*}
where $R$ is defined below in Section \ref{auxiliary_lemmas_sec}.  By Lemma \ref{R_tildeR_lemma} below, this is equal to
\begin{align*}
\tilde R\left(\frac{\xi_2^*(i)}{\sqrt{\Sigma_{22}^*(i)}},
\frac{\xi_1^*(i)}{\sqrt{\Sigma_{22}^*(i)}},
\left(\frac{\Sigma_{11}^*(i)}{\Sigma_{22}^*(i)}-\frac{{\Sigma_{12}^*(i)}^2}{{\Sigma_{22}^*(i)}^2}\right)^{1/2},
-\frac{\Sigma_{12}^*(i)}{\Sigma_{22}^*(i)},p\right),
\end{align*}
which is bounded uniformly over a small enough neighborhood of any $\xi^*$ and $\Sigma^*$ with $\Sigma^*$ positive definite by Lemma \ref{tildeR_bound_lemma} below so long as $p<2$.  Setting $1<p<2$, it follows that uniform integrability holds for (\ref{betarb_unif_int_eq}) so that $\hat\beta_{RB}(\xi^*,\Sigma^*,\hat w)$ is continuous, thereby giving the result.
\end{proof}

\subsection{Strong Instrument Asymptotics}\label{strong_iv_nonnormal_sec}

Let $\tilde W(\tilde\xi^{(b)},\tilde \Sigma)$
and $\hat W$ be weighting matrices that converge in probability to some positive definite symmetric matrix $W$.
Let
\begin{align*}
\hat w^*_{GMM,i}(\tilde\xi^{(b)})
=\frac{\tilde\xi^{(b)\prime}_2\tilde W(\tilde\xi^{(b)},\tilde\Sigma)e_ie_i'\tilde\xi_2^{(b)}}
  {{\tilde\xi_2^{(b)\prime}}\tilde W(\tilde\xi^{(b)},\tilde\Sigma){\tilde\xi_2^{(b)}}},
\end{align*}
where $e_i$ is the $i$th standard basis vector in $\mathbb{R}^k$, and let
\begin{align*}
\hat\beta_{GMM,\hat W}=\frac{\tilde\xi_2'\hat W\tilde\xi_1}
  {\tilde\xi_2'\hat W\tilde\xi_1}.
\end{align*}

The following theorem shows that $\hat\beta_{GMM,\hat W}$ and $\hat\beta_{RB}(\sqrt{T}\tilde\xi,\tilde\Sigma,\hat w^*_{GMM})$ are asymptotically equivalent in the strong instrument case.
For the case where $\tilde W(\tilde\xi^{(b)},\tilde \Sigma)=\hat W=Z'Z/T$, this gives asymptotic equivalence to 2SLS.

\begin{theorem}
Let $\tilde W(\tilde\xi^{(b)},\tilde \Sigma)$ and $\hat W$ be weighting matrices that converge in probability to the same positive definite matrix $W$, such that $\hat w^*_{GMM,i}$ defined above is uniformly bounded over $\tilde\xi^{(b)}$.  Then, under (\ref{rf_clt_eq}), (\ref{strong_iv_cond}) and (\ref{Sigma_consistency_cond}),
\begin{align*}
\sqrt{T}\left(\hat\beta_{RB}(\sqrt{T}\tilde\xi,\tilde\Sigma,\hat w^*_{GMM})-\hat\beta_{GMM,\hat W}\right)
\stackrel{p}{\to} 0.
\end{align*}
\end{theorem}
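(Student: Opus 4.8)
The plan is to reduce the statement to the finite-sample ``large $\pi$'' analysis behind Theorem \ref{rb_asym_efficiency_thm} and then reuse that proof's three-step decomposition. The entry point is the scale-equivariance identity $\hat\beta_{RB}(\tilde\xi,\tilde\Sigma/T,\hat w)=\hat\beta_{RB}(\sqrt T\tilde\xi,\tilde\Sigma,\hat w)$ derived just above the theorem, which lets me treat $\hat\beta_{RB}(\,\cdot\,,\,\cdot\,,\hat w^*_{GMM})$ as a deterministic function evaluated at $(\sqrt T\tilde\xi,\tilde\Sigma)$. Under (\ref{rf_clt_eq}) and (\ref{strong_iv_cond}) we have $\sqrt T\tilde\xi=(\sqrt T\pi_T\beta,\sqrt T\pi_T)'+\mathcal O_P(1)$, so the effective first-stage mean $\sqrt T\pi_T$ diverges componentwise at rate $\sqrt T$, while by (\ref{Sigma_consistency_cond}) the variance $\tilde\Sigma\stackrel{p}{\to}\Sigma^*$ is asymptotically fixed. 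Setting $\pi=\sqrt T\pi_T$, the ratio $\|\pi\|/\min_i\pi_i=\|\pi_T\|/\min_i\pi_{T,i}\to\|\pi^*\|/\min_i\pi^*_i=\mathcal O(1)$ because $\pi^*$ has strictly positive entries, so the hypotheses of Theorem \ref{rb_asym_efficiency_thm} are met; the only new features are that $\tilde\Sigma$ is random rather than a fixed $\Sigma$ and that $\tilde W$ and $\hat W$ are random matrices sharing the common probability limit $W$.

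First I would set up the same oracle quantities as in the appendix proof of Theorem \ref{rb_asym_efficiency_thm}: the scale-invariant oracle weights $w^*_i=\pi^{*\prime}We_ie_i'\pi^*/(\pi^{*\prime}W\pi^*)$, the oracle ratio combination $\hat\beta^o_{2SLS}=\sum_{i=1}^k w^*_i\,\tilde\xi_{1,i}/\tilde\xi_{2,i}$, and the oracle Rao-Blackwell estimator $\hat\beta^o_{RB}=\sum_{i=1}^k w^*_i\,\hat\beta_U(\sqrt T\tilde\xi(i),\tilde\Sigma(i))$; the last identity holds pointwise by the completeness argument in the proof of Lemma \ref{deterministic_w_lemma}, which shows that the $\eta$-convolution of $\hat\beta_U(\,\cdot\,,2\tilde\Sigma(i))$ coincides with $\hat\beta_U(\,\cdot\,,\tilde\Sigma(i))$ as functions. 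The target then telescopes through $\hat\beta_{GMM,\hat W}-\hat\beta^o_{2SLS}$, $\hat\beta^o_{2SLS}-\hat\beta^o_{RB}$, and $\hat\beta^o_{RB}-\hat\beta^*_{RB}$, each to be shown $o_P(T^{-1/2})$. For the first difference I would write it as $\sum_i(\text{sample weight}_i-w^*_i)(\tilde\xi_{1,i}/\tilde\xi_{2,i}-\beta)$ (recentering by $\beta$ is free since both weight vectors sum to one), use $\sqrt T(\tilde\xi_{1,i}/\tilde\xi_{2,i}-\beta)=\mathcal O_P(1)$ from the delta method (as $\tilde\xi_{2,i}\stackrel{p}{\to}\pi^*_i>0$), and send the weights to $w^*_i$ via $\hat W\stackrel{p}{\to}W$ and scale invariance. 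For the second difference, $\hat\beta_U$ and the single-instrument ratio differ only in $\hat\tau$ versus $1/\tilde\xi_{2,i}$, so the bound $\sigma_2|\hat\tau-1/\xi_2|\le\sigma_2^3/\xi_2^3$ of \citet{Small2010} at $\xi_2=\sqrt T\tilde\xi_{2,i}\asymp\sqrt T$ and $\sigma_2^2=\tilde\Sigma_{22,ii}=\mathcal O_P(1)$ makes each term $\mathcal O_P(T^{-1})$, i.e. $o_P(T^{-1/2})$ after scaling; this is just Theorem \ref{Thm: Limiting Distribution, Just-identified Case} with $\pi$ replaced by $\sqrt T\pi_{T,i}$.

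The third difference is where I expect the real difficulty. Writing
\begin{align*}
\hat\beta^o_{RB}-\hat\beta^*_{RB}
=\int\sum_{i=1}^k\big(w^*_i-\hat w^*_{GMM,i}(\tilde\xi^{(b)})\big)\big(\hat\beta_U(\tilde\xi^{(a)}(i),2\tilde\Sigma(i))-\beta\big)\,dP_{N(0,I_{2k})}(\eta),
\end{align*}
I would like to mimic the last lemma of the Theorem \ref{rb_asym_efficiency_thm} proof, which factors the bound into a vanishing weight term times a uniformly controlled mean-absolute-deviation term. The obstacle is that that factorization rested on the \emph{unconditional independence} of the split pseudo-samples $\xi^{(a)}$ and $\xi^{(b)}$, a property special to the normal model: here $\tilde\xi^{(a)}=\sqrt T\tilde\xi+\tilde\Sigma^{1/2}\eta$ and $\tilde\xi^{(b)}=\sqrt T\tilde\xi-\tilde\Sigma^{1/2}\eta$ are built from the \emph{same} Gaussian draw $\eta$ (and a non-Gaussian $\tilde\xi$), hence are not independent, so the integral over $\eta$ does not split. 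The hard part is therefore to control this non-factorizing integral, and my plan is to apply H\"older's inequality with exponents $p\in(1,2)$ and $q=p/(p-1)>2$: the weight factor $\|w^*_i-\hat w^*_{GMM,i}(\tilde\xi^{(b)})\|_{L^q}\to0$ by bounded convergence, using the assumed uniform boundedness of $\hat w^*_{GMM,i}$ together with $\hat w^*_{GMM,i}(\tilde\xi^{(b)})\stackrel{p}{\to}w^*_i$, while the remaining factor requires a uniform bound $\sqrt T\pi_{T,i}\,\|\hat\beta_U(\tilde\xi^{(a)}(i),2\tilde\Sigma(i))-\beta\|_{L^p}=\mathcal O(1)$. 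This last bound is an $L^p$ strengthening of Lemma \ref{mad_bound_lemma}, legitimate precisely because $\hat\beta_U$ has finite absolute moments of every order below $2$; I would obtain it, uniformly over $\tilde\Sigma$ in a compact neighborhood $\mathcal K$ of $\Sigma^*$, from the uniform moment machinery of Lemmas \ref{R_tildeR_lemma} and \ref{tildeR_bound_lemma} in Section \ref{auxiliary_lemmas_sec}, restricting to the event $\{\tilde\Sigma\in\mathcal K\}$, which has probability tending to one since $\tilde\Sigma\stackrel{p}{\to}\Sigma^*$ and $\Sigma^*$ is positive definite. Combining the vanishing weight factor with the $\mathcal O(T^{-1/2})$ deviation factor yields $\sqrt T(\hat\beta^o_{RB}-\hat\beta^*_{RB})\stackrel{p}{\to}0$, and chaining the three steps gives the theorem.
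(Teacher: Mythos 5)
Your proposal follows essentially the same route as the paper's own proof: the same scale-equivariance reduction, the same three-term oracle decomposition into $\hat\beta_{GMM,\hat W}-\hat\beta^o_{2SLS}$, $\hat\beta^o_{2SLS}-\hat\beta^o_{RB}$, and $\hat\beta^o_{RB}-\hat\beta^*_{RB}$, the same recentering/Slutsky and \citet{Small2010} Mills-ratio arguments for the first two terms, and—for the non-factorizing third term, whose difficulty (loss of independence of $\tilde\xi^{(a)}$ and $\tilde\xi^{(b)}$ outside the exact normal model) you correctly identify—the same conditional H\"{o}lder inequality with $p\in(1,2)$, $q>2$, splitting into a vanishing weight factor and a uniformly controlled $L^p$ moment factor. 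The one slip is a citation: the bound you need, $\sqrt{T}\pi_{T,i}\|\hat\beta_U-\beta\|_{L^p}=\mathcal{O}(1)$, requires the large-$t$ decay of Lemma \ref{tildeR_larget_bound_lemma} (boundedness of $t\tilde R(t,c_1,c_2,c_3,p)^{1/p}$ as $t\to\infty$), not the compact-set bound of Lemma \ref{tildeR_bound_lemma}, since here $t\asymp\sqrt{T}\to\infty$ and mere boundedness of $\tilde R$ would leave an uncontrolled factor of $\sqrt{T}$.
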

\begin{proof}
As with the normal case, define the oracle linear combination of 2SLS estimators
\begin{align*}
\hat\beta^o_{2SLS}=\sum_{i=1}^k w_i^*\frac{\tilde\xi_{1,i}}{\tilde\xi_{2,i}}
\end{align*}
where $w_i^*=\frac{{\pi^*}'W e_ie_i'{\pi^*}}{{\pi^*}'W {\pi^*}}$.
We have
$\sqrt{T}\left(\hat\beta_{RB}(\sqrt{T}\tilde\xi,\tilde\Sigma,\hat w^*_{GMM})-\hat\beta_{GMM,\hat W}\right)
=I+II+III$
where
$I\equiv \sqrt{T}(\hat\beta_{RB}(\sqrt{T}\tilde\xi,\tilde\Sigma,\hat w^*_{GMM})-\hat\beta_{RB}(\sqrt{T}\tilde\xi,\tilde\Sigma,w^*))$,
$II\equiv \sqrt{T}(\hat\beta_{RB}(\sqrt{T}\tilde\xi,\tilde\Sigma,w^*)-\hat\beta^o_{2SLS})$ and
$III\equiv \sqrt{T}(\hat\beta^o_{2SLS}-\hat\beta_{GMM,\hat W})$.

For the first term, note that
\begin{align*}
&I=\sqrt{T}\sum_{i=1}^k E_{\tilde\Sigma}\left[\left(\hat w^*_{GMM,i}(\tilde\xi^{(b)})-w_i^*\right)\hat\beta_U(\sqrt{T}\tilde\xi^{(a)}(i),\tilde\Sigma(i))\Big|\tilde\xi\right]  \\
&=\sqrt{T}\sum_{i=1}^k E_{\tilde\Sigma}\left[\left(\hat w^*_{GMM,i}(\tilde\xi^{(b)})-w_i^*\right)\left(\hat\beta_U(\sqrt{T}\tilde\xi^{(a)}(i),\tilde\Sigma(i))-\beta\right)\Big|\tilde\xi\right]
\end{align*}
where the last equality follows since 
$\sum_{i=1}^k \hat w^*_{GMM,i}(\tilde\xi^{(b)})=\sum_{i=1}^k w_i^*=1$
with probability one.
Thus, by H\"{o}lder's inequality,
\begin{align*}
|I|\le \sqrt{T}\sum_{i=1}^k \left(E_{\tilde\Sigma}\left[\left|\hat w^*_{GMM,i}(\tilde\xi^{(b)})-w_i^*\right|^q\Big|\tilde\xi\right]\right)^{1/q}
\left(E_{\tilde\Sigma}\left[\left|\hat\beta_U(\sqrt{T}\tilde\xi^{(a)}(i),\tilde\Sigma(i))-\beta\right|^p\Big|\tilde\xi\right]\right)^{1/p}
\end{align*}
for any $p$ and $q$ with $p,q>1$ and $1/p+1/q=1$ such that these conditional expectations exist.  Under (\ref{strong_iv_cond}),
$\hat w^*_{GMM,i}(\tilde\xi^{(b)})\stackrel{p}{\to}w_i^*$ so, since
$\hat w^*_{GMM,i}(\tilde\xi^{(b)})$ is uniformly bounded,
$E_{\tilde\Sigma}\left[\left|\hat w^*_{GMM,i}(\tilde\xi^{(b)})-w_i^*\right|^q\Big|\tilde\xi\right]$
will converge to zero for any $q$.
Thus, for this term, it suffices to bound
\begin{align*}
&\sqrt{T}\left(E_{\tilde\Sigma}\left[\left|\hat\beta_U(\sqrt{T}\tilde\xi^{(a)}(i),\tilde\Sigma(i))-\beta\right|^p\Big|\tilde\xi\right]\right)^{1/p}
=\sqrt{T}R\left(\sqrt{T}\tilde\xi(i),\tilde\Sigma(i),\beta,p\right)^{1/p}  \\
&=\sqrt{T}\tilde R\left(
\frac{\sqrt{T}\tilde\xi_2(i)}{\sqrt{\tilde\Sigma_{22}(i)}},
\frac{\sqrt{T}(\tilde\xi_1(i)-\beta\tilde\xi_2(i))}{\sqrt{\tilde\Sigma_{22}(i)}},
\left(\frac{\tilde\Sigma_{11}(i)}{\tilde\Sigma_{22}(i)}
-\frac{\tilde\Sigma_{12}(i)^2}{\tilde\Sigma_{22}(i)^2}\right)^{1/2},
\beta-\frac{\tilde\Sigma_{12}(i)}{\tilde\Sigma_{22}(i)},p
\right)^{1/p}
\end{align*}
for $R$ and $\tilde R$ as defined in Section \ref{auxiliary_lemmas_sec} below.
By Lemma \ref{tildeR_larget_bound_lemma} below, this is equal to $\sqrt{\tilde\Sigma_{22}(i)}/\tilde\xi_2(i)$ times a $\mathcal{O}_P(1)$ term for $p<2$.  Since $\sqrt{\tilde\Sigma_{22}(i)}/\tilde\xi_2(i)\stackrel{p}{\to}\sqrt{\Sigma_{22}^*(i)}/\pi_i^*$, it follows that the above display is also $\mathcal{O}_P(1)$.  Thus, $I\stackrel{p}{\to} 0$.

For the second term, we have
\begin{align*}
&II=\sqrt{T}\sum_{i=1}w_i^*\left(\hat\beta_U(\sqrt{T}\tilde\xi(i),\tilde\Sigma(i))-\frac{\tilde\xi_1(i)}{\tilde\xi_2(i)}\right)  \\
&=\sqrt{T}\sum_{i=1}w_i^*\left(\sqrt{T}\tilde\xi_2(i)\hat\tau(\sqrt{T}\tilde\xi_2(i),\tilde\Sigma_{22}(i))-1\right)
  \left(\frac{\tilde\xi_1(i)}{\tilde\xi_2(i)}-\frac{\tilde\Sigma_{12}(i)}{\tilde\Sigma_{22}(i)}\right).
\end{align*}
For each $i$, $\frac{\tilde\xi_1(i)}{\tilde\xi_2(i)}-\frac{\tilde\Sigma_{12}(i)}{\tilde\Sigma_{22}(i)}$ converges in probability to a finite constant and, by Section 2.3.4 of \citet{Small2010},
\begin{align*}
\sqrt{T}\left|\sqrt{T}\tilde\xi_2(i)\hat\tau(\sqrt{T}\tilde\xi_2(i),\tilde\Sigma_{22}(i))-1\right|
\le \sqrt{T}\frac{\tilde\Sigma_{22}(i)}{T\tilde\xi_2(i)^2}\stackrel{p}{\to} 0.
\end{align*}

The third term converges in probability to zero by standard arguments.  We have
\begin{align*}
III=\sqrt{T}
\sum_{i=1}^k\left(w_i^*-\frac{\tilde \xi_2' \hat W e_ie_i'\tilde\xi_2}{\tilde \xi_2' \hat W \tilde\xi_2}\right)\frac{\tilde\xi_{1,i}}{\tilde\xi_{2,i}}
=\sqrt{T}
\sum_{i=1}^k\left(w_i^*-\frac{\tilde \xi_2' \hat W e_ie_i'\tilde\xi_2}{\tilde \xi_2' \hat W \tilde\xi_2}\right)\left(\frac{\tilde\xi_{1,i}}{\tilde\xi_{2,i}}-\beta\right),
\end{align*}
where the last equality follows since $\sum_{i=1}^kw_i^*=\sum_{i=1}^k\frac{\tilde \xi_2' \hat W e_ie_i'\tilde\xi_2}{\tilde \xi_2' \hat W \tilde\xi_2}$ with probability one.  The result then follows from Slutsky's theorem.

\end{proof}

\subsection{Auxiliary Lemmas}\label{auxiliary_lemmas_sec}

For $p\ge 1$, $x\in\mathbb{R}^2$, $\Omega$ a $2\times 2$ matrix and $b\in\mathbb{R}$, let
\begin{align*}
R(x,\Omega,b,p)
=\int\int \left|\hat\beta_U(x+\Omega^{1/2}z,2\Omega)-b\right|^p
\phi(z_1)\phi(z_2)\, dz_1dz_2
\end{align*}
and let
\begin{align*}
\tilde R(t,c_1,c_2,c_3,p)
=\int\int
\left|
\hat\tau(t+z_2,2)(c_1+c_2z_1)
+\left[\hat\tau(t+z_2,2)t-1\right] c_3\right|^p
\phi(z_1)\phi(z_2)\, dz_1dz_2.
\end{align*}

\begin{lemma}\label{R_tildeR_lemma}
For $R$ and $\tilde R$ defined above,
\begin{align*}
R(x,\Omega,b,p)
=\tilde R\left(\frac{x_2}{\sqrt{\Omega_{22}}},\frac{x_1-b x_2}{\sqrt{\Omega_{22}}},
\left(\frac{\Omega_{11}}{\Omega_{22}}-\frac{\Omega_{12}^2}{\Omega_{22}^2}\right)^{1/2},
b-\frac{\Omega_{12}}{\Omega_{22}},p\right)
\end{align*}
\end{lemma}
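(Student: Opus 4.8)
The plan is to recognize $R$ as a genuine expectation and exploit the same orthogonalization/independence structure that underlies $\hat\beta_U$, reducing the two-dimensional Gaussian integral to the form defining $\tilde R$ after a linear reparametrization. First I would rewrite $R(x,\Omega,b,p)=E\left[\left|\hat\beta_U(\xi,2\Omega)-b\right|^p\right]$, where $\xi\sim N(x,\Omega)$: this is legitimate because $x+\Omega^{1/2}z$ has exactly law $N(x,\Omega)$ when $z\sim N(0,I_2)$, and the integrand depends on $z$ only through $\xi$. Using the definition from Theorem \ref{Thm: Unbiased Estimator, Just-identified Case}, and noting that feeding the covariance matrix $2\Omega$ into $\hat\beta_U$ produces $\sigma_{12}/\sigma_2^2=2\Omega_{12}/(2\Omega_{22})=\Omega_{12}/\Omega_{22}$, I would write
\[
\hat\beta_U(\xi,2\Omega)=\hat\tau(\xi_2,2\Omega_{22})\left(\xi_1-\frac{\Omega_{12}}{\Omega_{22}}\xi_2\right)+\frac{\Omega_{12}}{\Omega_{22}}.
\]

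The key structural observation is that, exactly as in the proof of Theorem \ref{Thm: Unbiased Estimator, Just-identified Case}, the variables $\xi_2$ and $\xi_1-\frac{\Omega_{12}}{\Omega_{22}}\xi_2$ are jointly normal with zero covariance under the law $N(x,\Omega)$, hence independent. Here it is worth noting the factor-of-two cancellation that makes this clean: the orthogonalization coefficient $\Omega_{12}/\Omega_{22}$ built into $\hat\beta_U(\cdot,2\Omega)$ is simultaneously the coefficient that zeros out the covariance under $\Omega$. I would record the marginals $\xi_2\sim N(x_2,\Omega_{22})$ and $\xi_1-\frac{\Omega_{12}}{\Omega_{22}}\xi_2\sim N\left(x_1-\frac{\Omega_{12}}{\Omega_{22}}x_2,\ \Omega_{11}-\frac{\Omega_{12}^2}{\Omega_{22}}\right)$, and then substitute $\xi_2=x_2+\sqrt{\Omega_{22}}\,z_2$ together with $\xi_1-\frac{\Omega_{12}}{\Omega_{22}}\xi_2=\left(x_1-\frac{\Omega_{12}}{\Omega_{22}}x_2\right)+\sqrt{\Omega_{11}-\frac{\Omega_{12}^2}{\Omega_{22}}}\,z_1$ for independent standard normals $z_1,z_2$, which turns the expectation into a double integral against $\phi(z_1)\phi(z_2)$ as in $\tilde R$.

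The remaining step brings $\hat\tau$ into normalized form. Invoking the homogeneity $\hat\tau(aw,a^2v)=a^{-1}\hat\tau(w,v)$, immediate from the formula in Lemma \ref{Lemma:Voinov Nikulin}, with $a=\sqrt{\Omega_{22}}$ and $v=2$ gives $\hat\tau(\xi_2,2\Omega_{22})=\frac{1}{\sqrt{\Omega_{22}}}\hat\tau(t+z_2,2)$ where $t=x_2/\sqrt{\Omega_{22}}$. Substituting and collecting the coefficient multiplying $\hat\tau(t+z_2,2)$ produces the integrand of $\tilde R$ with $c_2=\left(\Omega_{11}/\Omega_{22}-\Omega_{12}^2/\Omega_{22}^2\right)^{1/2}$ from the $z_1$ term, with $c_3=b-\Omega_{12}/\Omega_{22}$ matching the additive constant (since $\Omega_{12}/\Omega_{22}-b=-c_3$), and with constant part of the $\hat\tau$-coefficient equal to $\frac{1}{\sqrt{\Omega_{22}}}\left(x_1-\frac{\Omega_{12}}{\Omega_{22}}x_2\right)$, which must equal $c_1+c_3 t$.

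The only genuine computation is this last matching: with $c_3=b-\Omega_{12}/\Omega_{22}$ and $t=x_2/\sqrt{\Omega_{22}}$ one checks
\[
c_1=\frac{1}{\sqrt{\Omega_{22}}}\left(x_1-\frac{\Omega_{12}}{\Omega_{22}}x_2\right)-c_3 t=\frac{x_1-bx_2}{\sqrt{\Omega_{22}}},
\]
the two $\frac{\Omega_{12}}{\Omega_{22}}x_2/\sqrt{\Omega_{22}}$ contributions cancelling. This is exactly the claimed $c_1$, so the integrands coincide pointwise and $R$ equals $\tilde R$ at the stated arguments. I do not anticipate any real obstacle; the argument is essentially bookkeeping, and the one point demanding care is keeping the covariance $\Omega$ of $\xi$ separate from the covariance $2\Omega$ fed into $\hat\beta_U$—a discrepancy the homogeneity of $\hat\tau$ and the definitions of $c_1,c_3$ are precisely designed to absorb.
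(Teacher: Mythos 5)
Your proposal is correct and follows essentially the same route as the paper's proof: the paper implements your orthogonalization step by choosing the upper-triangular square root of $\Omega$ (so that $\xi_2$ loads only on $z_2$ and $\xi_1-\frac{\Omega_{12}}{\Omega_{22}}\xi_2$ only on $z_1$), which is exactly your independence/reparametrization argument in matrix form, and it then applies the same scale property of $\hat\tau$ and the same coefficient matching for $c_1,c_2,c_3$. Your explicit probabilistic justification that any square root of $\Omega$ yields the same integral is precisely what underlies the paper's terse ``without loss of generality.''
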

\begin{proof}
Without loss of generality, we can let $\Omega^{1/2}$ be the upper diagonal square root matrix
\begin{align*}
\Omega^{1/2}=\left(\begin{array}{cc}
\left(\Omega_{11}-\frac{\Omega_{12}^2}{\Omega_{22}}\right)^{1/2}
& \frac{\Omega_{12}}{\sqrt{\Omega_{22}}}  \\
0 & \sqrt{\Omega_{22}}
\end{array}\right).
\end{align*}
Then
\begin{align*}
&\hat\beta_U(x+\Omega^{1/2} z, 2\Omega)  \\
&=\hat\tau(x_2+\sqrt{\Omega_{22}}z_2,2\Omega_{22})
\cdot \left(x_1+\left(\Omega_{11}-\frac{\Omega_{12}^2}{\Omega_{22}}\right)^{1/2}z_1+\frac{\Omega_{12}}{\sqrt{\Omega_{22}}} z_2
-\frac{\Omega_{12}}{\Omega_{22}}\left(x_2+\sqrt{\Omega_{22}}z_2\right)\right)
+\frac{\Omega_{12}}{\Omega_{22}}  \\
&=\frac{\hat\tau(x_2/\sqrt{\Omega_{22}}+z_2,2)}{\sqrt{\Omega_{22}}}
\cdot \left(x_1+\left(\Omega_{11}-\frac{\Omega_{12}^2}{\Omega_{22}}\right)^{1/2}z_1
-\frac{\Omega_{12}}{\Omega_{22}}x_2\right)
+\frac{\Omega_{12}}{\Omega_{22}}
\end{align*}
so that
\begin{align*}
&\hat\beta_U(x+\Omega^{1/2} z, 2\Omega)-b
=\frac{\hat\tau(x_2/\sqrt{\Omega_{22}}+z_2,2)}{\sqrt{\Omega_{22}}}
\cdot \left(x_1+\left(\Omega_{11}-\frac{\Omega_{12}^2}{\Omega_{22}}\right)^{1/2}z_1
-\frac{\Omega_{12}}{\Omega_{22}}x_2\right)
+\frac{\Omega_{12}}{\Omega_{22}}-b  \\
&=\frac{\hat\tau(x_2/\sqrt{\Omega_{22}}+z_2,2)}{\sqrt{\Omega_{22}}}
\cdot \left(x_1-x_2b+\left(\Omega_{11}-\frac{\Omega_{12}^2}{\Omega_{22}}\right)^{1/2}z_1
+\left(b-\frac{\Omega_{12}}{\Omega_{22}}\right)x_2\right)
+\frac{\Omega_{12}}{\Omega_{22}}-b  \\
&=\frac{\hat\tau(x_2/\sqrt{\Omega_{22}}+z_2,2)}{\sqrt{\Omega_{22}}}
\cdot \left(x_1-x_2b+\left(\Omega_{11}-\frac{\Omega_{12}^2}{\Omega_{22}}\right)^{1/2}z_1\right)
+\left(\frac{\hat\tau(x_2/\sqrt{\Omega_{22}}+z_2,2)}{\sqrt{\Omega_{22}}}x_2-1\right)
\left(b-\frac{\Omega_{12}}{\Omega_{22}}\right)
\end{align*}
and the result follows by plugging this in to the definition of $R$.

\end{proof}

We now give bounds on $R$ and $\tilde R$.
By the triangle inequality,
\begin{align}\label{tildeR_triang_ineq}
\tilde R(t,c_1,c_2,c_3,p)^{1/p}
&\le \left(\int\int
\hat\tau(t+z_2,2)^p|c_1+c_2z_1|^p
\phi(z_1)\phi(z_2)\, dz_1dz_2\right)^{1/p}  \nonumber  \\
&+c_3\left(\int\int
\left|\hat\tau(t+z_2,2)t-1\right|^p
\phi(z_1)\phi(z_2)\, dz_1dz_2\right)^{1/p}  \nonumber  \\
&=[C_1(t,p)\cdot C_2(c_1,c_2,p)]^{1/p}+c_3 C_3(t,p)^{1/p}
\end{align}
where
$C_1(t,p)=\int \hat\tau(t+z,2)^p\phi(z)\, dz$,
$C_2(c_1,c_2,p)=\int |c_1+c_2z|^p\phi(z)\, dz$
and
$C_3(t,p)=\int |\hat\tau(t+z,2) t - 1|^p\phi(z)\, dz$.
Note that, by the triangle inequality, for $t>0$,
\begin{align}\label{C1_bound_eq}
C_1(t,p)^{1/p}\le
\left(\int |\hat\tau(t+z,2) - 1/t|^p\phi(z)\, dz\right)^{1/p}+1/t
=(1/t)\left[C_3(t,p)^{1/p}+1\right].
\end{align}
Similarly,
\begin{align}\label{C3_bound_eq}
C_3(t,p)^{1/p}\le 1+t \left(\int \hat \tau(t+z,2)^p\phi(z)\, dz\right)^{1/p}
=1+t C_1(t,p)^{1/p}.
\end{align}

\begin{lemma}\label{tildeR_bound_lemma}
For $p<2$, $C_1(t,p)$ is bounded uniformly over $t$ on any compact set,
and $\tilde R(t,c_1,c_2,c_3,p)$ is bounded uniformly over $(t,c_1,c_2,c_3)$ in any compact set.
\end{lemma}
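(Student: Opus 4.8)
The plan is to reduce everything to a uniform bound on $C_1(t,p)$, since the decomposition (\ref{tildeR_triang_ineq}) already expresses $\tilde R$ in terms of $C_1$, $C_2$, and $C_3$, and the latter two can be controlled by $C_1$ together with elementary Gaussian moment estimates. The whole substance of the lemma is therefore the $C_1$ bound, and in particular the appearance of the restriction $p<2$.

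First I would record a crude pointwise bound on $\hat\tau(\cdot,2)$ valid on the whole line. Using $1-\Phi(x)\le 1$ and $\phi(x/\sqrt2)=\frac{1}{\sqrt{2\pi}}e^{-x^2/4}$ in the definition of $\hat\tau$ gives
\[
\hat\tau(x,2)=\frac{1}{\sqrt2}\frac{1-\Phi(x/\sqrt2)}{\phi(x/\sqrt2)}\le \sqrt\pi\, e^{x^2/4}.
\]
This captures precisely the blow-up of $\hat\tau$ as $x\to-\infty$, which is the region that governs integrability; it is wasteful as $x\to+\infty$, where $\hat\tau$ in fact decays like $1/x$, but that waste is harmless because the Gaussian weight $\phi(z)$ will still force convergence there. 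Substituting this into the definition of $C_1$ I would write
\[
C_1(t,p)\le \frac{\pi^{p/2}}{\sqrt{2\pi}}\int_{-\infty}^\infty \exp\!\left(\tfrac{p}{4}(t+z)^2-\tfrac{z^2}{2}\right)\,dz,
\]
and then collect the exponent, which equals $\frac{p-2}{4}z^2+\frac{p}{2}tz+\frac{p}{4}t^2$. The coefficient of $z^2$ is $\frac{p-2}{4}$, strictly negative exactly when $p<2$; this sign computation is the crux of the lemma and the sole reason the hypothesis $p<2$ enters. The Gaussian integral therefore converges, and evaluating it in closed form yields a continuous function of $t$, hence one bounded uniformly over any compact set of $t$-values. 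This settles the first claim.

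For $\tilde R$ I would simply feed this into the triangle-inequality bound (\ref{tildeR_triang_ineq}). The factor $C_2(c_1,c_2,p)=\int|c_1+c_2z|^p\phi(z)\,dz$ is a polynomial-growth functional of a standard Gaussian, so it is finite and continuous in $(c_1,c_2)$ and thus bounded on any compact set; $C_1(t,p)$ is bounded on compacts by the previous step; and $C_3(t,p)$ is bounded on compacts by combining (\ref{C3_bound_eq}) with the $C_1$ bound, using that $|t|$ is bounded on the compact set. Since $c_3$ is likewise bounded there, every term on the right-hand side of (\ref{tildeR_triang_ineq}) is uniformly bounded, giving the desired uniform bound on $\tilde R(t,c_1,c_2,c_3,p)$. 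The only delicate point in the argument is the exponent computation establishing that the $z^2$-coefficient is negative for $p<2$; everything after that is bookkeeping with the inequalities already recorded.
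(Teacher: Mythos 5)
Your proposal is correct and follows essentially the same route as the paper's proof: bound $\hat\tau$ via $1-\Phi\le 1$, observe that the resulting Gaussian-type integral has $z^2$-coefficient $\frac{p-2}{4}<0$ precisely when $p<2$ (the paper phrases this as $p/4<1/2$), and then deduce boundedness of $\tilde R$ from (\ref{tildeR_triang_ineq}), (\ref{C3_bound_eq}), and boundedness of $C_2$ on compacts. The only difference is cosmetic: you track the explicit constant $\sqrt{\pi}\,e^{x^2/4}$ where the paper absorbs it into a constant $K$ depending on $p$.
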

\begin{proof}
We have
\begin{align*}
&C_1(t,p)=\int \hat\tau(t+z,2)^p\phi(z)\, dz
=\int \left|\frac{1}{\sqrt{2}}\frac{1-\Phi\left((t+z)/\sqrt{2}\right)}{\phi\left((t+z)/\sqrt{2}\right)}\right|^p\phi(z)\, dz  \\
&\le 2^{-p/2}\int \frac{\phi(z)}{\phi((t+z)/\sqrt{2})^p}\, dz
\le K\int \exp\left(-\frac{1}{2}z^2+\frac{p}{4}(t+z)^2\right)\, dz
\end{align*}
for a constant $K$ that depends only on $p$.  This is bounded uniformly over $t$ in any compact set so long as $p/4<1/2$, giving the first result.  Boundedness of $\tilde R$ follows from this, (\ref{C3_bound_eq}) and boundedness of $C_2(c_1,c_2,p)$ over $c_1,c_2$ in any compact set.
\end{proof}

\begin{lemma}\label{tildeR_larget_bound_lemma}
For $p<2$,
$t \tilde R(t,c_1,c_2,c_3,p)^{1/p}$
is bounded uniformly over $t,c_1,c_2,c_3$ in any set such that $t$ is bounded from below away from zero and $c_1$, $c_2$ and $c_3$ are bounded.
\end{lemma}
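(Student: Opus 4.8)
The plan is to reduce the whole statement to a single one-dimensional estimate and then control that estimate by splitting the range of integration. Starting from the triangle-inequality decomposition (\ref{tildeR_triang_ineq}),
\begin{align*}
t\,\tilde R(t,c_1,c_2,c_3,p)^{1/p}
\le t\,C_1(t,p)^{1/p}\,C_2(c_1,c_2,p)^{1/p}+c_3\,t\,C_3(t,p)^{1/p}.
\end{align*}
Here $C_2(c_1,c_2,p)$ is the $p$th absolute moment of an $N(c_1,c_2^2)$ variable, hence bounded uniformly over $(c_1,c_2)$ in any compact set, and $c_3$ is bounded by hypothesis. By (\ref{C1_bound_eq}), $t\,C_1(t,p)^{1/p}\le C_3(t,p)^{1/p}+1$, and for $t\ge t_0$ the quantity $C_3(t,p)^{1/p}$ is itself controlled by $\tfrac1{t_0}\,t\,C_3(t,p)^{1/p}$. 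Thus the entire lemma follows once I show that $t\,C_3(t,p)^{1/p}$, equivalently $t^p C_3(t,p)$, is bounded uniformly over $t\ge t_0$.

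To bound $C_3(t,p)=\int|\hat\tau(t+z,2)t-1|^p\phi(z)\,dz$, I would split at $t+z=\varepsilon t$ for a fixed small $\varepsilon\in(0,1/2]$. On the bulk region $\{t+z\ge\varepsilon t\}$ the Mills-ratio estimate of Section 2.3.4 of \citet{Small2010} (already used in the proof of Theorem \ref{Thm: Limiting Distribution, Just-identified Case}) gives $|\hat\tau(t+z,2)-1/(t+z)|\le 2/(t+z)^3$, so that
\begin{align*}
|\hat\tau(t+z,2)t-1|\le \frac{2t}{(t+z)^3}+\frac{|z|}{t+z}\le\frac{2}{\varepsilon^3 t^2}+\frac{|z|}{\varepsilon t}.
\end{align*}
Raising to the $p$th power via $(a+b)^p\le 2^{p-1}(a^p+b^p)$ and integrating against $\phi$ produces one term of order $t^{-2p}$ and one term equal to $(\varepsilon t)^{-p}$ times the finite $p$th moment of the standard normal; multiplying by $t^p$ yields a bound uniform in $t\ge t_0$.

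The main obstacle is the tail region $\{t+z<\varepsilon t\}$, where $\hat\tau(t+z,2)$ grows super-exponentially and is controlled only by the Gaussian weight $\phi$. Using $1-\Phi\le 1$ gives the crude bound $\hat\tau(t+z,2)\le\sqrt{\pi}\exp((t+z)^2/4)$, so the integrand is at most $2^{p-1}[(\sqrt\pi t)^p\exp(p(t+z)^2/4)+1]\phi(z)$, and the crux is to estimate $\int_{t+z<\varepsilon t}\exp(p(t+z)^2/4)\phi(z)\,dz$. The exponent $\tfrac{p}{4}(t+z)^2-\tfrac12 z^2$ has, because $p<2$, a strictly negative coefficient on $z^2$, so its unconstrained maximum occurs at $z^*=pt/(2-p)>0$; on $\{z\le -(1-\varepsilon)t\}$ it is therefore maximized at the boundary, where it equals $t^2\left[\tfrac{p\varepsilon^2}{4}-\tfrac{(1-\varepsilon)^2}{2}\right]$, a strictly negative multiple of $t^2$ for $\varepsilon\le 1/2$. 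Completing the square and applying a standard Gaussian tail bound then shows this integral is at most a constant times $t^{-1}\exp(-\gamma t^2)$ for some $\gamma>0$, so multiplying by the polynomial factor $t^p(\sqrt\pi t)^p$ still gives a quantity that vanishes as $t\to\infty$ and is bounded on $[t_0,\infty)$. This is exactly where $p<2$ is indispensable: it is precisely the regime in which the Gaussian decay of $\phi$ dominates the exponential blow-up of $\hat\tau$ in the left tail. Collecting the bulk and tail estimates gives $t^p C_3(t,p)=\mathcal{O}(1)$, which by the reduction above completes the proof.
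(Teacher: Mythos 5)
Your proposal is correct and takes essentially the same approach as the paper's proof: the same reduction to bounding $t^p C_3(t,p)$ via (\ref{tildeR_triang_ineq}) and (\ref{C1_bound_eq}), the same split of the integral at $t+z=\varepsilon t$, and the same completing-the-square plus Gaussian-tail estimate showing the left-tail contribution decays like $\exp(-\gamma t^2)$. The only cosmetic differences are that on the bulk region you invoke the cubic Mills-ratio bound from \citet{Small2010} where the paper uses the two-sided bound of \citet{baricz_mills_2008} (the former is a consequence of the latter), and you fix $\varepsilon\le 1/2$ up front where the paper chooses $\varepsilon$ small enough at the end---both conditions on $\varepsilon$ reduce exactly to the hypothesis $p<2$.
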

\begin{proof}
By (\ref{tildeR_triang_ineq}) and (\ref{C1_bound_eq}), it suffices to bound
$t C_3(t,p)^{1/p}=t\left(\int |\hat\tau(t+z,2)t-1|^p\phi(z)\, dz\right)^{1/p}$.  Let $\varepsilon>0$ be a constant to be determined later.  We split the integral into the regions $t+z<\varepsilon t$ and $t+z\ge \varepsilon t$.  We have
\begin{align}\label{lower_int_eq}
&\int_{t+z<\varepsilon t} |\hat\tau(t+z,2)t-1|^p\phi(z)\, dz
=\int_{t+z<\varepsilon t} \left|t\frac{1-\Phi\left((t+z)/\sqrt{2}\right)}{\sqrt{2}\phi\left((t+z)/\sqrt{2}\right)}-1\right|^p\phi(z)\, dz  \nonumber  \\
&=\int_{t+z<\varepsilon t} \left|t\left[1-\Phi\left((t+z)/\sqrt{2}\right)\right]-\sqrt{2}\phi\left((t+z)/\sqrt{2}\right)\right|^p
\frac{\phi(z)}{\left[\sqrt{2}\phi\left((t+z)/\sqrt{2}\right)\right]^p}\, dz.
\end{align}
This is bounded by a constant times
\begin{align*}
&\max\{t,1\} \int_{t+z\le \varepsilon t}\exp\left(-\frac{1}{2}z^2+\frac{p}{4}\left(t+z\right)^2\right)\, dz  \\
&=\max\{t,1\} \int_{t+z\le \varepsilon t}\exp\left(-\frac{1}{2}z^2
  +\frac{p}{4}\left(z^2+2tz+t^2\right)\right)\, dz  \\
&=\max\{t,1\} \int_{t+z\le \varepsilon t}\exp\left(-\frac{1}{2}
\left(z^2(1-p/2)-t^2(p/2)-ptz
\right)\right)\, dz  \\
&=\max\{t,1\} \int_{t+z\le \varepsilon t}\exp\left(-\frac{1-p/2}{2}
\left(z^2-t^2\frac{p}{2-p}-2\frac{p}{2-p}tz
\right)\right)\, dz  \\
&=\max\{t,1\} \int_{t+z\le \varepsilon t}\exp\left(-\frac{1-p/2}{2}
\left(\left(z-\frac{p}{2-p}t\right)^2
-\left(\frac{p}{2-p}\right)^2t^2
-\frac{p}{2-p}t^2
\right)\right)\, dz  \\
&=\max\{t,1\} \exp\left(\frac{1-p/2}{2}\left(\frac{p}{2-p}+\left(\frac{p}{2-p}\right)^2\right)t^2\right)
\int_{t+z\le \varepsilon t}\exp\left(-\frac{1-p/2}{2}
\left(z-\frac{p}{2-p}t\right)^2\right)\, dz.
\end{align*}
We have
\begin{align*}
&\int_{t+z\le \varepsilon t}\exp\left(-\frac{1-p/2}{2}
\left(z-\frac{p}{2-p}t\right)^2\right)\, dz  \\
&=\int_{z-t p/(2-p)\le (\varepsilon-1-p/(2-p)) t}\exp\left(-\frac{1-p/2}{2}
\left(z-\frac{p}{2-p}t\right)^2\right)\, dz  \\
&=\int_{u\le (\varepsilon-1-p/(2-p)) t}\exp\left(-\frac{1-p/2}{2}
u^2\right)\, dz,
\end{align*}
which is bounded by a constant times
\begin{align*}
\Phi\left(
t(\varepsilon-1-p/(2-p))\sqrt{1-p/2}
\right).
\end{align*}
For $t(\varepsilon-1-p/(2-p))<0$, this is bounded by a constant times
$\exp\left(-\frac{1-p/2}{2}t^2(1+p/(2-p)-\varepsilon)^2\right)$.
Thus, (\ref{lower_int_eq}) is bounded uniformly over $t>0$ by a constant times $\exp(-\eta t^2)$ for some $\eta>0$ so long as
\begin{align*}
\left(1+\frac{p}{2-p}-\varepsilon\right)^2>\frac{p}{2-p}+\left(\frac{p}{2-p}\right)^2
=\frac{p}{2-p}\left(1+\frac{p}{2-p}\right)
\end{align*}
which can be ensured by choosing $\varepsilon>0$ small enough so long as $p<2$.  Thus, $\varepsilon>0$ can be chosen so that (\ref{lower_int_eq}) is bounded uniformly over $t$ when scaled by $t^p$.

For the integral over $t+z>\varepsilon t$, we have, by (1.1) in \citet{baricz_mills_2008},
\begin{align*}
&\int_{t+z\ge \varepsilon t}\left|t\hat\tau(t+z,2)-1\right|^p\phi(z)\, dz
=t^p\int_{t+z\ge \varepsilon t}\left|\hat\tau(t+z,2)-\frac{1}{t}\right|^p\phi(z)\, dz  \\
&\le t^p\int_{t+z\ge \varepsilon t}\left|\frac{1}{t+z}-\frac{1}{t}\right|^p\phi(z)\, dz
+t^p\int_{t+z\ge \varepsilon t}\left|\frac{1}{(t+z)+2/(t+z)}-\frac{1}{t}\right|^p\phi(z)\, dz.
\end{align*}
The first term is
\begin{align*}
t^p\int_{t+z\ge \varepsilon t}\left|\frac{z}{(t+z)t}\right|^p\phi(z)\, dz
\le \frac{1}{t^p}\int\left|\frac{z}{\varepsilon}\right|^p\phi(z)\, dz.
\end{align*}
The second term is
\begin{align*}
t^p\int_{t+z\ge \varepsilon t}\left|\frac{-z-2/(t+z)}{[(t+z)+2/(t+z)]t}\right|^p\phi(z)\, dz
\le \frac{1}{t^p}\int \left|\frac{|z|+|2/\varepsilon t|}{\varepsilon}\right|^p\phi(z)\, dz.
\end{align*}
Both are bounded uniformly when scaled by $t^p$ over any set with $t$ bounded from below away from zero.
\end{proof}

\section{Relation to \citet{HiranoPoter2015}}\label{hp_append}

\citet{HiranoPoter2015} give a negative result establishing
the impossibility of unbiased, quantile unbiased, or translation equivariant
estimation in a wide variety of models with singularities, including
many linear IV models. On initial inspection our derivation of an
unbiased estimator for $\beta$ may appear to contradict the results
of \citeauthor{HiranoPoter2015}.  In fact, however,
one of the key assumptions of \citet{HiranoPoter2015} no longer applies 
once we assume that the sign of the first stage is known.

Again consider the linear IV model with a single instrument,
where for simplicity we let $\sigma_{1}^{2}=\sigma_{2}^{2}=1,$
$\sigma_{12}=0$. To discuss the results of \citet{HiranoPoter2015}, it will be helpful
to parameterize the model in terms of the reduced-form parameters $(\psi,\pi)=(\pi\beta,\pi).$
For $\phi$ again the standard normal density, the density of $\xi$ is 
\[
f\left(\xi;\psi,\pi\right)=\mbox{\ensuremath{\phi\left(\xi_{1}-\psi\right)\phi\left(\xi_{2}-\pi\right)}}.
\]
Fix some value $\psi^*$.  For any $\pi\neq 0$ we can
define $\beta(\psi,\pi)=\frac{\psi}{\pi}$.  If we consider any sequence $\{\pi_j\}_{j=1}^\infty$
approaching zero from the right, then $\beta(\psi^*,\pi_j)\to\infty$ if $\psi^*>0$ and 
$\beta(\psi^*,\pi_j)\to-\infty$ if $\psi^*<0$.  Thus we can see that $\beta$ plays the role of the function $\kappa$  in
\citet{HiranoPoter2015} equation (2.1).

\citet{HiranoPoter2015} show that if there exists some finite collection
of parameter values $(\psi_{l,d},\pi_{l,d})$ in the parameter space and non-negative
constants $c_{l,d}$ such that their Assumption 2.4, 
\[
f\left(\xi;\psi^*,0\right)\le\sum_{l=1}^{s}c_{l,d}f\left(\xi;\psi_{l,d},\pi_{l,d}\right)\,\forall\xi,
\]
holds, then (since one can easily verify their Assumption 2.3 in the present context)
there can exist no unbiased estimator of $\beta.$

This
dominance condition fails in the linear IV model with a sign restriction. 
For any $(\psi_{l,d},\pi_{l,d})$
in the parameter space, we have by definition that $\pi_{l,d}>0$.
For any such $\pi_{l,d}$, however, if we fix $\xi_{1}$ and take
$\xi_{2}\to-\infty$, 
\[
\lim_{\xi_{2}\to-\infty}\frac{\phi\left(\xi_{2}-\pi_{l,d}\right)}{\phi\left(\xi_{2}\right)}=\lim_{\xi_{2}\to-\infty}\exp\left(-\frac{1}{2}\left(\xi_{2}-\pi_{l,d}\right)^{2}+\frac{1}{2}\xi_{2}^{2}\right)=\lim_{\xi_{2}\to-\infty}\exp\left(\xi_{2}\pi_{l,d}-\frac{1}{2}\pi_{l,d}^{2}\right)=0.
\]

Thus, $\lim_{\xi_{2}\to-\infty}\frac{f\left(\xi;\psi_{l,d},\pi_{l,d}\right)}{f\left(\xi;\psi^*,0\right)}=0$,
and for any fixed $\xi_{1},$ $\left\{ c_{l,d}\right\} _{l=1}^{s}$
and $\left\{(\psi_{l,d},\pi_{l,d})\right\} _{l=1}^{s}$ there exists a $\xi_{2}^{*}$
such that $\xi_{2}<\xi_{2}^{*}$ implies 
\[
f\left(\xi;\psi^{*},0\right)>\sum_{l=1}^{s}c_{l,d}f\left(\xi;\psi_{l,d},\pi_{l,d}\right).
\]
Thus, Assumption 2.4 in \citet{HiranoPoter2015} fails in this model, allowing
the possibility of an unbiased estimator. Note, however, that if we did not impose $\pi>0$ then we would
satisfy Assumption 2.4, so unbiased estimation of $\beta$
would again be impossible. Thus, the sign restriction on $\pi$
plays a central role in the construction of the unbiased estimator
$\hat{\beta}_{U}.$

\section{Lower Bound on Risk of Unbiased Estimators}\label{risk_lower_bound_sec}

This appendix gives a lower bound on the attainable risk at a given $\pi,\beta$ for an estimator that is unbiased for $\beta$ for all $\pi,\beta$ with $\pi$ in the positive orthant.  The bound is given by the risk in the submodel where $\pi/\|\pi\|$ (the direction of $\pi$) is known.  While the bound cannot, in general, be obtained, we discuss some situations where it can, which include certain values of $\pi$ in the case where $\xi$ comes from a model with homoskedastic errors.

\begin{theorem}\label{risk_bound_thm}
Let $\mathcal{U}$ be the set of estimators for $\beta$ that are unbiased for all $\pi\in(0,\infty)^k$, $\beta\in\mathbb{R}$.
For any $\pi^*\in (0,\infty)^k$, $\beta^*\in\mathbb{R}$ and any convex loss function $\ell$,
\begin{align*}
E_{\pi^*,\beta^*} \ell(\hat\beta_U(\xi^*(\pi^*),\Sigma^*(\pi^*))-\beta^*)
\le \inf_{\hat\beta\in \mathcal{U}} E_{\pi^*,\beta^*} \ell(\hat\beta(\xi,\Sigma)-\beta^*)
\end{align*}
where
$\xi^*(\pi^*)=\left[\left(I_2\otimes \pi^*\right)'\Sigma^{-1}\left(I_2\otimes \pi^*\right)\right]^{-1}
  \left(I_2\otimes \pi^*\right)'\Sigma^{-1}\xi$
and
$\Sigma^*(\pi^*)=\left[\left(I_2\otimes \pi^*\right)'\Sigma^{-1}\left(I_2\otimes \pi^*\right)\right]^{-1}$.

\end{theorem}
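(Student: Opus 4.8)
The plan is to reduce the multiple-instrument problem to the single-instrument case by restricting attention to the one-dimensional submodel in which the \emph{direction} of $\pi$ is frozen at $\pi^*/\|\pi^*\|$, and then to exploit the uniqueness of the unbiased estimator established in Theorem \ref{Thm: Unbiased Estimator, Just-identified Case}. Concretely, I would work inside the submodel $\mathcal{M}=\{(\rho\pi^*,\beta):\rho>0,\ \beta\in\mathbb{R}\}$, which is contained in the parameter space and contains $(\pi^*,\beta^*)$ at $\rho=1$. Along this ray the mean of $\xi$ equals $(I_2\otimes\pi^*)(\rho\beta,\rho)'$ and thus lies in the two-dimensional column space of $I_2\otimes\pi^*$. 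Since $\xi$ is normal with known positive-definite covariance $\Sigma$, the cross term in the density is $(\rho\beta,\rho)\,(I_2\otimes\pi^*)'\Sigma^{-1}\xi$, so by the factorization theorem $(I_2\otimes\pi^*)'\Sigma^{-1}\xi$, and hence its invertible linear transform $\xi^*(\pi^*)$, is sufficient for $\mathcal{M}$. A direct computation gives $\xi^*(\pi^*)\sim N\!\left((\rho\beta,\rho)',\Sigma^*(\pi^*)\right)$ on $\mathcal{M}$, so estimating $\beta$ within $\mathcal{M}$ is exactly the single-instrument problem of Section \ref{single_instrument_sec}, with first-stage coefficient $\rho$ and known variance $\Sigma^*(\pi^*)$.

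Next I would Rao-Blackwellize an arbitrary unbiased estimator. Fix $\hat\beta\in\mathcal{U}$. Because $\hat\beta$ is unbiased on all of $(0,\infty)^k\times\mathbb{R}$ it is integrable under every law in $\mathcal{M}$, so $\tilde\beta(\xi^*)=E[\hat\beta(\xi)\mid\xi^*(\pi^*)]$ is well defined; by sufficiency of $\xi^*(\pi^*)$ for $\mathcal{M}$ this conditional expectation does not depend on $(\rho,\beta)$, so $\tilde\beta$ is a genuine parameter-free statistic depending on the data only through $\xi^*(\pi^*)$. The law of iterated expectations then gives $E_{\rho,\beta}\tilde\beta=E_{\rho,\beta}\hat\beta=\beta$ for all $\rho>0$ and $\beta\in\mathbb{R}$, so $\tilde\beta$ is unbiased for $\beta$ in the single-instrument submodel. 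The reduced-form mean parameter ranges over $\{(\rho\beta,\rho):\rho>0,\beta\in\mathbb{R}\}=\mathbb{R}\times(0,\infty)$, which contains an open set, so the uniqueness half of Theorem \ref{Thm: Unbiased Estimator, Just-identified Case} applies and forces $\tilde\beta=\hat\beta_U(\xi^*(\pi^*),\Sigma^*(\pi^*))$ almost surely.

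Finally I would invoke convexity. Evaluating at $(\pi^*,\beta^*)$ (that is, $\rho=1$) and applying the conditional Jensen inequality to the convex loss $\ell$ yields
\begin{align*}
E_{\pi^*,\beta^*}\ell\!\left(\hat\beta(\xi,\Sigma)-\beta^*\right)
\ge E_{\pi^*,\beta^*}\ell\!\left(E[\hat\beta\mid\xi^*(\pi^*)]-\beta^*\right)
=E_{\pi^*,\beta^*}\ell\!\left(\hat\beta_U(\xi^*(\pi^*),\Sigma^*(\pi^*))-\beta^*\right).
\end{align*}
Since this inequality holds for every $\hat\beta\in\mathcal{U}$ (with both sides allowed to be $+\infty$), taking the infimum over $\mathcal{U}$ on the right-hand side gives the stated bound. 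The main obstacle is the reduction step, namely verifying cleanly that $\xi^*(\pi^*)$ is sufficient for the restricted submodel $\mathcal{M}$ and that $\mathcal{M}$ is genuinely an instance of the single-instrument model, so that the uniqueness conclusion of Theorem \ref{Thm: Unbiased Estimator, Just-identified Case} can be invoked; once this is in place the remainder is routine manipulation with iterated expectations and Jensen's inequality.
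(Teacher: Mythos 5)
Your proof is correct and follows essentially the same route as the paper's: both restrict to the submodel $\{(t\pi^*,\beta):t>0,\ \beta\in\mathbb{R}\}$, identify $\xi^*(\pi^*)$ as a sufficient statistic whose distribution $N((\beta t,t)',\Sigma^*(\pi^*))$ makes the submodel an instance of the single-instrument problem, use completeness to force the conditional expectation of any unbiased estimator given $\xi^*(\pi^*)$ to equal $\hat\beta_U(\xi^*(\pi^*),\Sigma^*(\pi^*))$, and finish with Jensen's inequality. The only cosmetic differences are that you establish sufficiency via the factorization theorem and condition the given estimator $\hat\beta\in\mathcal{U}$ directly, whereas the paper recognizes $\xi^*(\pi^*)$ as the GLS estimator of $(\beta t,t)'$ and phrases the reduction in terms of risk functions of randomized estimators based on the sufficient statistic.
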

\begin{proof}
Consider the submodel with $\pi$ restricted to $\Pi^*=\{\pi^* t|t\in(0,\infty)\}$.  Then $\xi^*(\pi^*)$ is sufficient for $(t,\beta)$ in this submodel, and satisfies
$\xi^*(\pi^*)\sim N((\beta t,t)',\Sigma^*(\pi^*))$
in this submodel.
To see this, note that, for $t,\beta$ in this submodel, $\xi$ follows the generalized least squares regression model $\xi=(I_2\otimes \pi^*)(\beta t,t)'+\varepsilon$ where $\varepsilon\sim N(0,\Sigma)$, and $\xi^*(\pi^*)$ is the generalized least squares estimator of $(\beta t,t)'$.

Let $\tilde\beta(\xi(\pi^*),\Sigma(\pi^*))$ be a (possibly randomized) estimator based on $\xi(\pi^*)$ that is unbiased in the submodel where $\pi\in\Pi^*$.
By completeness of the submodel, $E\left[\tilde\beta(\xi(\pi^*),\Sigma(\pi^*))|\xi^*(\pi^*)\right]=\hat\beta_U(\xi(\pi^*),\Sigma(\pi^*))$.
By Jensen's inequality, therefore,
\begin{align*}
E_{\pi^*,\beta^*}\ell\left(\tilde\beta(\xi(\pi^*),\Sigma(\pi^*))-\beta\right)
\ge  E_{\pi^*,\beta^*}\ell\left(E\left[\tilde\beta(\xi(\pi^*),\Sigma(\pi^*))|\xi^*(\beta)\right]-\beta\right)
\end{align*}
(this is just Rao-Blackwell applied to the submodel with the loss function $\ell$).
By sufficiency, the set of risk functions for randomized unbiased estimators based on $\xi(\pi^*)$ in the submodel is the same as the set of risk functions for randomized unbiased estimators based on $\xi$ in the submodel.  This gives the result with $\mathcal{U}$ replaced by the set of estimators that are unbiased in the submodel, which implies the result as stated, since the set of estimator which are unbiased in the full model is a subset of those which are unbiased in the submodel.
\end{proof}

Theorem \ref{risk_bound_thm} continues to hold in the case where the lower bound is infinite: in this case, the risk of any unbiased estimator must be infinite at $\beta^*,\pi^*$.  By Theorem \ref{betau_second_moment_thm}, the lower bound is infinite for squared error loss $\ell(t)=t^2$ for any $\pi^*,\beta^*$.  Thus, unbiased estimators must have infinite variance even in models with multiple instruments.

While in general Theorem \ref{risk_bound_thm} gives only a lower bound on the risk of unbiased estimators, the bound can be achieved in certain situations.  
A case of particular interest arises in models with homoskedastic reduced form errors that are independent across observations.  In such cases $Var\left(\left(U',V'\right)'\right)=Var\left((U_1,V_1)'\right)\otimes I_T$, where $I_T$ is the $T\times T$ identity matrix, so that the definition of $\Sigma$ in (\ref{Sigma_def}) gives
$\Sigma
=Var\left((U_1,V_1)'\right)\otimes \left(Z'Z\right)^{-1}$.
Thus, in models with independent homoskedastic errors we have $\Sigma=Q_{UV}\otimes Q_{Z}$ for a $2\times 2$ matrix $Q_{UV}$ and a $k\times k$ matrix $Q_Z$.

\begin{theorem}\label{risk_attain_thm}
Suppose that
$\left[\left(I_2\otimes \pi^*\right)'\Sigma^{-1}\left(I_2\otimes \pi^*\right)\right]^{-1}
  \left(I_2\otimes \pi^*\right)'\Sigma^{-1}
=(I_2\otimes a(\pi^*)')$
for some $a(\pi^*)\in\mathbb{R}^k$.  Then $\hat\beta_U(\xi^*(\pi^*),\Sigma(\pi^*))$ defined in Theorem \ref{risk_bound_thm} is unbiased at any $\pi,\beta$ such that $a(\pi^*)'\pi>0$.  In particular, if $a(\pi^*)\in (0,\infty)^k$, then
$\hat\beta_U(\xi^*(\pi^*),\Sigma(\pi^*))\in \mathcal{U}$ and the risk bound is attained.
Specializing to the case where $\Sigma=Q_{UV}\otimes Q_{Z}$ for a $2\times 2$ matrix $Q_{UV}$ and a $k\times k$ matrix $Q_Z$, the above conditions hold with
$a(\pi^*)'=\pi^{*\prime}Q_Z^{-1}/(\pi^{*\prime}Q_Z^{-1}\pi^*)$, and the bound is achieved if
$Q_Z^{-1}\pi^*\in (0,\infty)^k$.
\end{theorem}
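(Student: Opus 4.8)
The plan is to show that, under the stated Kronecker assumption, the statistic $\xi^*(\pi^*)$ is literally the reduced form of a single-instrument problem, so that unbiasedness transfers directly from Theorem \ref{Thm: Unbiased Estimator, Just-identified Case}. Write $a=a(\pi^*)$ and let $A=\left[\left(I_2\otimes \pi^*\right)'\Sigma^{-1}\left(I_2\otimes \pi^*\right)\right]^{-1}\left(I_2\otimes \pi^*\right)'\Sigma^{-1}$, so that $\xi^*(\pi^*)=A\xi$; the hypothesis says $A=I_2\otimes a'$, whence $\xi^*(\pi^*)=(a'\xi_1,\,a'\xi_2)'$, a two-vector. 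First I would record the distribution of this statistic at an \emph{arbitrary} $(\pi,\beta)$ in the full model. Since $E[\xi_1]=\pi\beta$ and $E[\xi_2]=\pi$, the mean is $\left((a'\pi)\beta,\,a'\pi\right)'$. For the variance, the generalized least squares identity $A\Sigma A'=\left[\left(I_2\otimes \pi^*\right)'\Sigma^{-1}\left(I_2\otimes \pi^*\right)\right]^{-1}=\Sigma^*(\pi^*)$ holds for every $\pi$ because it does not involve the mean. Hence $\xi^*(\pi^*)\sim N\!\left(\left((a'\pi)\beta,\,a'\pi\right)',\Sigma^*(\pi^*)\right)$, which is exactly the single-instrument model with first-stage coefficient $a'\pi$ and structural parameter $\beta$.

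Given this, Theorem \ref{Thm: Unbiased Estimator, Just-identified Case} applies directly: $\hat\beta_U(\xi^*(\pi^*),\Sigma^*(\pi^*))$ is unbiased for $\beta$ whenever $a'\pi>0$, which is the first assertion. The ``in particular'' claim is then immediate, for if $a\in(0,\infty)^k$ then $a'\pi>0$ for every $\pi\in(0,\infty)^k$, so $\hat\beta_U(\xi^*(\pi^*),\Sigma^*(\pi^*))\in\mathcal{U}$; since this is precisely the estimator appearing on the left-hand side of the bound in Theorem \ref{risk_bound_thm}, the infimum over $\mathcal{U}$ is attained by it.

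For the homoskedastic specialization I would verify the Kronecker form of $A$ by direct computation with $\Sigma=Q_{UV}\otimes Q_Z$ and $\Sigma^{-1}=Q_{UV}^{-1}\otimes Q_Z^{-1}$. Using the mixed-product rule, $\left(I_2\otimes \pi^*\right)'\Sigma^{-1}=Q_{UV}^{-1}\otimes(\pi^{*\prime}Q_Z^{-1})$, and $\left(I_2\otimes \pi^*\right)'\Sigma^{-1}\left(I_2\otimes \pi^*\right)=(\pi^{*\prime}Q_Z^{-1}\pi^*)\,Q_{UV}^{-1}$, whose inverse is $(\pi^{*\prime}Q_Z^{-1}\pi^*)^{-1}Q_{UV}$. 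Treating the latter $2\times 2$ matrix as a Kronecker factor against a scalar and multiplying gives $A=I_2\otimes\frac{\pi^{*\prime}Q_Z^{-1}}{\pi^{*\prime}Q_Z^{-1}\pi^*}$, i.e. the required form with $a'=\pi^{*\prime}Q_Z^{-1}/(\pi^{*\prime}Q_Z^{-1}\pi^*)$. Since $Q_Z^{-1}$ is positive definite we have $\pi^{*\prime}Q_Z^{-1}\pi^*>0$, so $a\in(0,\infty)^k$ if and only if $Q_Z^{-1}\pi^*\in(0,\infty)^k$, and the final assertion follows from the first part.

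I do not foresee a serious obstacle; the one point requiring care is keeping the variance identity $A\Sigma A'=\Sigma^*(\pi^*)$ logically separate from the mean computation, so as to make explicit that $\xi^*(\pi^*)$ has the distribution of a single-instrument model for \emph{every} $\pi$, not merely within the ray $\Pi^*$ used in Theorem \ref{risk_bound_thm}. This is exactly what upgrades unbiasedness on the submodel to unbiasedness on the whole cone $\{\pi:a'\pi>0\}$. The Kronecker-algebra bookkeeping in the specialization is routine.
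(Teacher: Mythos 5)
Your proposal is correct and follows essentially the same route as the paper's proof: identify $\xi^*(\pi^*)=(a'\xi_1,a'\xi_2)'$ as a single-instrument reduced form with first stage $a'\pi$ at \emph{every} $(\pi,\beta)$, invoke Theorem \ref{Thm: Unbiased Estimator, Just-identified Case}, and verify the Kronecker specialization by the mixed-product rule. Your write-up merely spells out details the paper leaves implicit (the GLS variance identity $A\Sigma A'=\Sigma^*(\pi^*)$ and the cone inclusion $a\in(0,\infty)^k\Rightarrow a'\pi>0$ on the positive orthant), which are straightforward and correct.
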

\begin{proof}
For the first claim, note that under these assumptions
$\xi^*(\pi^*)=(a(\pi^*)'\xi_1,a(\pi^*)'\xi_2)'$ is $N((a(\pi^*)'\pi \beta,a(\pi^*)'\pi)',\Sigma^*(\pi))$ distributed under $\pi,\beta$, so $\hat\beta_U(\xi^*(\pi^*),\Sigma(\pi^*))$ is unbiased at $\pi,\beta$ by Theorem \ref{Thm: Unbiased Estimator, Just-identified Case}.  For the case where $\Sigma=Q_{UV}\otimes Q_{Z}$, the result follows by properties of the Kronecker product:
\begin{align*}
&\left[\left(I_2\otimes \pi^*\right)'\left(Q_{UV}\otimes Q_{Z}\right)^{-1}\left(I_2\otimes \pi^*\right)\right]^{-1}
  \left(I_2\otimes \pi^*\right)'\left(Q_{UV}\otimes Q_{Z}\right)^{-1}  \\
&=\left[Q_{UV}^{-1}\otimes \pi^{*\prime}Q_Z^{-1}\pi^*\right]^{-1}
    \left(Q_{UV}^{-1}\otimes \pi^{*\prime} Q_{Z}^{-1}\right)
=I_2\otimes \left[\pi^{*\prime} Q_{Z}^{-1}/\left(\pi^{*\prime}Q_Z^{-1}\pi^*\right)\right].
\end{align*}
\end{proof}

The special form of the sufficient statistic in the homoskedastic case derives from the form of the optimal estimator in the restricted seemingly unrelated regression (SUR) model.  The submodel for the direction $\pi^*$ is given by the SUR model
\begin{align*}
\left(\begin{array}{c} Y  \\
X
\end{array}\right)
=\left(\begin{array}{cc} Z\pi^* & 0  \\
0 & Z\pi^*
\end{array}\right)
\left(\begin{array}{c}\beta t  \\
t
\end{array}\right)
+\left(\begin{array}{c} U  \\
V \end{array}\right).
\end{align*}
Considering this as a SUR model with regressors $Z\pi^*$ in both equations, the optimal estimator of $(\beta t,t)'$ simply stacks the OLS estimator for the two equations, since the regressors $Z\pi^*$ are the same and the parameter space for $(\beta t, t)$ is unrestricted.
Note also that, in the homoskedastic case (with $Q_Z=(Z'Z)^{-1}$),
$\xi_1^*(\pi^*)$ and $\xi_2^*(\pi^*)$ are proportional to $\pi^{*\prime}Z'Z \xi_1$ and $\pi^{*\prime}Z'Z \xi_2$, which are the numerator and denominator of the 2SLS estimator with $\xi_2$ replaced by $\pi^{*}$ in the first part of the quadratic form.

Thus, for certain parameter values $\pi^*$ in the homoskedastic case, the risk bound in Theorem \ref{risk_bound_thm} is obtained.  In such cases, the estimator that obtains the bound is unique, and depends on $\pi^*$ itself (for the absolute value loss function, which is not strictly concave, uniqueness is shown in Section \ref{unique_unbiased_sec} below).  Thus, in contrast to settings such as linear regression, where a single estimator minimizes the risk over unbiased estimators simultaneously for all parameter values, no uniform minimum risk unbiased estimator will exist.  The reason for this is clear: knowledge of the direction of $\pi=\pi^*$ helps with estimation of $\beta$, even if one imposes unbiasedness for all $\pi$.

It is interesting to note precisely how the parameter space over which the estimator in the risk bound is unbiased depends on $\pi^*$.  Suppose one wants an estimator that minimizes the risk at $\pi^*$ while still remaining unbiased in a small neighborhood of $\pi^*$.  In the homoskedastic case, this can always be done so long as $\pi^*\in (0,\infty)^k$, since $\pi^{*\prime}Q_Z^{-1}\pi>0$ for $\pi$ close enough to $\pi^*$.  Where one can expand this neighborhood while maintaining unbiasedness will depend on $\pi^*$ and $Q_Z$.  In the case where $\pi^{*\prime}Q_Z^{-1}$ is in the positive orthant, the assumption $\pi\in (0,\infty)^k$ is enough to ensure that this estimator is unbiased at $\pi$.  However, if $\pi^{*\prime}Q_Z^{-1}$ is not in the positive orthant, there is a tradeoff between precision at $\pi^*$ and the range of $\pi\in (0,\infty)^k$ over which unbiasedness can be maintained.

Put another way, in the homoskedastic case, for any $\pi^*\in\mathbb{R}^k\backslash\{0\}$, minimizing the risk of an estimator of $\beta$ subject to the restriction that the estimator is unbiased in a neighborhood of $\pi^*$ leads to an estimator that does not depend on this neighborhood, so long as the neighborhood is small enough (this is true even if the restriction $\pi^*\in(0,\infty)^k$ does not hold).  The resulting estimator depends on $\pi^*$, and is unbiased at $\pi$ iff $\pi^*Q_Z^{-1}\pi>0$.

\subsection{Uniqueness of the Minimum Risk Unbiased Estimator under Absolute Value Loss}\label{unique_unbiased_sec}

In the discussion above, we used the result that the minimum risk unbiased estimator in the submodel with $\pi/\|\pi\|$ known is unique for absolute value loss.  Because the absolute value loss function is not strictly concave, this result does not, to our knowledge, follow immediately from results in the literature.
We therefore provide a statement and proof here.
In the following theorem, we consider a general setup where a random variable $\xi$ is observed, which follows a distribution $P_\mu$ for some $\mu\in M$.  The family of distributions $\{P_\mu|\mu\in M\}$ need not be a multivariate normal family, as in the rest of this paper.

\begin{theorem}
Let $\hat\theta=\hat\theta(\xi)$ be an unbiased estimator of $\theta=\theta(\mu)$ where $\mu\in M$ for some parameter space $M$ and $\Theta=\{\theta|\theta(\mu)=\theta\text{ some } \mu\in M\}\subseteq\mathbb{R}$, and where $\xi$ has the same support for all $\mu\in M$.  Let $\tilde\theta(\xi,U)$ be another unbiased estimator, based on $(\xi,U)$ where $\xi$ and $U$ are independent and $\hat\theta(\xi)=E_\mu[\tilde\theta(\xi,U)|\xi]=\int \tilde\theta(\xi,U)\, dQ(U)$ where $Q$ denotes the probability measure of $U$, which is assumed not to depend on $\mu$.  Suppose that $\hat\theta(\xi)$ and $\tilde\theta(\xi,U)$ have the same risk under absolute value loss:
\begin{align*}
E_\mu |\tilde\theta(\xi,U)-\theta(\mu)|=E_\mu |\hat\theta(\xi)-\theta(\mu)|
\text{ for all }\mu\in M.
\end{align*}
Then $\tilde\theta(\xi,U)=\hat\theta(\xi)$ for almost every $\xi$ with $\hat\theta(\xi)\in\Theta$.
\end{theorem}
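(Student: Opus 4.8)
The plan is to exploit the equality condition in Jensen's inequality for the (non-strictly convex) absolute value function. Write $W_\mu=\tilde\theta(\xi,U)-\theta(\mu)$, and note that, since $\theta(\mu)$ is constant once $\mu$ is fixed, the hypothesis $\hat\theta(\xi)=E[\tilde\theta(\xi,U)\mid\xi]$ gives $\hat\theta(\xi)-\theta(\mu)=E[W_\mu\mid\xi]$. Conditional Jensen then yields $|E[W_\mu\mid\xi]|\le E[\,|W_\mu|\mid\xi\,]$ pointwise in $\xi$, and integrating against $P_\mu$ recovers the Rao--Blackwell inequality $E_\mu|\hat\theta-\theta(\mu)|\le E_\mu|\tilde\theta-\theta(\mu)|$. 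Because the two risks are assumed equal, the nonnegative conditional gap $E[\,|W_\mu|\mid\xi\,]-|E[W_\mu\mid\xi]|$ integrates to zero and hence vanishes for $P_\mu$-almost every $\xi$.

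The key step is the elementary fact that, for a real random variable $W$, $E|W|=|EW|$ holds if and only if $W\ge 0$ a.s.\ or $W\le 0$ a.s.\ (decompose $W=W^+-W^-$: equality forces $EW^+=0$ or $EW^-=0$). Applying this conditionally on $\xi$, with the randomness coming from $U\sim Q$, I conclude that for each fixed $\mu$ and for $P_\mu$-almost every $\xi$, the map $U\mapsto \tilde\theta(\xi,U)-\theta(\mu)$ has constant sign $Q$-a.s.; that is, either $\tilde\theta(\xi,U)\ge\theta(\mu)$ for $Q$-a.e.\ $U$, or $\tilde\theta(\xi,U)\le\theta(\mu)$ for $Q$-a.e.\ $U$.

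I next remove the dependence of the exceptional $\xi$-set on $\mu$. Choose a countable set $\{\mu_n\}$ with $\{\theta(\mu_n)\}$ dense in $\Theta$; since the distributions $P_\mu$ all have the same support and, in the setting of interest, are mutually absolutely continuous, the corresponding exceptional sets are simultaneously negligible, so their union $N$ is null. For every $\xi\notin N$ the constant-sign dichotomy holds simultaneously for all $c=\theta(\mu_n)$. Now fix such a $\xi$ with $m:=\hat\theta(\xi)\in\Theta$. For any $c=\theta(\mu_n)<m$ the dichotomy cannot place all the $Q$-mass at or below $c$ (that would force $E[\tilde\theta(\xi,U)\mid\xi]\le c<m$), so it must give $\tilde\theta(\xi,U)\ge c$ for $Q$-a.e.\ $U$. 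Letting $c\uparrow m$ along the dense sequence (which accumulates at $m\in\Theta$) and intersecting the countably many $Q$-full sets yields $\tilde\theta(\xi,U)\ge m$ for $Q$-a.e.\ $U$; combined with $E[\tilde\theta(\xi,U)\mid\xi]=m$ this forces $\tilde\theta(\xi,U)=m=\hat\theta(\xi)$ for $Q$-a.e.\ $U$, as claimed. (An approach from above using $c>m$ works identically when $\Theta$ accumulates at $m$ only from the right.)

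The main obstacle is precisely that absolute value loss is not strictly convex, so a single parameter value delivers only the weak constant-sign conclusion rather than pointwise equality; the substance of the argument lies in converting the entire family of constant-sign conditions, indexed by a dense set of values $c$, together with the mean constraint $E[\tilde\theta\mid\xi]=\hat\theta$, into the pointwise identity. The accompanying measure-theoretic bookkeeping---passing from a $\mu$-dependent null set to a single null set via a countable dense subset and mutual absolute continuity---is routine but essential, and it implicitly uses that $\Theta$ accumulates at $\hat\theta(\xi)$.
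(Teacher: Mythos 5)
Your proof is correct, and its first half coincides with the paper's argument: conditional Jensen plus equality of risks forces $\int|\tilde\theta(\xi,U)-\theta(\mu)|\,dQ(U)=|\hat\theta(\xi)-\theta(\mu)|$ for $P_\mu$-a.e.\ $\xi$, and the elementary characterization of equality in $E|W|=|EW|$ then gives the constant-sign dichotomy. Where you genuinely diverge is the last step, and your route is the more rigorous one. The paper simply substitutes the $\xi$-dependent value $\theta=\hat\theta(\xi)$ into a dichotomy established ``for each fixed $\theta\in\Theta$, for a.e.\ $\xi$''; since the exceptional null set depends on $\theta$ and $\Theta$ is typically uncountable, that quantifier exchange is not automatic. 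Your device --- a countable $\{\mu_n\}$ with $\{\theta(\mu_n)\}$ dense in $\Theta$, a single null set $N$, and the monotone argument in which each $c=\theta(\mu_n)<m:=\hat\theta(\xi)$ must satisfy $\tilde\theta(\xi,U)\ge c$ for $Q$-a.e.\ $U$ (else the conditional mean could not equal $m$), followed by $c\uparrow m$ and the mean constraint --- supplies exactly the bookkeeping that the paper's one-line ``in particular'' omits. Two small remarks: (i) if $\hat\theta(\xi)$ is an isolated point of $\Theta$ it necessarily belongs to any dense subset, so direct substitution (as in the paper) handles that case and your accumulation caveat can be dropped; (ii) both you and the paper actually need the $P_\mu$ to be mutually absolutely continuous, so that ``a.e.'' statements transfer across parameter values, which is strictly stronger than the stated hypothesis of equal supports --- you flag this (``in the setting of interest''), the paper elides it; it does hold in the Gaussian submodel where the theorem is applied.
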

\begin{proof}
The display can be written as
\begin{align*}
E_\mu\left\{ E_\mu\left[ |\tilde\theta(\xi,U)-\theta(\mu)|\Big|\xi\right]-|\hat\theta(\xi)-\theta(\mu)|\right\}=0
\text{ for all }\mu\in M.
\end{align*}
By Jensen's inequality, the term inside the outer expectation is nonnegative for $\mu$-almost every $\xi$.  Thus, the equality implies that this term is zero for $\mu$-almost every $\xi$ (since $EX=0$ implies $X=0$ a.e. for any nonnegative random variable $X$).  This gives, noting that
$\int |\tilde\theta(\xi,U)-\theta(\mu)|\, dQ(U)=E_\mu\left[ |\tilde\theta(\xi,U)-\theta(\mu)|\Big|\xi\right]$,
\begin{align*}
\int |\tilde\theta(\xi,U)-\theta(\mu)|\, dQ(U)
=|\hat\theta(\xi)-\theta(\mu)|
\text{ for $\mu$-almost every }\xi
\text{ and all }\mu\in M.
\end{align*}
Since the support of $\xi$ is the same under all $\mu\in M$, the above statement gives
\begin{align*}
\int |\tilde\theta(\xi,U)-\theta|\, dQ(U)
=|\hat\theta(\xi)-\theta|
\text{ for almost every }\xi
\text{ and all }\theta\in\Theta.
\end{align*}
Note that, for any random variable $X$, $E|X|=|EX|$ implies that either $X\ge 0$ a.e. or $X\le 0$ a.e.  Applying this to the above display, it follows that for all $\theta\in\Theta$ and almost every $\xi$, either
$\tilde\theta(\xi,U)\le \theta$ a.e. $U$ or $\tilde\theta(\xi,U)\ge \theta$ a.e. $U$.  In particular, whenever $\hat\theta(\xi)\in\Theta$, either $\tilde\theta(\xi,U)\le \hat\theta(\xi)$ a.e. $U$ or $\tilde\theta(\xi,U)\ge \hat\theta(\xi)$ a.e. $U$.  In either case, the condition $\int \tilde\theta(\xi,U)\, dQ(U)=\hat\theta(\xi)$ implies that $\tilde\theta(\xi,U)= \hat\theta(\xi)$ a.e. $U$.
It follows that, for almost every $\xi$ such that $\hat\theta(\xi)\in\Theta$, we have $\tilde\theta(\xi,U)= \theta(\xi)$ a.e. $U$, as claimed.
\end{proof}

Thus, if $\hat\theta(\xi)\in\Theta$ with probability one, we will have $\tilde\theta(\xi,U)=\hat\theta(\xi)$ a.e. $(\xi,U)$.  However, if $\hat\theta(\xi)$ can take values outside $\Theta$ this will not necessarily be the case.  For example, in the single instrument case of our setup, if we restrict our parameter space to $(\pi,\beta)\in (0,\infty)\times [c,\infty)$ for some constant $c$, then forming a new estimator by adding or subtracting $1$ from $\hat\beta_U$ with equal probability independently of $\xi$ whenever $\hat\beta_U\le c-1$ gives an unbiased estimator with identical absolute value risk.

In our case, letting $\xi(\pi^*)$ be as Theorem \ref{risk_bound_thm}, the support of $\xi(\pi^*)$ is the same under $\pi^*t$, $\beta$ for any $t\in(0,\infty)$ and $\beta\in\mathbb{R}$.  If $\tilde\beta(\xi(\pi^*),U)$ is unbiased in this restricted parameter space, we must have, letting $\hat\beta_U(\xi^*(\pi),\Sigma^*(\pi))$ be the unbiased nonrandomized estimator in the submodel, $E[\tilde\beta(\xi(\pi^*),U)|\xi(\pi^*)]=\hat\beta_U(\xi(\pi^*),\Sigma^*(\pi))$ by completeness for any random variable $U$ with a distribution that does not depend on $(t,\beta)$.  Since $\hat\beta_U(\xi(\pi^*),\Sigma^*(\pi))\in\mathbb{R}$ with probability one, it follows that if $\tilde\beta(\xi(\pi^*),U)$ has the same risk as $\hat\beta_U(\xi(\pi^*),\Sigma^*(\pi))$ then $\tilde\beta(\xi(\pi^*),U)=\hat\beta_U(\xi(\pi^*),\Sigma^*(\pi))$ with probability one, so long as we impose that $\tilde\beta(\xi(\pi^*),U)$ is unbiased for all $t\in(1-\varepsilon,1+\varepsilon)$ and $\beta\in\mathbb{R}$.

\section{Reduction of the Parameter Space by Equivariance}\label{equivariance_append}
In the appendix, we discuss how we can reduce the dimension of the parameter space using an equivariance argument.  We first consider the just-identified
case and then note how such arguments may be extended to the over-identified case under the additional assumption of homoskedasticity.

\subsection{Just-Identified Model}
For comparisons between $\left(\hat{\beta}_{U},\hat{\beta}_{2SLS},\hat{\beta}_{FULL}\right)$
in the just-identified case, it suffices to consider a two-dimensional
parameter space. To see that this is the case let $\theta=\left(\beta,\pi,\sigma_{1}^{2},\sigma_{12},\sigma_{2}^{2}\right)$
be the vector of model parameters and let $g_{A}$, for $A=\left[\begin{array}{cc}
a_{1} & a_{2}\\
0 & a_{3}
\end{array}\right],$ $a_{1}\neq0,$ $a_{3}>0$, be the transformation 
\[
g_{A}\xi=\tilde{\xi}=A\left(\begin{array}{c}
\xi_{1}\\
\xi_{2}
\end{array}\right)=\left(\begin{array}{c}
a_{1}\xi_{1}+a_{2}\xi_{2}\\
a_{3}\xi_{2}
\end{array}\right)
\]
which leads to $\tilde\xi$ being distributed according to the parameters
\[
\tilde{\theta}=\left(\tilde{\beta},\tilde{\pi},\tilde{\sigma}_{1}^{2},\tilde{\sigma}_{12},\tilde{\sigma}_{2}^{2}\right)
\]
where
\[
\tilde{\beta}=\frac{\left(a_{1}\beta+a_{2}\right)}{a_{3}}
\]
\[
\tilde{\pi}=a_{3}\pi
\]
\[
\tilde{\sigma}_{1}^{2}=a_{1}^{2}\sigma_{1}^{2}+a_{1}a_{2}\sigma_{12}+a_{2}^{2}\sigma_{2}^{2}
\]
\[
\tilde{\sigma}_{12}=a_{1}a_{3}\sigma_{12}+a_{2}a_{3}\sigma_{2}^{2}
\]
and 
\[
\tilde{\sigma}_{2}^{2}=a_{3}^{2}\sigma_{2}^{2}.
\]

Define $\mathcal{G}$ as the set of all transformations $g_{A}$ of
the form above. Note that the sign restriction on $\pi$ is preserved
under $g_{A}\in\mathcal{G}$, and that for each $g_{A}$, there exists
another transformation $g_{A}^{-1}\in\mathcal{G}$ such that $g_{A}g_{A}^{-1}$
is the identity transformation. We can see that the model (\ref{eq: Sufficient Statistics Def}) is invariant under the transformation $g_{A}$.
Note further that the estimators $\hat{\beta}_{U},$ $\hat{\beta}_{2SLS},$
and $\hat{\beta}_{FULL}$ are all equivariant under $g_{A}$, in the
sense that 
\[
\hat{\beta}\left(g_{A}\xi\right)=\frac{a_{1}\hat{\beta}\left(\xi\right)+a_{2}}{a_{3}}.
\]
Thus, for any properties of these estimators (e.g. relative mean and
median bias, relative dispersion) which are preserved under the transformations
$g_{A}$, it suffices to study these properties on the reduced parameter
space obtained by equivariance. By choosing $A$ appropriately, we
can always obtain 
\[
\left(\begin{array}{c}
\tilde{\xi}_{1}\\
\tilde{\xi}_{2}
\end{array}\right)\sim N\left(\left(\begin{array}{c}
0\\
\tilde{\pi}
\end{array}\right),\left(\begin{array}{cc}
1 & \tilde{\sigma}_{12}\\
\sigma_{12} & 1
\end{array}\right)\right)
\]
for $\tilde{\pi}>0$, $\sigma_{12}\ge0$ and thus reduce to a two-dimensional
parameter $\left(\pi,\sigma_{12}\right)$ with $\sigma_{12}\in[0,1)$,
$\pi>0$. 

\subsection{Over-Identified Model under Homoskedasticity}
As noted in Appendix \ref{risk_lower_bound_sec}, under the assumption of iid homoskedastic errors $\Sigma$ is of the form $\Sigma=Q_{UV}\otimes Q_{Z}$ for matrix $Q_{UV}=Var((U_1,V_1)')$ and  $Q_Z=(Z'Z)^{-1}$.  If we let $\sigma_U^2=Var(U_1)$, $\sigma_V^2=Var(V_1)$, and $\sigma_{UV}=Cov(U_1,V_1)$, then using an equivariance argument as above we can eliminate the parameters $\sigma_U^2$, $\sigma_V^2$, and $\beta$ for the purposes of comparing $\hat\beta_{2SLS},$ $\hat\beta_{FULL},$ and the unbiased estimators.  In particular, define $\theta=\left(\beta,\pi,\sigma_{U}^{2},\sigma_{UV},\sigma_{V}^{2},Q_Z\right)$
and again let $A=\left[\begin{array}{cc}
a_{1} & a_{2}\\
0 & a_{3}
\end{array}\right],$ $a_{1}\neq0,$ $a_{3}>0$ and consider the transformation
\[
g_{A}\xi=\tilde{\xi}=\left(A\otimes I_k\right)\left(\begin{array}{c}
\xi_{1}\\
\xi_{2}
\end{array}\right)=\left(\begin{array}{c}
a_{1}\xi_{1}+a_{2}\xi_{2}\\
a_{3}\xi_{2}
\end{array}\right)
\]
which leads to $\tilde\xi$ being distributed according to the parameters
\[
\tilde{\theta}=\left(\tilde{\beta},\tilde{\pi},\tilde{\sigma}_{U}^{2},\tilde{\sigma}_{UV},\tilde{\sigma}_{V}^{2},\tilde Q_Z\right)
\]
where
\[
\tilde{\beta}=\frac{\left(a_{1}\beta+a_{2}\right)}{a_{3}}
\]
\[
\tilde{\pi}=a_{3}\pi
\]
\[
\tilde{\sigma}_{U}^{2}=a_{1}^{2}\sigma_{U}^{2}+a_{1}a_{2}\sigma_{UV}+a_{2}^{2}\sigma_{V}^{2}
\]
\[
\tilde{\sigma}_{UV}=a_{1}a_{3}\sigma_{UV}+a_{2}a_{3}\sigma_{V}^{2}
\]
\[
\tilde{\sigma}_{V}^{2}=a_{3}^{2}\sigma_{V}^{2}
\]
and
\[
\tilde{Q_Z}=Q_Z.
\]

Note that this transformation changes neither the direction of the fist stage, $\pi/\|\pi\|$, nor $Q_Z.$  If we again define $\mathcal{G}$ to be the class of such transformations, we again see that the model is invariant under transformations $g_A\in\mathcal{G},$ and that the estimators for $\beta$ we consider are equivariant under these transformations.  Thus, since relative bias and MAD across estimators are preserved under these transformations, we can again study these properties on the reduced parameter space obtained by equivariance.  In particular, by choosing $A$ appropriately we can set $\tilde\sigma_U^2=\tilde\sigma_V^2=1$ and $\tilde\beta=0$, so the remaining free parameters are $\tilde\pi$, $\tilde\sigma_{UV},$ and $\tilde Q_Z$.

\section{Additional Simulation Results in Just-Identified Case}\label{just-id sim append}
This appendix gives further results for our simulations in the 
just-identified case.  We first report median bias comparisons for the estimators $\hat\beta_U$, $\hat\beta_{2SLS}$, and $\hat\beta_{FULL}$, and then report further dispersion and absolute deviation simulation results to complement
those in Section \ref{sec: absolute deviation sims} of the paper.

\subsection{Median Bias}\label{median bias append}
Figure \ref{fig Single Medan Instrument Bias} plots the median bias of the single-instrument IV estimators against the mean of the first stage F statistic.  In all calibrations considered the unbiased estimator has a smaller median bias than 2SLS when the first stage is very small and a larger median bias for larger values of the first stage.  By contrast the median bias of Fuller is larger than that of both the unbiased and 2SLS estimators, though its median bias is quite close to that of the unbiased estimator once the mean of the first stage F statistic exceeds 10.

\begin{figure}
\includegraphics[scale=0.65]{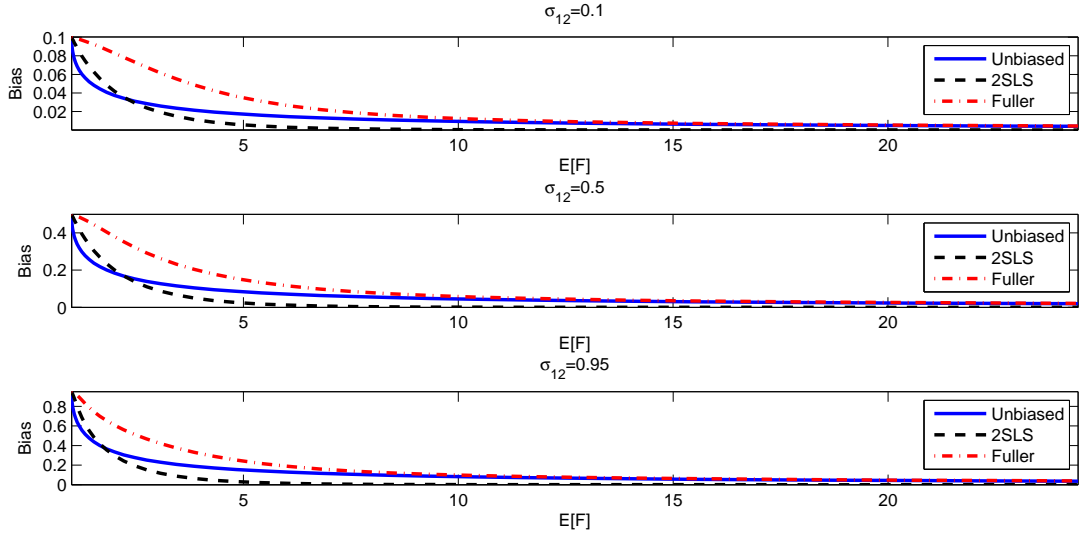}

\protect\caption{Median bias of single-instrument estimators, plotted against mean $E\left[F\right]$
of first-stage F-statistic, based on 10 million simulations.\label{fig Single Medan Instrument Bias}}
\end{figure}

\subsection{Dispersion Simulation Results}\label{dispersion_simulation_append}

The lack of moments for $\hat\beta_{2SLS}$ complicates comparisons of dispersion,
since we cannot consider mean squared error or mean absolute deviation, and also cannot
recenter $\hat\beta_{2SLS}$ around its mean.  As an alternative, we instead consider the full distribution of
the absolute deviation of each estimator from its median.
In particular, for the estimators $(\hat{\beta}_{U},\hat{\beta}_{2SLS},\hat{\beta}_{FULL})$
we calculate the zero-median residuals
\[
\left(\varepsilon_{U},\varepsilon_{2SLS},\varepsilon_{FULL}\right)=\left(\hat{\beta}_{U}-\mbox{med}\left(\hat{\beta}_{U}\right),\hat{\beta}_{2SLS}-\mbox{med}\left(\hat{\beta}_{2SLS}\right),\hat{\beta}_{FULL}-\mbox{med}\left(\hat{\beta}_{FULL}\right)\right).
\]

Our simulation results suggest a strong stochastic ordering between these residuals (in absolute value).
In particular we find that $|\varepsilon_{2SLS}|$ approximately dominates
$|\varepsilon_{U}|$, which in turn approximately dominates $|\varepsilon_{FULL}|,$
both in the sense of first order stochastic dominance. 
This numerical result is consistent with analytical results on the
tail behavior of the estimators. In particular, $\hat{\beta}_{2SLS}$
has no moments, reflecting thick tails in its sampling distribution,
while $\hat{\beta}_{FULL}$ has all moments, reflecting thin tails.
As noted in Section \ref{single_instrument_risk_sec}, the unbiased estimator $\hat{\beta}_{U}$
has a first moment but no more, and so falls between these two
extremes.

To check for stochastic dominance in the distribution of $\left(|\varepsilon_{U}|,|\varepsilon_{2SLS}|,|\varepsilon_{FULL}|\right)$,
we simulated $10^{6}$ draws 
of $\hat{\beta}_{U}$, $\hat{\beta}_{2SLS},$ and $\hat{\beta}_{FULL}$
on a grid formed by the Cartesian product of\\$\sigma_{12}\in\left\{ 0,\left(0.005\right)^{\frac{1}{2}},\left(0.01\right)^{\frac{1}{2}},...,\left(0.995\right)^{\frac{1}{2}}\right\} $
and $\pi\in\left\{ \left(0.01\right)^{2},\left(0.02\right)^{2},...,25\right\} .$
We use these grids for $\sigma_{12}$ and $\pi$, rather than a uniformly
spaced grid, because preliminary simulations suggested that the behavior
of the estimators was particularly sensitive to the parameters for
large values of $\sigma_{12}$ and small values of $\pi$. 

At each point in the grid we calculate $\left(\varepsilon_{U},\varepsilon_{2SLS},\varepsilon_{FULL}\right)$,
using independent draws to calculate $\varepsilon_{U}$ and the other
two estimators, and compute a one-sided Kolmogorov-Smirnov statistic
for the hypotheses that (i) $|\varepsilon_{IV}|\ge|\varepsilon_{U}|$
and (ii) $|\varepsilon_{U}|\ge|\varepsilon_{FULL}|$, where $A\ge B$
for random variables $A$ and $B$  denotes that $A$ is larger than
$B$ in the sense of first-order stochastic dominance. In both cases
the maximal value of the Kolmogorov-Smirnov statistic is less than
$2\times10^{-3}.$  Conventional Kolmogorov-Smirnov p-values
are not valid in the present context (since we use estimated medians
to construct $\varepsilon)$, but are never below 0.25.

\subsection{Containment of $\hat{\beta}_{U}$ in Anderson-Rubin Confidence Set}{\label{beta_U Containment}}

As noted in Section \ref{sec: tests and CS}, the Anderson-Rubin test is uniformly most powerful unbiased in the just identified model.  One can show, however, that the unbiased estimator $\hat{\beta}_{U}$ is not always contained in the Anderson-Rubin confidence set (that is, the confidence set formed by collecting the set of all parameter values not rejected by the Anderson-Rubin test).  Specifically, consider the case where $\xi_2$ is large and negative, $\xi_1$ is large and positive, and $\sigma_{12}$ is non-negative.  In this case, the Anderson-Rubin confidence set will consist solely of negative values, while $\hat{\beta}_{U}$ will be large and positive, and so will necessarily lie outside the Anderson-Rubin confidence set.

While this sort of scenario can easily arise if our sign constraint is violated, it occurs with only low probability when the sign constraint is satisfied.  In particular, as in Section \ref{dispersion_simulation_append} we consider a fine grid of values in the parameter space and simulate the frequency with which the unbiased estimator is contained in the Anderson-Rubin confidence set at each point (based on 100,000 simulations).  We find that the probability that the 95\% Anderson-Rubin confidence set contains the unbiased estimator $\hat{\beta}_{U}$ is always at least 97\%, and exceeds 99.8\% when the mean of the first stage F statistic is greater than two.  Likewise, the probability that the 90\% Anderson-Rubin confidence set contains $\hat{\beta}_{U}$ is always at least 94.5\%, and exceeds 99.3\% when the mean of the first stage F statistic is greater than two.

\section{Multi-Instrument Simulation Design}\label{multi-inst sim design}

This appendix gives further details for the multi-instrument simulation design used in Section \ref{sec: multi-instrument sims}.
We base our simulations on the \citet{staiger_instrumental_1997} specifications for the \citet{angrist_does_1991} data.
The instruments in all specifications are quarter of birth and quarter of birth interacted with other dummy variables, and in all cases the dummy for the fourth quarter (and the corresponding interactions) are excluded to avoid multicollinearity.  The rationale for the quarter of birth instrument in \citet{angrist_does_1991}  indicates that the first stage coefficients on the instruments should therefore be negative.

We first calculate the OLS estimates $\hat\pi$.  All estimated coefficients satisfy the sign restriction in specification I, but some of them violate it in specifications II, III, and IV.  To enforce the sign restriction, we calculate the posterior mean for $\pi$ conditional on the OLS estimates, assuming a flat prior on the negative orthant and an exact normal distribution for the OLS estimates with variance equal to the estimated variance.  This yields an estimate 
$$\tilde\pi_i=\hat\pi_i-\hat\sigma_i\phi\left(\frac{\hat\pi_i}{\hat\sigma_i}\right)/\left(1-\Phi\left(\frac{\hat\pi_i}{\hat\sigma_i}\right)\right)$$
for the first-stage coefficient on instrument $i$,  where $\hat\pi_i$ is the OLS estimate and $\hat\sigma_i$ is its standard error.  When $\hat{\pi_i}$ is highly negative relative to $\hat\sigma_i$, $\tilde\pi_i$ will be close to $\hat\pi_i$, but otherwise  $\tilde\pi_i$ ensures that our first stage estimates all obey the sign constraint.  We then conduct the simulations using $\tilde\pi^*=-\tilde\pi$ to  cast the sign constraint in the form considered in Section \ref{setting_sec}.

Our simulations fix $\tilde\pi^*/\|\tilde\pi^*\|$ at its estimated value and fix $Z'Z$ at its value in the data.   By the equivariance argument in Appendix \ref{equivariance_append} we can fix $\sigma_U^2=\sigma_V^2=1$ and $\beta=0$ in our simulations, so the only remaining free parameters are $\|\pi\|$ and $\sigma_{UV}.$  We consider $\sigma_{UV}\in\{0.1,0.5,0.95\}$ and consider a grid of nine values for $\|\pi\|$ such that the mean of the first stage F statistic varies between 2 and 11.2. For each pair of these parameters we set

$$\Sigma=\left[\begin{array}{cc}
1 & \sigma_{UV}\\
\sigma_{UV} & 1 \end{array}\right]\otimes (Z'Z)^{-1}$$
and  draw of $\xi$ as 

\begin{align*}
\xi\sim N\left(\begin{array}{c}
0  \\
\|\pi\|\cdot  \frac{\tilde\pi^*}{\|\tilde\pi^*\|}\\
\end{array},\Sigma \right).
\end{align*}
\end{document}